\newcommand{\R}{\mathbb R}
\newcommand{\Z}{\mathbb Z}
\newcommand{\aff}{\mathcal{A}}
\newcommand{\affb}{\mathcal{B}}
\newcommand{\affm}{\mathcal{M}}
\newcommand{\GL}{\rm{GL}}
\DeclareMathOperator{\dist}{dist}
\DeclareMathOperator*{\argmin}{arg\,min}
\begin{document}

\title{Study of a model for reference-free plasticity
}


\author{Stephan Luckhaus       \and
   Jens Wohlgemuth      
}


\institute{S. Luckhaus  \at
               Universit\"at Leipzig\\
              Tel: 0341 97321 08\\
              Fax: 0341 97321 95\\
              \email{Stephan.Luckhaus@math.uni-leipzig.de}           
           \and
           J. Wohlgemuth \at
           Max- Planck institute for mathematics in science\\
           Tel: 	0341 9959 969\\
           \email{jwohlgem@mis.mpg.de}
}

\date{Received: date / Accepted: date}

\maketitle

\begin{abstract}

 We investigate a Kac-type many particle model that allows a reference-free description of plastic deformation.
 In the framework of the model a solid body is described by a set of particle positions. A lattice is fitted to the particle configuration around each point on a  mesoscopic scale. The lattice parameters are used as an argument of a non-linear elasticity energy functional. 
There are two main results in this paper. First, we prove an estimate for the difference between the fitted lattice parameters of points of low energy density that are  sufficiently close to each other. Sequences of these points can be used for homotopy type arguments. In particular it is possible to identify dislocations as topological defects in this framework. 
Furthermore, we use the fitted lattice parameters as local Lagrangian coordinates and bound the energy from below with a functional of these coordinates.

\keywords{Many Body interactions \and Kac-type potentials for crystal plasticity \and Lattice free description of dislocations}
\end{abstract}

\section{Introduction:}
In this article we discuss a many particle Hamiltonian that allows a description of plastic deformations without using a reference configuration.
The model is closely related to the one presented by L. Mugnai and S. Luckhaus in \cite{Lucapaper}.
In the classical theory of elasticity the deformation of a solid body is described with the help of a reference configuration, that is assumed be stress-free.
The actual configuration of the described body is given as the image of this reference configuration by a differentiable map $\phi$.
The energy of a configuration is then given by 
\begin{equation}
H=\int_\Omega \tilde{F}\left(\nabla\phi(z)\right)dz
\end{equation}  
In this setting the deformed configuration is the minimizer of this energy functional under certain boundary conditions.
However, the local order is fixed by the reference configuration. Since plastic deformations are changing the local order, they can not be described in this framework.
Therefore we are aiming to substitute the reference configuration in the framework with a quantity that allows a change of the local order. If we imagine the reference configuration filled with the lattice $\Z^d$.
These position are mapped on $\phi(\Z^d)$. In the neighborhood of a points $z$ it holds 
\begin{equation}
\phi(z_i)\approx \phi(z)+\nabla \phi(z) (z_i-z)
\end{equation}
Hence, the configuration is approximately a Bravais lattice in the neighborhood of each point.
The main idea of our model is to fit a Bravais lattice locally to a set of atom positions and use the matrix that spans the Bravais lattice as an argument for an elastic energy functional.
In this paper will demonstrate that chains of theses fitted lattices can be used to define a generalized Burgers vector that characterizes the topological defects of a crystal. 
Furthermore, we prove that the fitted lattice parameters can be used as Lagrangian coordinates. And that we can bound the energy density from below with a functional of this coordinates. In the form  $h_\lambda \ge \tilde{F}(\nabla \tau)+ C \|\nabla^2\tau\|^2$.

\begin{figure}[ht]
\center
\includegraphics[scale=0.5]{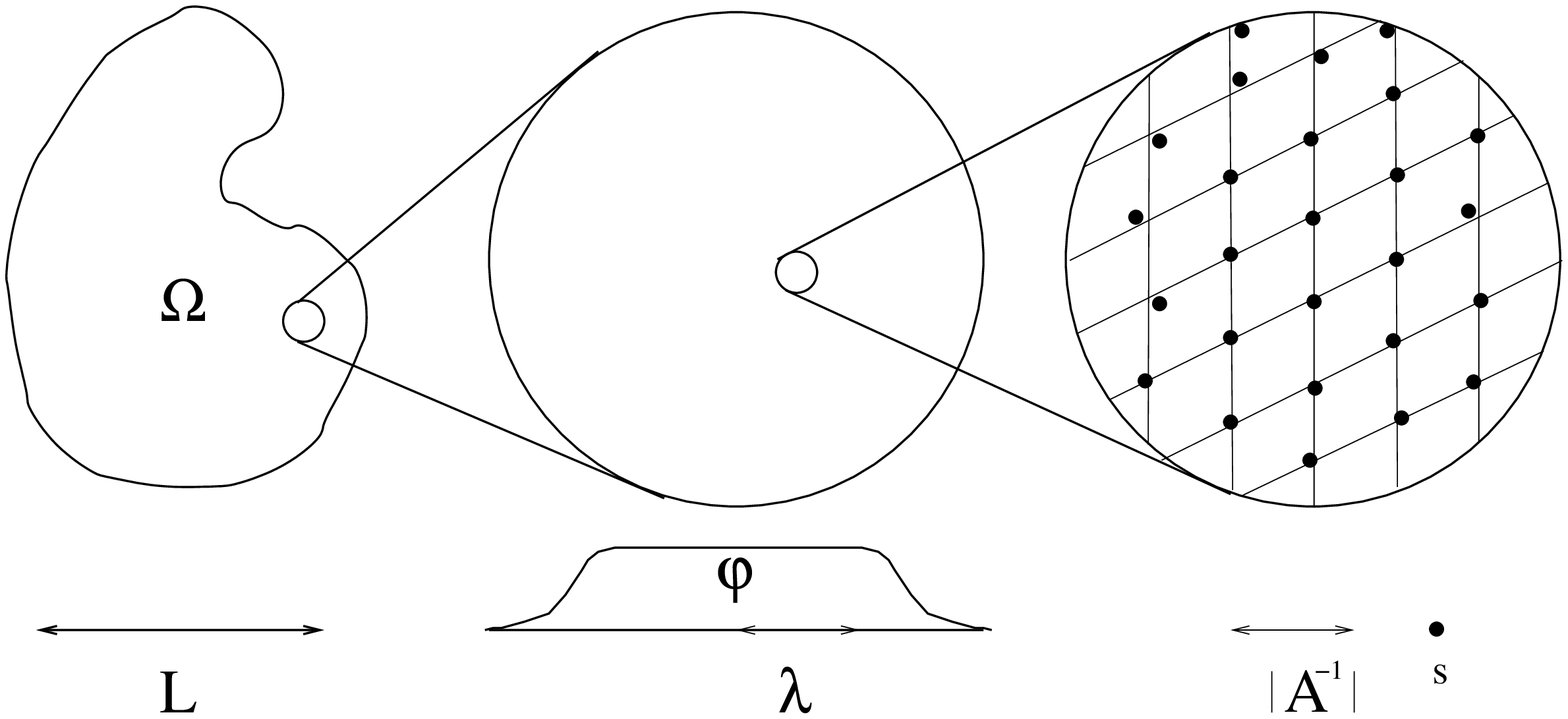}
\caption{Multi-scale model with three different scales: Microscopic scale: $|A^{-1}|$ distance between atoms, macroscopic scale $L$ size of the body , mesoscopic scale: $\lambda$ the configuration looks like a lattice}
\end{figure}

Eventually we hope that a connection to non-equilibrium statistical mechanics can be established. If at low temperatures a strong enough bound for our Hamiltonian  can be derived for a statistical mechanic particle system, then there is the hope to use the local fields described in this paper as the thermodynamic quantities of the system.

\section{Definition of the model:}

In our model the actual state of the described body is given by a domain $\Omega\subset \R^d$ and a set of atom positions $\chi=\{x_i\in B_{4\lambda}(\Omega)| i=1...N\}$ , where $\lambda$ is the mesoscopic scale $\lambda<<L$. Here $d$ denotes the dimension. 
The set of atoms $\chi$ consists of two subsets $\chi=\chi_I\cup\chi_S$.
The internal atoms $\chi_I\subset \Omega$ can move freely inside $\Omega$, but are not allowed to leave it.
The boundary atoms $\chi_S\subset B_{4\lambda}\left(\Omega\right)/\Omega$ are fixed and serve as our boundary condition.
We call the number of internal atoms $N_I=\sharp \chi_I$ and the number of boundary atoms $N_S=\sharp \chi_S$.
The energy in our model is given by an integral over an energy density and an hardcore particle interaction $V$ with radius $s_0$.
\begin{equation}\label{ModelGleichung1}
H_{\lambda}\left( \chi \right):=\int_{B_{2\lambda}(\Omega)}\hat{h}_{\lambda}\left( \chi ,x\right )+\sum_{i,j} V\left(|x_i-x_j|\right).
\end{equation}
The main part of the model is the energy density $\hat{h}_{\lambda}\left( \chi,x \right )$ in Eulerian coordinates $x$.
This density is determined by fitting a Bravais lattice. 
$\chi_{\left(A,\tau \right)}+x=A^{-1}(\Z^d-\tau)+x$ locally to the atom positions $\chi$, where $A\in Gl_d(\R)$ and $\tau \in \R^d$. We denote: $\aff=(A, \tau)$. For every $\aff$ one can calculate a pre-energy density $h_{\lambda}\left(\aff, \chi, x\right )$ at a given point. 
The energy density $\hat{h}_{\lambda}\left( \chi,x \right )$ is then given by
\begin{equation}\label{Defdichte}
\hat{h}_{\lambda}\left( \chi, x \right ):=\inf_{\aff}\left\{h_{\lambda}\left(\aff, \chi, x\right )  \right\}\quad .
\end{equation}
The pre-energy density $h_{\lambda}\left(\aff, \chi. x\right )$ consists of three parts.
\begin{equation}
h_{\lambda}\left(\aff, \chi, x\right ):=F\left(A\right)+J_{\lambda}\left(\aff, \chi, x\right) +\nu_{\lambda}\left(A, \chi,x\right)
\end{equation}
The first term $F$ measures the elastic contribution to the energy and corresponds to the energy density in the classical theory.
The second part $J_{\lambda}$ measures energy cost of deviations of the configuration $\chi$ from the fitted lattice .
The last part $\nu_{\lambda}$ assigns a cost to the vacancies.
In the following we will explain the properties of the different parts of the energy density in more detail.

$F\left(A\right)$ is related to $\tilde{F}$ of the classical theory with the formula $F(G)=\tilde F(G^{-1})\det(G^{-1})$ for the transformation between Eulerian and Lagrangian coordinates.
We want to consider $F\in C_2\left(Gl_d(\R)\right)$ with the following properties 
\begin{enumerate}[1)]
\item $F(A)=F(A R)$,  $\forall A\in Gl_d(\R)$, $\forall R\in SO_d$ \hfill  (Frame indifference)
\item $\exists E\in Gl_d(\R)$ with $F(E)=0$ \hfill (Existence of minimizer)
\item $F(A)\ge C_1^{El}\left(\det(E)-\det(A))\right)^2+C_2^{El}\text{dist}^2\left(A, E\; SO_d \right)$ \hfill(Coercivity)
\end{enumerate} 
for some $C_1^{El},C_2^{El}>0$. We use the Euclidean norm to define the distance for two matrices $\text{dist}(A,E)=|A-E|$. 
$J_{\lambda}\left(\aff, \chi,x \right)$ uses the affine transformation $\aff(x)=Ax+\tau$ to map the atom positions in the $\lambda$-neighborhood of the position $x$ into a periodic potential $W$ with minima in $\Z^d$. $W$ is assumed to be locally convex around the minima.
In this way $J_\lambda$ is approximately the standard deviation of the configuration $\chi$ from the fitted lattice $\chi_\aff+x$.
\begin{equation}
J_{\lambda}\left(\aff, \chi,x\right):=\frac{\left\|A^{-1}\right\|^2}{C_{\varphi}\lambda^d}\sum_{i} W(A\left(x_i-x\right)+\tau)\varphi\left(\lambda^{-1}\left|x_i-x\right|\right)
\end{equation}
where $\varphi\in C^\infty\left(\R^+\right)$ is a smooth and monotone decreasing cut-off function and has the following properties
\begin{enumerate}[1)]
\item $\varphi(x)=1$ for $x\le1$
\item $\varphi(x)=0$ for $x\ge2$
\item $\partial_x \varphi \leq 0$ 
\end{enumerate}
We use $C_{\varphi}:=\int_{\R^d} \varphi(\left|x\right|) dx$ as a normalization constant.
We also use the notation $\tilde{\varphi}(x):=\varphi(|x|)$.
We assume that the periodic potential $W\in C^\infty\left(\R^d\right)$ fulfills 
\begin{enumerate}[1)]
\item $W(z)=W(z+z_n) $    $\forall z_n \in \Z^d, \forall z \in \R^d $\, \hfill  (Periodicity)
\item $ c_\Theta^0 y^2 \leq y\nabla^2W(x)y\leq  c_\Theta^1 y^2 $    $\forall y \in \R^d, x\in B_{\Theta_W}(\Z^d)$\hfill (Local convexity)
\item $C^W_0 \mathrm{dist}^2(z,\mathbb{Z}^d)\le W(z)\le C^W_1 \mathrm{dist}^2(z,\mathbb{Z}^d)$ \hfill(Coercivity)
\item $W(z)=W(-z)$    $\forall z \in \R^d $ \quad\quad\quad\quad(Symmetry)
\end{enumerate}
where  $\Theta_W, c_\Theta^0, c_\Theta^1, C^W_0, C^W_1>0$ are constants. We define the local density of a configuration $\chi$ by
\begin{equation}
\rho_\lambda(\chi,x):=\frac{1}{C_{\varphi}\lambda^d}\sum_i \varphi\left(\lambda^{-1}\left|x_i-x\right|\right)
\end{equation}
Moreover, we define :
\begin{equation}
\nu_{\lambda}\left(A,\chi,x\right):=\vartheta \left|\det A-\rho_\lambda(\chi,x)\right|
\end{equation}
Therefore, the energy per vacancies is $\vartheta$. This part also ensures that a lattice that is finer than necessary will not be fitted to the configuration because it would contain a big number of vacancies.
$V: \R^+\rightarrow \left\{0, \infty\right\}$ is an hard core repulsion. It has the technical purpose, to prevent several atoms from sitting at the same lattice side.
\begin{equation}
 V\left(x\right):=
 \begin{cases}
 0& \text{for}\qquad x\ge s_0\\
 \infty & \text{for}\qquad x< s_0 .
 \end{cases}
\end{equation}
The hard-core potential implies, that any configuration with finite energy has a particle density smaller than $\rho^{max}_d+O(\lambda^{-1})$.
\begin{equation}
\rho^{max}_d=\frac{2^d}{w_ds_0^d}+O(s_o\lambda^{-1})\quad ,
\end{equation}
where $w_d$ is the volume of the $d$-dimensional unit sphere.

\section{Notations and important definitions}
We introduce the following sets:
\begin{align}
Gl_d(\R):=&\left\{A\in \R^{d\times d} | \det A>0     \right\} \quad,\quad
Gl_d(\Z):=\left\{A\in \Z^{d\times d} | \det A=1     \right\},  \nonumber\\
B_{r}(U):=&\left\{x\in \R^{d} | \exists y\in u \; \text{with} |y-x|<r    \right\} . 
\end{align}
For $\aff=(A,\tau)\in \R^{d\times d}\times \R^d$ we use the following norms:  
\begin{align}
\|A\|^2:=&\sum_{i,j=1}^d A_{i,j}^2\quad, \quad 
|A|:=\sup \left\{|A e| | e\in R^d ,|e|=1    \right\}\quad, \nonumber\\
\|\aff\|_\lambda^2:=&\|A\|^2+|\mu|^2\quad .
\end{align}

\begin{definition}\label{Reparametrisation} 
We call a pair $\affb=(B,z)\in Gl_d(\Z)\times  \Z^d$ a reparametrisation.
For $\aff=(A,\tau)\in Gl_d(\Z)\times  \Z^d$ we define the reparametrisation of $\aff$ as
\begin{equation}
\affb \aff= (B A,B\tau+t)
\end{equation}
\end{definition}
We note that $\chi_\aff=\chi_{\affb \aff}$. Hence, Bravais-lattices are invariant under reparametrisations.
Since, we fit Bravais lattices to the atom configuration, the minimizing $\aff$ may jump parametrisations of the same lattice.
These different parametrisations are connected by reparametrisations.

\begin{definition}\label{definitiontopologicalchain}
For a sequence of reparametrisations  $\affb_{j,j+1}=(B_{j,j+1},t_{j,j+1})\in(Gl_d(\Z),\Z^d)$,
we define the product reparametrisation  $\affb_{0,1}=(\textbf{B},\textbf{t})\in Gl_d(\Z)\times \Z^d$ as composition of the affine maps given by the reparametrisations
\begin{align}
\textbf{B}=&B_{0,1}....B_{K-1,K}= \prod_{j=1}^KB_{j-1,j} \quad ,\quad 
\textbf{t}=\sum_{k=1}^K \left(\prod_{j=1}^{k-1} B_{j-1,j}\right)t_{k-1,k} \quad.
\end{align}
\end{definition}

\begin{definition}\label{definitionregularatoms}
For an atom configuration $\chi$ and lattice parameters $\aff=(A,\tau)\in Gl_d(\R)\times\R^d$, and a position $x$ and a distance $\beta>0$, we define
the $(\aff,\beta, x)$-regular atoms and irregular atoms
\begin{align}
\chi_{\aff,\beta, x}^{reg}:=\{x_i\in \chi | dist(x_i, \chi_{\aff}+x) \le \beta  \}\quad,\quad
\chi_{\aff,\beta, x}^{irr}:=\{x_i\in \chi | dist(x_i, \chi_{\aff}+x) > \beta  \},
\end{align}
and the densities of regular atoms and irregular atoms 
\begin{align}
\rho_{\aff,\beta}^{reg}(x) := & \rho_\lambda (\chi_{\aff,\beta ,x}^{reg},x)=\frac{1}{C_{\varphi}\lambda^d} \sum_{x_i\in \chi_{\aff,\beta, x}^{reg}} \varphi\left(\lambda^{-1}\left|x_i-x\right|\right) \quad ,\nonumber\\
\rho_{\aff,\beta}^{irr}(x):= & \rho_\lambda (\chi_{\aff,\beta, x}^{irr},x)=\frac{1}{C_{\varphi}\lambda^d} \sum_{x_i\in \chi_{\aff,\beta, x}^{irr}} \varphi\left(\lambda^{-1}\left|x_i-x\right|\right) \quad .
\end{align}
\end{definition}

Next, we introduce the notion of regular pairs.
\begin{definition}\label{definitionregularpoints}
Let $\aff=(A,\tau)\in Gl_d(\R)\times \R^d$  and $\epsilon_\rho, \epsilon_J,  C_A \in \R$
and let $\chi$ be the configuration, then we say that the pair $(x, \aff)$ is  $(\epsilon_\rho, \epsilon_J, C_A)$-regular, if the following conditions are fulfilled
\begin{enumerate}
\item $\|A^{-1}\|<C_A$\quad ,
\item $\left|\rho_\lambda(\chi,x)-\det A\right|<\epsilon_{\rho} \det A$\quad ,
\item $J_\lambda(\aff,\chi,x)<\epsilon_J \rho_\lambda(\chi,x)$\quad , 
\item $|x_i-x_j|>s_o$ for all $i,j$ \quad .
\end{enumerate}
\end{definition}
If the pair $(x,\aff)$ is regular this means that the configuration looks like the lattice $\chi_\aff+x$ in the $B_{2\lambda}(x)$.
We say a point $x$ is regular, if there exits an $\aff\in \in Gl_d(\R)\times \R^d$ such that the pair $(x,\aff)$ is regular.
Theorem \ref{addingatoms} explains the connection between regular pairs, reparametrisations and the product reparametrisation. 
For regular pairs with $\epsilon_\rho=1/8$ we get 
\begin{equation}
|A|\le |A^{-1}|^{d-1}\det A\le |A^{-1}|^{d-1} \frac{1}{1+\epsilon_{\rho}}\rho \le \frac{C_A^{d-1}}{1-\epsilon_{\rho}}\rho^{max}_d
\le C_{|A|}:= \frac{8C_A^{d}}{7}\rho^{max}_d\quad.
\end{equation}

\section{Main theorems}
 Theorem \ref{addingatoms} says, that in case of 
a sequence of regular pairs $(y_{j}, \aff_j)$ fulfilling $|y_{j-1}-y_{j}|\le \frac{3}{2}\lambda$  the 
affine maps are connected by reparametrisations.																																																									
 The product of these reparametrisations  does not change, if one adds or leaves out  points in the middle of a chain chain. Hence, the reparametrisation product is a topological invariant, determined only by the homotopy class of the chain.
\begin{theorem}\label{addingatoms}
For all $C_A>s_o$ there exists
$ \hat{\lambda}\in \R$ and $\epsilon_J>0$ such that for all $\lambda>\hat{\lambda}$,
 $ \aff_j=(A_j,\tau_j)\in Gl_d(\R)\times \R^d$ and $y_j \in B_{2\lambda}(\Omega)$ with $j=1,...,N$ 
 the following holds:
 \begin{enumerate}
 \item If $(y_j,\aff_j)$ is $(\epsilon_\rho, \epsilon_J, C_A)$-regular for $j=0,...,N$ and
$|y_{j-1}-y_{j}|\le \frac{3}{2}\lambda$ for $j=1,...,N$,
then there exists uniquely defined reparametrisations $\affb_{j-1,j}=\left( B_{j-1,j},t_{j-1,j} \right)\in Gl_d(\Z)\times  \Z^d$ such that 
\begin{align}\label{equationjumping}
\|id-A_{j-1}^{-1}B_{j-1,j}A_{j}\|<& \frac{c^{A}_J}{\sqrt{\det A_{j}}}  \left(\frac{2\lambda}{2\lambda-|y_{j-1}-y_{j}|}\right)^{d/2}\frac{\sqrt{J_{j-1,j}}}{\lambda},  \nonumber\\
\left|\delta \tau_{j-1,j}\right|<&   \frac{c^{\tau}_J|A_{j-1}|}{\sqrt{\det A_{j}}} \left(\frac{2\lambda}{2\lambda-\left|y_{j-1}-y_{j}\right|}\right)^{d/2} \sqrt{J_{j-1,j}},
\end{align}
where
\begin{align}
\delta \tau_{j-1,j}=&B_{j-1,j}\tau_{j}+t_{j-1,j} -\tau_{j-1}-\frac{B_{j-1,j}A_{j}+A_{j-1}}{2}\left(y_{j}-y_{j-1}\right)\nonumber\\
J_{j-1,j}:=&\max\left\{ J_\lambda(\aff_{j-1},\chi,y_{j-1}), J_\lambda(\aff_{j},\chi,y_{j})\right\}\quad ,\nonumber\\
c^{A}_J:=& \frac{3}{2} \left( \frac{8 d C_\varphi} {C_{\varphi 2}C_0^W}\right)^{\frac{1}{2}} \quad ,\quad \quad c^{\tau}_J:=\left(\frac{10}{C_0^W}\right)^{\frac{1}{2}} \quad .
\end{align}
\item If additionally it holds $|y_{k+1}-y_{k-1}|\le \frac{3}{2}\lambda$ for some $k$ then there exists a $\tilde{\affb}_{k-1,k+1}$ fulfilling the estimates \eqref{equationjumping} for the point $j+1$ instead of the point $j$ and we have
\begin{equation}\label{eqtopologicalchain}
\prod_{j=1}^N\affb_{j-1,j}=\left(\prod_{j=1}^{k-1} \affb_{j-1,j} \right) \tilde{\affb} \left(\prod_{j=k+2}^{N} \affb_{j-1,j} \right)
\end{equation}
\end{enumerate}
\end{theorem}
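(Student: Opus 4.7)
The plan for \textbf{Part 1} is to use the large overlap between $B_{2\lambda}(y_{j-1})$ and $B_{2\lambda}(y_j)$ to produce a canonical integer affine identification of the two lattice parametrisations. First I would combine the regularity bounds $J_\lambda(\aff_i,\chi,y_i)<\epsilon_J\rho_\lambda$ with the coercivity of $W$ and a Chebyshev-type argument to show that a large $\varphi$-weighted fraction of atoms in the overlap lies simultaneously within distance $\Theta_W$ of both lattices $\chi_{\aff_{j-1}}+y_{j-1}$ and $\chi_{\aff_j}+y_j$. For such ``good'' atoms $x_i$, the local convexity of $W$ around $\Z^d$ singles out unique integer vectors $n_i^{(j-1)},n_i^{(j)}\in\Z^d$ closest to $A_{j-1}(x_i-y_{j-1})+\tau_{j-1}$ and $A_j(x_i-y_j)+\tau_j$, and the hard-core repulsion $V$ guarantees that the assignments $x_i\mapsto n_i^{(\cdot)}$ are injective.

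Next I would argue that the map $n_i^{(j)}\mapsto n_i^{(j-1)}$ is, up to a residual smaller than $1$ once $\lambda>\hat\lambda$, affine over $\Z^d$ with linear part close to $A_{j-1}A_j^{-1}$; hence it extends uniquely to a single pair $(B_{j-1,j},t_{j-1,j})\in Gl_d(\Z)\times\Z^d$, which defines $\affb_{j-1,j}$. Uniqueness then follows because any two admissible reparametrisations would differ by an element of $Gl_d(\Z)\times\Z^d$ quantitatively close to the identity, forcing it to be the identity.

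For the quantitative estimates \eqref{equationjumping}, I would compare the two affine maps $x\mapsto A_{j-1}(x-y_{j-1})+\tau_{j-1}$ and $x\mapsto B_{j-1,j}A_j(x-y_j)+B_{j-1,j}\tau_j+t_{j-1,j}$ in the weighted $L^2$ norm with weight $\varphi(\lambda^{-1}|x-m|)$ about the midpoint $m=(y_{j-1}+y_j)/2$. Their difference is affine, so an $L^2$ bound over the effective overlap ball gives pointwise control on both its linear part (with a $1/\lambda$ factor, since that part varies at scale $\lambda$) and its constant part. Converting the discrete $J$-sum on either side into an $L^2$ integral via the density bound $\rho\sim\det A$ produces the prefactor $1/\sqrt{\det A_j}$, while the shrinking of the effective overlap volume accounts for the $(2\lambda/(2\lambda-|y_{j-1}-y_j|))^{d/2}$ loss. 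Evaluating the cancellation of the linear parts at the midpoint $m$ yields precisely the expression for $\delta\tau_{j-1,j}$ in \eqref{equationjumping}.

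For \textbf{Part 2}, I would apply Part 1 to the pair $(y_{k-1},\aff_{k-1}),(y_{k+1},\aff_{k+1})$, whose distance is $\le \tfrac{3}{2}\lambda$, to obtain the unique $\tilde{\affb}_{k-1,k+1}$ satisfying \eqref{equationjumping} for that step. The composition $\affb_{k-1,k}\affb_{k,k+1}$ is also an element of $Gl_d(\Z)\times\Z^d$ relating $\aff_{k+1}$ and $\aff_{k-1}$, and chaining the Part 1 estimates for the two sub-steps shows that its identity distance $\|id-A_{k-1}^{-1}B_{k-1,k}B_{k,k+1}A_{k+1}\|$ and the analogous translation distance are small enough to fall inside the uniqueness window of Part 1 applied to the direct step $k-1\to k+1$. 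Hence $\tilde{\affb}_{k-1,k+1}=\affb_{k-1,k}\affb_{k,k+1}$, and substitution into the telescoping product yields \eqref{eqtopologicalchain}. The most delicate step will be the first half of Part 1: balancing $\epsilon_J$, $s_0$ and $\lambda$ so that the good atoms are simultaneously numerous and well spread enough to uniquely determine an integer affine map, and so that the weighted $L^2$ control on the affine difference translates into pointwise bounds on $B_{j-1,j}$ with the precise constants $c^A_J, c^\tau_J$ appearing in \eqref{equationjumping}.
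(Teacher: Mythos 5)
Your proposal is correct and follows essentially the same route as the paper: the paper also reduces the two-point comparison to the midpoint $(y_{j-1}+y_j)/2$ (Lemma \ref{Sectpunktbewegenlambdaaendern}, which is the source of the $(2\lambda/(2\lambda-|y_{j-1}-y_j|))^{d/2}$ factor), identifies the common regular atoms to extract an integer affine map (Lemma \ref{Lemmatwodifferentfittedlattices}, outsourcing existence of $(B,t)$ to Theorem 5.12 of \cite{Lucapaper}), sharpens the constants by the same weighted $L^2$ comparison of the two affine maps, and proves Part 2 by showing $\affb_{k-1,k}\affb_{k,k+1}$ and $\tilde{\affb}_{k-1,k+1}$ lie within distance $<1$ in the discrete group $Gl_d(\Z)\times\Z^d$ and hence coincide. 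The only cosmetic difference is that you fold the point-moving and same-point steps into a single midpoint-centered argument rather than stating them as two separate lemmas.
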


We call regular $(x,\aff_1)$ and $(x,\aff_2)$ equivalent when the $\affb \in Gl(\Z)\times \Z^d$ connecting them as described by theorem \ref{addingatoms} is
$\affb=(Id, 0)$. Hence, we get for every regular $x$ an set off $P^\aff_x$  equivalence classes $[\aff ]$
The group $G=Gl_d(\Z)\times \Z^d$ acts on this set of equivalence classes by the action
$G\times P\aff_x\rightarrow P\aff_x: \affb[\aff ]=[\affb\aff]$.
Furthermore, we know by theorem \ref{addingatoms} that adding or omitting a point in a sequence of regular pairs.
does not change the reparametrisation product. We call two chains equivalent, if they can be deformed into each other by this process.
We denote with $Hom_x$ the set of equivalence classes of this chains with starting point and endpoint $x$. We use these like homotopy classes.
Each $S\in Hom_x$ induces an one to one map $\hat{B}_S: P^\aff_x\leftarrow P^\aff_x$  $B([\tilde{\affb} \aff_0])=[\tilde{\affb} \affb_0 \aff_0  ]$.
where $\aff_0$  is an arbitrary selected so that  $(x,\aff_0)$ is regular and $\affb_0$ is the reparametrisation product of a chain of the equivalence class starting and ending with $(\aff_0)$. We call the map $\hat{B}_S$ the generalized Burgers vector.
We note that if $\affb_0$ would commute with $\tilde{\affb}$, the generalized Burgers vector would be just given be a simple multiplication with $\affb_0$.
Furthermore, we note that the map from $\hat{B} Hom_x \rightarrow  Iso(P^\aff_x,P^\aff_x)$ is an homomorphism of groups.

Compared to the description of the generalized Burgers vector in \cite{Lucapaper} our chains allows us to extend the homotopy classes over thin barriers of irregular points.

Related descriptions of solid bodies can be found in Kondo \cite{Kondo64} and Kr\"oner \cite{Kroener58}(see also  \cite{Davini86},\cite{Davini91}, \cite{Cermelli99} \cite{Ariza05} )

Theorem \ref{TheoremLowerboundoftheensitywiththelocalminimizersofJ} says that the local minimizers $\tilde{\aff}_B $ of $J_\lambda(\cdot,\chi,x)$ are differentiable functions of $x$ and that we can use them as Lagrangian coordinates. Moreover, we can bound the energy density from below with an functional of the form $\tilde{F}(\nabla \tilde{\tau}_B)+ C \|\nabla^2\tilde{\tau}_B\|^2$
\begin{theorem}\label{TheoremLowerboundoftheensitywiththelocalminimizersofJ}
There exists $\hat{\lambda}$, $\hat{\epsilon}>0$ such that for $\lambda>\hat{\lambda}$ 
for all points $x$ with $\hat{h}_\lambda(\chi, y)\le\hat{\epsilon}$ and for all reparametrisations $\affb=(B,t)\in Gl_d(\Z^d)\times \Z^d$ fulfilling $\|\hat{A}^{-1}(x)B^{-1}| \le 2 \|E^{-1}\|$ where $\hat{\aff}=(\hat{A},\hat{\tau})\in Gl_d(\R)\times\R^d$ is the global minimizer of $h_\lambda(\cdot ,\chi,x)$
there exits a open neighborhood $U$ around $x$ and a two times differentiable function $\tilde{\aff}_B U\rightarrow Gl_d(\R)\times\R^d$ with the following properties
\begin{enumerate}
\item 
\begin{equation}
\left\|\tilde\aff_B(x)-\affb\hat{\aff}(x) \right\|_\lambda\le \left(  \frac{1}{2} C_{Con} \|A_0^{-1}\|^2\rho_\lambda \right)^{-1/2}\sqrt{J_\lambda}(\aff_0,\chi,x)
\end{equation}
\item $\tilde\aff_B(y)$ is a local minimizer of $J_\lambda(\cdot, \chi, y)$ for every $y$ in $U$
\item 
\begin{align}\label{Lowerbound}
\hat{h}_\lambda(\chi,y)\ge&F_C\left(\nabla \tilde{\tau}_B(y)\right)+ \frac{1}{5}\tilde{C}_{\nabla }\left(\frac{\rho_{2\lambda}}{\rho_\lambda}\right)     \left\|\nabla\tilde{\tau}_B^{-1}(y)\right\|^2\lambda^4\|\nabla^2\tilde{\tau}_B(y)\|^2\det \left(\nabla \tilde{\tau}_B \right) \quad ,
\end{align}
\end{enumerate}
where we denote
\begin{align}
F_C(A):=&\inf\left\{U(A, A_1, B, A_2)| A_1,A_2\in Gl_d(\R),B\in Gl_d(\Z)\right\}\quad ,\nonumber\\
U(A, A_1, B, A_2):=&F(A_2)+\frac{1}{3} C_{Con}C_{rep}^{-1} \|\left(BA_2\right)^{-1}\|^2 \det \left( A \right)
\lambda^2\left\|BA_2-A_1\right\|^2\nonumber\\
&+\frac{1}{2}\tilde{C}_{\nabla }\left(\frac{\rho_{2\lambda}}{\rho_\lambda}\right) \left\|A_1^{-1}\right\|^2\det \left( A\right)
\lambda^2\|A- A_1 \|^2 \quad ,
\end{align}
where we use the following constants
\begin{align}
 \alpha_\nabla:=&64\max\left\{ \frac{ \|\nabla W\|_{\infty}^2}{C^W_0 \Theta_W^2  }, \frac{|c_\Theta^1|^2}{c_\Theta^0}\right\} 
C_{con}:= c_\Theta^0 \min\left\{\frac{1}{12}  , \frac{c_\Theta^0 C_{\varphi}^2}{4\left(9+d\right)w_{d-1}^2 4^d}\frac{\rho_\lambda^2}{\det A^2} \right\}\quad ,\nonumber\\
C_{rep}=& 9 \left(C^W_0\right)^{-1} 4^{d-1} C_A^{2d} \det E^2   \quad ,\nonumber\\
\tilde{C}_{\nabla }^{-1}\left(X\right):=&C_{rep}\left(C_{\nabla 2}(X)^{-1}+ \frac{\alpha_\nabla 2^d\|\nabla \sqrt{\tilde{\varphi}}\|_\infty^2}{C_{con}^2 X }\right) \quad ,\nonumber\\
\left(C_{\nabla 2}(X)\right)^{-\frac{1}{2}}:=& \frac{\sqrt{\alpha_\nabla}}{C_{con} }\left(\|\nabla\sqrt{\tilde{\varphi}}\|_\infty^2+ \|\nabla^2\sqrt{\tilde{\varphi}}\|_\infty+ 2 \infty \|\nabla \sqrt[4]{\tilde{\varphi}}\|^2\right)d\sqrt{2^d X}\nonumber\\
& +\frac{\sqrt{\alpha_\nabla}} {C_{con}^2 }\left(2^d\|\nabla\sqrt{\tilde{\varphi}}\|_\infty^2\right)^{\frac{1}{2}}\left(16\;  2^{\frac{d}{2}} X +\sqrt{8d}\sqrt{X}\right)
\end{align}
\end{theorem}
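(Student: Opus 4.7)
The plan is to build $\tilde\aff_B$ as a smooth branch of critical points of $\aff \mapsto J_\lambda(\aff, \chi, y)$ by applying the implicit function theorem near $\affb\hat\aff(x)$, then to differentiate the critical-point equation to control $\nabla\tilde\tau_B$ and $\nabla^2\tilde\tau_B$, and finally to reorganise the identity $\hat h_\lambda(\chi,y) = F(\hat A(y)) + J_\lambda(\hat\aff(y),\chi,y) + \nu_\lambda(\hat A,\chi,y)$ so that its right hand side majorises the functional $U(\nabla\tilde\tau_B,\tilde A_B,B,\hat A)$ together with a leftover quadratic in $\nabla^2\tilde\tau_B$.

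First I would establish local strict convexity of $J_\lambda(\cdot,\chi,y)$ in the $\aff$ variable at $\affb\hat\aff(x)$. Its Hessian is a weighted sum of terms built from $\nabla^2 W(A(x_i-y)+\tau)$ pulled back by the lattice geometry; by the local convexity assumption on $W$, every atom $x_i$ whose lattice image lies in $B_{\Theta_W}(\Z^d)$ contributes at least $c^0_\Theta$ to the quadratic form, and the smallness of $\hat h_\lambda$ combined with the coercivity of $W$ shows that the irregular atoms are too few to spoil positivity. This yields a Hessian lower bound of order $C_{Con}\|A^{-1}\|^2\rho_\lambda$, which both makes the implicit function theorem applicable on a neighbourhood $U$ of $x$ and delivers item~1 by comparing $J_\lambda$ at $\affb\hat\aff(x)$ with its value at the resulting local minimiser through quadratic coercivity.

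For items 2 and 3 I would differentiate the vanishing-gradient equation $\nabla_\aff J_\lambda(\tilde\aff_B(y),\chi,y) = 0$ once, and then once more, in $y$. The mixed derivative $\partial_y \nabla_\aff J_\lambda$ splits into an algebraic piece that forces $\nabla\tilde\tau_B(y) = \tilde A_B(y)$ exactly for a configuration sitting on the fitted lattice, and a residual supported on the irregular atoms and on derivatives of $\sqrt{\tilde\varphi}$. Inverting the Hessian from the previous step then produces pointwise bounds, roughly $\lambda^2\|\nabla\tilde\tau_B(y)-\tilde A_B(y)\|^2 \le C_{rep} J_\lambda$ and an analogous estimate on $\lambda^4\|\nabla^2\tilde\tau_B\|^2$; the factors $\|A_1^{-1}\|^2$, $\|(BA_2)^{-1}\|^2$, the weight $\|\nabla\sqrt{\tilde\varphi}\|_\infty^2$ and the ratio $\rho_{2\lambda}/\rho_\lambda$ appearing in $\tilde C_\nabla$ all arise from this calculation.

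For the lower bound I would insert $A = \nabla\tilde\tau_B(y)$, $A_1 = \tilde A_B(y)$, $A_2 = \hat A(y)$, together with the ambient $B$, into the expression for $U$. The elastic term matches $F(\hat A(y))$ in $\hat h_\lambda$, the two quadratic penalties in $U$ are each dominated by the bounds derived above (consuming a fixed fraction of the available $J_\lambda$), and the remaining $J_\lambda$ budget pays for the $\frac{1}{5}\tilde C_\nabla \|\nabla\tilde\tau_B^{-1}\|^2 \lambda^4 \|\nabla^2\tilde\tau_B\|^2 \det(\nabla\tilde\tau_B)$ curvature term through the second-derivative estimate. The main obstacle will be the bookkeeping of the constants: one has to split $J_\lambda$ cleanly between $\lambda^2\|BA_2-A_1\|^2$, $\lambda^2\|A-A_1\|^2$ and the curvature term so that the numerical factors land on $\frac{1}{3}C_{Con}C_{rep}^{-1}$, $\frac{1}{2}\tilde C_\nabla$ and $\frac{1}{5}\tilde C_\nabla$, while propagating the weights $\det A$, $\|A^{-1}\|$ and $\rho_{2\lambda}/\rho_\lambda$ consistently through the change between Eulerian volume and lattice density.
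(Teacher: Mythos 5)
Your overall architecture --- local convexity of $J_\lambda$ in the $\aff$ variable, an implicit-function-theorem branch of critical points, differentiation of the Euler--Lagrange equation to bound $\nabla\tilde{\tau}_B-\tilde{A}_B$ and $\nabla^2\tilde{\tau}_B$, and a final reassembly of $\hat{h}_\lambda=F+J_\lambda+\nu_\lambda$ against the functional $U$ --- coincides with the paper's (Lemmas \ref{hjlocalconvexity}, \ref{localaffgradient}, \ref{Upperboundgradients} and Step 2 of the proof). But you have skipped one essential step, and it is the one the paper spends most of its Step 1 (``following one minimizer'') on.

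The bound \eqref{Lowerbound} is asserted for every $y$ in the neighbourhood $U$, and its derivation needs the chain
$J_\lambda(\hat{\aff}(y),\chi,y)\ge C_{rep}^{-1}J_\lambda(\affb(y)\hat{\aff}(y),\chi,y)\ge C_{rep}^{-1}\bigl(J_\lambda(\tilde{\aff}_B(y),\chi,y)+\tfrac{1}{2} C_{Con}\|(B\hat{A})^{-1}\|^2\rho_\lambda\|\affb(y)\hat{\aff}(y)-\tilde{\aff}_B(y)\|_\lambda^2\bigr)$,
where $\hat{\aff}(y)$ is the global minimizer of $h_\lambda(\cdot,\chi,y)$ at $y$, not at $x$. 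The quadratic coercivity in the second inequality is only available if the continued branch $\tilde{\aff}_B(y)$ is the unique local minimizer inside the convexity ball of radius $\delta_\aff$ around $\affb(y)\hat{\aff}(y)$. This is not automatic: as $y$ varies, the global minimizer may jump to a different parametrisation of essentially the same lattice, so one must first produce the connecting reparametrisation $\affb(x,y)\in Gl_d(\Z)\times\Z^d$ between $\hat{\aff}(x)$ and $\hat{\aff}(y)$ (Lemma \ref{Theoremjumping}, i.e.\ the machinery underlying Theorem \ref{addingatoms}) and then check quantitatively, using the gradient bound along the branch and the smallness of $\hat{\epsilon}$, that $\|\affb(y)\hat{\aff}(y)-\tilde{\aff}_B(y)\|_\lambda\le\delta_\aff$, so that uniqueness of the local minimizer identifies the continued branch with the minimizer adjacent to the reparametrised global one. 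Without this, the penalty $\frac{1}{3} C_{Con}C_{rep}^{-1}\|(BA_2)^{-1}\|^2\det(A)\lambda^2\|BA_2-A_1\|^2$ in $U$ --- which measures exactly $\|B\hat{A}(y)-\tilde{A}_B(y)\|^2$ and is \emph{not} produced by your differentiated Euler--Lagrange estimates --- has no source. Your sketch attributes both quadratic penalties in $U$ to ``the bounds derived above'', but only the $\|A-A_1\|^2$ term comes from the first-gradient estimate; the $\|BA_2-A_1\|^2$ term comes from convexity of $J_\lambda$ around the \emph{correctly reparametrised} global minimizer, which is precisely what the jumping argument must supply.
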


\begin{itemize}
\item The function $\tilde{\aff}_B(y(s))$ can be extended along the curve of regular atoms as long as \newline $|\tilde{\aff}_B(y(s))| \le C_A$.
If we start at $\|\hat{A}^{-1}B^{-1}\| \le 3/2 \|E^{-1}\|^{-1}$, we can extend it as least for a distance scaling like $\lambda^2$ areas of low energy density.
\item If we select $\hat{\epsilon}$ small enough, the local minimizer $\tilde{A}$ can not leave the Ericson Piterie neighborhood it started in without increasing the energy over this barrier.
Therefore, in this case $\tilde{\aff}_B$ can be extended in any connected set of low energy points.
\item Due to the coercivity conditions on $F$ it holds:
\begin{equation}
F_C\left(A\right)= \min\left\{ F(BA)|B\in Gl_d(\Z^d)  \right\}+O(\lambda^{-2})\quad .
\end{equation}
\end{itemize}

\section{Ideas of the proofs}
\paragraph{Proof of Theorem \ref{addingatoms}}
If there are two $(\epsilon_\rho, \epsilon_J, C_A)$-regular pairs $(x, \aff_1)$ and $(x,\aff_2)$ for the same point $x$, then $\aff_2$ is a reparametrisation of $\aff_1$up to a small difference (Lemma \ref{Lemmatwodifferentfittedlattices}).  Furthermore, the difference in $A$ can be controlled by $\lambda^{-1}\sqrt{\epsilon_J}$ and the difference in $\tau$ can be controlled by $\sqrt{\epsilon_J}$.
Additionally, we prove in Lemma \ref{Sectpunktbewegenlambdaaendern} that all points in a $\lambda$-ball around a regular point are regular with modified coefficients and a smaller $\tilde{\lambda}$. 
If we combine these lemmata, we get similar estimates for two $(\epsilon_\rho, \epsilon_J, C_A)$-regular pairs $(y_1,\aff_1)$  and $(y_2,\aff_2)$ , provided that $|y_1-y_2|\le 1.5\lambda $. For sufficiently small $\epsilon_J$ the reparametrisation between them will be unique.
Additionally, if we have three regular pairs $(y_i\aff_i)$ with $|y_j-y_k|\le 1.5\lambda $, the reparametrisations fulfills
\begin{align}\label{triangel}
B_{1,3}=&B_{1,2}B_{2,3}\quad ,      \quad t_{1,3}= B_{1,2}t_{2,3}+t_{1,2} \quad .
\end{align}
Therefore, for a sequence of sufficiently regular points satisfying $\left|y_{j+1}-y_j\right|<1.5\lambda$ we get a reparametrisation for every step.
Furthermore, we can conclude from equation \eqref{triangel} that, if we add an additional regular point somewhere in the middle of the sequence, the product of the reparametrisations stays the same. 

\paragraph{Proof of Theorem \ref{TheoremLowerboundoftheensitywiththelocalminimizersofJ} }
This proof is based on the local convexity of $J_\lambda(\cdot,\chi,x)$ for regular $x$, that is proved in Lemma \ref{hjlocalconvexity}.
Using the local convexity we prove in Lemma \ref{localaffgradient} that close to every $\aff$ with $(x,\aff)$
 there is a local minimizer $\tilde{\aff}_B$ of $J_\lambda(\cdot,\chi ,x)$.
Furthermore, we show with implicit function theorem, that the local minimizer $\tilde{\aff}_B$ are differentiable functions both of the position $x$ and the configuration $\chi$ in regular areas of the configuration (Lemma \ref{localaffgradient}). In Lemma \ref{Upperboundgradients} we use a more careful application of implicit function theorem to get a lower bound on $J_\lambda(\tilde{\aff}_B(x),\chi,x)$ of the form 
\begin{align}
J_\lambda\left(\tilde{\aff}_B, \chi, x\right)\ge & C \lambda^2\left( \lambda^2\|\nabla \tilde{A}_B\|^2+\|\nabla \tilde{\tau}_B- \tilde{A}_B  \|^2\right) \quad,\nonumber\\
J_\lambda\left(\tilde{\aff}_B, \chi, x\right)\ge & C \lambda^4 \left(\lambda^2\|\nabla^2 \tilde{A}_B\|^2+ \|\nabla^2\tilde{\tau}_B-\nabla \tilde{A}_B\|^2\right)\quad .\nonumber
\end{align}
Additionally, we prove in Lemma \ref{lowenergyregular} that  for all points $x$ with low energy density there exists a global minimizer $\hat{\aff}(x)$ of $h_\lambda(\cdot, \chi,x)$ and $(x\hat{\aff}(x)$ . If $\hat{\aff}(x)$ is regular, its reparametrisation $\affb\hat{\aff}(x)$ is regular too according to Lemma \ref{BasStadev} Furthermore, we can estimate $J_\lambda\left(\affb \hat{\aff}, \chi, x\right) \ge C J_\lambda(\hat{\aff}(x)  , \chi,x)$.
We use Lemma \ref{Theoremjumping} to prove that points are close enough to each other there are reparametrisations that connect the different global minimizers $\hat{\aff}(x)$ for different $x$ with the same differentiable branch of local minimizers $\tilde{\aff}_B$.
Due to the local convexity, we get the estimate 
\begin{equation}
J_\lambda(\affb \hat{\aff},\chi,x)\ge  J_\lambda(\tilde{\aff_{B}},\chi,x)+ \frac{1}{2} C_{Con} \left(\|\left(\affb \hat{A} \right)^{-1}\|^2+O(\lambda^{-1})\right) \rho_\lambda \left\|\affb \hat{\aff}-\tilde{\aff}_B\right\|_\lambda^2 \quad .\nonumber
\end{equation} 
Hence, for low energy points we can estimate $ J_\lambda(\hat{\aff}(x), \chi ,x)$ the gradient of the local minimizers.
If we put these estimates together and minimize over $\hat{\aff}(x)$, $B(x)$ and $\tilde{A}_B$ we get the estimate \eqref{Lowerbound}.

\section{Proof of Theorem  \ref{addingatoms}  } 

\begin{lemma}\label{Lemmatwodifferentfittedlattices}
For all $C_A>s_0$ exists $\hat{\lambda}\in \R$ and $\epsilon_{\rho}$,$\epsilon_J$ such that for all $\lambda>\hat{\lambda}$, 
 $\aff_i=(A_1,\tau_1),\aff_2=(A_2,\tau_2)\in Gl_d(\Z^d)\times \Z^d$
 and  $x\in B_{2\lambda}(\Omega)$, so that $(x,\aff_1)$ and $(x,\aff_2)$ are $(\epsilon_\rho, \epsilon_J, C_A)$-regular, we have
\begin{align}\label{regularresult}
\|id-A_1^{-1}BA_2\|<&\left(\frac{C_0^WC_{\varphi 2}}{8 d C_\varphi}\det A_2\right)^{-\frac{1}{2}} \frac{\sqrt{J_{max}}}{\lambda} \quad ,\nonumber\\
|B\tau_2+t-\tau_1|<& \| A_1\| \left(\frac{C_0^W}{10}\det A_2\right)^{-\frac{1}{2}} \sqrt{J_{max}} \quad,
\end{align}
where 
\begin{equation}
J_{max}=\max\left\{J_\lambda(\aff_1,\chi, x),J_\lambda(\aff_2,\chi,x)\right\}\quad .
\end{equation}
\end{lemma}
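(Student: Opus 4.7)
\noindent\emph{Proof plan.} The strategy is to exploit the fact that a regular pair $(x,\aff_j)$ induces an integer labeling of the atoms in the support of $\varphi(\lambda^{-1}|\cdot - x|)$: to each such atom $x_i$ we assign the nearest lattice point $z_i^{(j)} := \argmin_{z \in \Z^d} |A_j(x_i - x) + \tau_j - z|$ and write $\delta_i^{(j)} := A_j(x_i - x) + \tau_j - z_i^{(j)}$ for the residual. Applying the coercivity $W(z) \ge C_0^W\,\dist^2(z,\Z^d)$ to the definition of $J_\lambda$ yields at once the weighted deviation bound
\begin{equation*}
\sum_i |\delta_i^{(j)}|^2\,\varphi\bigl(\lambda^{-1}|x_i - x|\bigr) \;\le\; \frac{C_\varphi \lambda^d\, J_{max}}{C_0^W \|A_j^{-1}\|^2},\qquad j=1,2,
\end{equation*}
which is the quantitative starting point of the whole argument.

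The candidate reparametrisation $(B,t)$ is constructed as follows. Set $M := A_1 A_2^{-1}$ and $s := \tau_1 - M\tau_2$. Eliminating $x_i - x$ between the two labelings produces the algebraic identity
\begin{equation*}
M z_i^{(2)} + s - z_i^{(1)} \;=\; M\,\delta_i^{(2)} - \delta_i^{(1)},
\end{equation*}
so the integer-to-integer map $z_i^{(2)} \mapsto z_i^{(1)}$ is close to the affine map $z \mapsto Mz + s$. The density part of the regularity hypothesis forces $\det M = \det A_1/\det A_2$ arbitrarily close to $1$, so any integer matrix sufficiently close to $M$ has determinant exactly one. For $\hat\lambda$ large and $\epsilon_J$ small the right-hand side above has magnitude strictly below $\tfrac12$ at atoms with moderate deviation; integrality of $z_i^{(1)}, z_i^{(2)}$ then forces $z_i^{(1)} = B z_i^{(2)} + t$ at every such ``good'' atom, for the pair $(B,t)\in Gl_d(\Z)\times \Z^d$ obtained by rounding $(M,s)$ to the nearest integer. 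A Markov-type argument combining the starting deviation bound with $\rho_\lambda \approx \det A_j$ guarantees $d+1$ good atoms whose labels $z_{i_k}^{(2)}$ form an affine basis of $\Z^d$; this simultaneously gives existence and uniqueness of $(B,t)$.

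For the quantitative estimates in \eqref{regularresult}, I return to Eulerian coordinates: subtracting the exact identities $B z_i^{(2)} + t = z_i^{(1)}$ at good atoms yields
\begin{equation*}
(A_1 - B A_2)(x_i - x) + (\tau_1 - B\tau_2 - t) \;=\; \delta_i^{(1)} - B\,\delta_i^{(2)}.
\end{equation*}
Taking weighted averages against $\varphi(\lambda^{-1}|x_i - x|)$ decouples the two unknowns: testing against the centered positions $x_i - \bar x$ (with $\bar x$ the weighted mean) isolates $A_1 - B A_2$ via the covariance matrix $Y := \sum_i \varphi(\lambda^{-1}|x_i-x|)(x_i - \bar x)(x_i - \bar x)^T$, whose eigenvalues are of order $\lambda^{d+2}\rho_\lambda$ with $C_{\varphi 2}$ appearing as the intrinsic second moment of $\varphi$. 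A Cauchy--Schwarz bound on the right-hand side, combined with the starting deviation estimate, produces $\|A_1 - B A_2\|$ of order $\lambda^{-1}\sqrt{J_{max}/\det A_2}$; pre-multiplying by $A_1^{-1}$ delivers the first line of \eqref{regularresult}. Evaluating the same identity at the weighted mean, and using both the matrix bound just obtained and the fact that $\|B\|\,\|A_2\|$ is comparable to $\|A_1\|$ when controlling $B\,\overline{\delta^{(2)}}$, gives the translation estimate, with the factor $\|A_1\|$ arising exactly from this last term.

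The main obstacle is the rigorous construction of the integer pair $(B,t)$: once $B \in Gl_d(\Z)$ close to $M$ is known to exist, the estimates reduce to careful bookkeeping of weighted moments of $\delta_i^{(j)}$ along the lines sketched above. The delicate step is thus showing that the ``good'' atoms genuinely sample a full-rank sublattice of $\Z^d$ under the label map $x_i\mapsto z_i^{(2)}$; this is what dictates the choice of $\hat\lambda$ (large enough for $\chi\cap B_\lambda(x)$ to contain many atoms at many distinct labels) and of $\epsilon_J$ (small enough that a positive fraction of atoms has deviations comfortably below $\tfrac12$ in both lattices simultaneously).
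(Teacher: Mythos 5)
Your second stage (the quantitative estimate) is essentially the paper's Step B: the paper also reduces to $\sum_i |x_{i1}-x_{i2}|^2\varphi \le 4C_\varphi\lambda^d J_{max}/C_0^W$ via the parallelogram identity, writes the residual as $\delta_A x_{i2}+\delta_\tau$, kills the cross term by the evenness of $\varphi$ (your centering at $\bar x$), and identifies the second moment $C_{\varphi 2}/d$ by passing to an integral; the sup-bound $Y$ on $\dist^2$ times the density of missing/irregular lattice sites plays the role of your lower bound on the good-atom covariance. One structural difference: the paper needs the crude bound $|A_1A_2^{-1}-B|=O(\sqrt{\epsilon_J}/\lambda)$ \emph{before} this step, because it sums $\dist^2(x_{i2},\chi_{\aff_1})$ over \emph{all} lattice points of $\chi_{\aff_2}$ in $B_{2\lambda}$ and must know that the nearest point of $\chi_{\aff_1}$ is the one labelled $Bz_i+t$ (otherwise $\dist^2\le(\delta_Ax_{i2}+\delta_\tau)^2$ points the wrong way). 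Your version sidesteps this by using the identity $(A_1-BA_2)(x_i-x)+(\tau_1-B\tau_2-t)=\delta_i^{(1)}-B\delta_i^{(2)}$ only at good atoms, where it holds exactly by construction; that is a genuine simplification, provided you supply the covariance lower bound for the good subset (the same subtraction trick works).

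The gap is in your first stage. The paper does not construct $(B,t)$ by hand: it verifies the hypotheses of Theorem 5.12 of \cite{Lucapaper} (a rigidity theorem for maps between lattices that are approximately affine) and imports existence, uniqueness and the crude $O(\sqrt{\epsilon_J}/\lambda)$, $O(\sqrt{\epsilon_J})$ bounds from there. Your rounding construction as written does not close this. First, $t:=\mathrm{round}(s)$ with $s=\tau_1-M\tau_2$ is wrong: the discrepancy between $s$ and the correct $t=z_{i_0}^{(1)}-Bz_{i_0}^{(2)}$ contains the term $(M-B)z_{i_0}^{(2)}$, which is of size $O(\beta\lambda|A_2|)$, not $<\tfrac12$; $t$ must be read off from a good base atom after $B$ is fixed. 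Second, and more seriously, knowing that $d+1$ good labels form an affine basis of $\Z^d$ does not force the label map to be exactly affine on all good atoms: for long difference vectors $v=z_i^{(2)}-z_{i'}^{(2)}$ the error $|M v-(z_i^{(1)}-z_{i'}^{(1)})|\le 2(1+|M|)\beta$ is below $\tfrac12$ only if $\beta$ is small, and even then additivity of the induced map $L(v)$ on realized sums gives a priori only $L(v_1+v_2)-L(v_1)-L(v_2)\in\{-1,0,1\}^d$; one must either shrink the goodness threshold to force exact additivity and exhibit enough realized relations (e.g.\ good pairs $z,z+e_k$ for each coordinate direction, which exist because the bad fraction is $O(\gamma)$), or cite the rigidity theorem as the paper does. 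You correctly identify this as the main obstacle, but the proposal does not contain the argument that resolves it, and it is the one step of the lemma that is not ``bookkeeping.''
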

\begin{proof}
We will proceed in two steps. The first step is basically taken from the proof of Theorem 5.12 from \cite{Lucapaper}, where the same statement is proved for a related model. In the second step we improve the estimate for the proportionality constant. 

{\bf Step 1:}
Without lose of generality we will restrict ourselves to the case $x=0$.
We have  $\|A_2^{-1}\|,\|A_2^{-1}\|<C_A$. We take some $\gamma>0$ and use Lemma \eqref{Dichteregular} with \newline $\beta=\sqrt{\frac{\epsilon_J^*}{\gamma C_0^W}}<\min \left\{s_o/2,\frac{1-\epsilon_{\rho}}{2 C_A^{d-1}}\left(\rho^{max}_d \right)^{-1}\right\}$. We get the estimate $\rho_{ \aff_j ,\beta ,x}^{irr}\le  \gamma \rho_\lambda$. We denote by $\chi_{reg}$ the set of atoms that are regular for both $\aff_1$ and $\aff_2$.
 We have that at least a density of $\rho_\lambda(\chi_{reg},x )  (1-2\gamma)\rho_\lambda(\chi,x)$ atoms, that are regular for $\aff_1$ and $\aff_2$.
Due to the regularity condition on the density we know that $(1-\epsilon_\rho)\det A_2 \le \rho_\lambda$. Hence, we get 
\begin{equation}
\rho_\lambda(\chi_{reg},0 )\ge(1-2\gamma)(1-\epsilon_\rho)\det A_2\quad .
\end{equation}
Furthermore, if $\beta  \le \frac{1-\epsilon_{\rho}}{2 C_A^{d-1}}\left(\rho^{max}_d \right)^{-1}\le \frac{1}{2}\|A_j\|^{-1}$ a lattice point can not belong to two different atoms. Therefore, there is a bijection between the atoms of $\chi_{reg}$ and the lattice positions $\chi^{reg}_{\aff_2}$ next to them in $\chi_{\aff_2}+x$. Hence, we get:
\begin{equation}\label{phidifference}
\left|\varphi\left(\lambda^{-1}\left|x_{i2}\right|\right)-\varphi\left(\lambda^{-1}\left|x_i\right|\right)\right|\le  \frac{\|\nabla \varphi\|_\infty}{\lambda} \left|x_{i2}\right|\le \frac{\|\nabla \varphi\|_\infty\beta }{\lambda}\quad.
\end{equation}
If we combine this with the estimate on the density of $\chi^{reg}_{\aff_2}$ from lemma \ref{condistapp}, we obtain:
\begin{align}
\rho_\lambda(\chi^{reg}_{\aff_2},0 )\ge&( (1-2\gamma)(1-\epsilon_\rho)\det A_2 - \frac{\|\nabla \varphi\|_\infty \beta }{\lambda}  \det A_2\quad,\\
\rho_\lambda(\chi_{\aff_2}/\chi^{reg}_{\aff_2},0 )\le &\left(2\gamma+\epsilon_\rho+\det A_2 + \frac{\|\nabla \varphi\|_\infty\beta }{\lambda}\right)  \det A_2\quad
\end{align}
We define $Q:= \frac{A_2}{2|A_2|}[-\lambda,\lambda]^d$.
Therefore, it holds $Q \subseteq B_\lambda(0)$ and all $y\in B_{\lambda}(0)$  fulfill $\varphi\left(\lambda^{-1}|y|\right)=1$.
Hence, we get
\begin{align}
\#\left( Q \cap \chi_{\aff_2}/\chi^{reg}_{\aff_2}\right)\le &C_{\varphi}\lambda^d\left(2\gamma+\epsilon_\rho+2\gamma\epsilon_\rho+ \frac{\|\nabla \varphi\|_\infty \beta }{\lambda}\right)  \det A_2\quad,\nonumber\\
 \#\left( Q \cap \chi^{reg}_{\aff_2} \right) \ge& C_{\varphi}\lambda^d\left(2\gamma+\epsilon_\rho+2\gamma\epsilon_\rho+ \frac{\|\nabla \varphi\|_\infty\beta }{\lambda}\right)  \det A_2\quad .
\end{align}
Finally, for all atoms in $Q \cap \chi^{reg}_{\aff_2}$ holds that there is a atom of $\chi^{reg}$ in distance less of $\beta$ from each of them and a point of
$\chi^{reg}_{\aff_2}$ in distance less of $\beta$ from this atom. Due to triangle inequality it holds for all $x_{i2} \in Q \cap \chi^{reg}_{\aff_2}$ 
\begin{align}
2\beta \ge & \dist\left( x_{i2},  x_i \right)+ \dist\left(  x_i , Q \cap \chi^{reg}_{\aff_1}\right)\ge \dist\left( x_{i2}, Q \cap \chi^{reg}_{\aff_1} \right)\quad ,\nonumber\\
2\beta \ge &\dist\left( A_2^{-1}(z_i-\tau_2), A_1^{-1}(\Z^d-\tau_1 )\right) \quad ,\nonumber\\
2\beta \|A_1\|\ge &\dist\left( A_1A_2^{-1}z_i-A_1A_2^{-1}\tau_2+\tau_1, \Z^d\right) \quad .
\end{align}
Therefore, $A_2\left(Q \cap \chi^{reg}_{\aff_2}\right)$ fulfills the conditions of  Theorem 5.12 from \cite{Lucapaper} for sufficiently small 
$\epsilon_J$ and $\epsilon_\rho$ and sufficiently large $\epsilon_J$.
 Hence, there exists $B\in Gl_d(\Z)$ and $t\in \Z^d$ such that
\begin{equation}\label{lucaestimate}
|A_1A_2^{-1}- B| \le  O\left(\frac{\sqrt{\epsilon_J}}{\lambda}\right)\quad,\quad| \tau_1-B \tau_2-t| \le  O\left(\sqrt{\epsilon_J}\right)\quad. 
\end{equation}

{\bf Step B:}
Now, we improve the constant in the estimate
\begin{align}
 2J_{max} \ge& J_\lambda\left(\aff_1,\chi,0\right)+ J_\lambda\left(\aff_1,\chi,0\right) \nonumber\\
\ge&  \frac{C_0^W}{C_{\varphi}\lambda^d} \sum_{x_i\in \chi_{reg}} (\mathrm{dist}^2(x_i, \chi_{\aff_1} )  +\mathrm{dist}^2(x_i, \chi_{\aff_2} ) )\varphi\left(\lambda^{-1}\left|x_i\right|\right)\nonumber\\
 = & \frac{C_0^W}{C_{\varphi}\lambda^d} \sum_{x_i\in \chi_{reg}} ((x_i-x_{i1})^2  +(x_i-x_{i2})^2 ) \varphi\left(\lambda^{-1}\left|x_i\right|\right) \quad .
 \end{align}
Using $a^2+b^2=\frac{1}{2} (a-b)^2+\frac{1}{2} (a+b)^2$ one gets
\begin{align}
  4J_{max}>& \frac{C_0^W}{C_{\varphi}\lambda^d} \sum_{x_i\in \chi_{reg}} (x_{i1}-x_{i2})^2   \varphi\left(\lambda^{-1}\left|x_i\right|\right)\quad .
\end{align}
We count the $x_{i2}$ instead of the $x_i$ due to the bijection between them and change the argument of $\varphi$ from $\left(\lambda^{-1}\left|x_i\right|\right)$ to $\left(\lambda^{-1}\left|x_{i2}\right|\right)$ paying with an error term that we estimate with the inequality \eqref{phidifference}. We get
\begin{align}
4J_{max}\ge& \frac{C_0^W}{C_{\varphi}\lambda^d} \sum_{x_{i2}\in \chi^{reg}_{\aff_2}} \dist(x_{i2},\chi_{\aff_1})^2   \varphi\left(\lambda^{-1}\left|x_{i2}\right|\right)+
O\left(\frac{\sqrt{\epsilon_J}^3}{\lambda}\right)
\end{align}
We use the notation
\begin{align}
X:=&\frac{C_0^W}{C_{\varphi}\lambda^d} \sum_{x_{i2}\in \chi_{\aff_2}} \mathrm{dist}^2(x_{i2}, \chi_{\aff_1}) \varphi\left(\lambda^{-1}\left|x_{i2}\right|\right) \quad , \nonumber\\
Y:=&\sup_{x_{i2}\in \chi_{\aff_2}\cap B_{2\lambda}(0)} \mathrm{dist}^2(x_{i2}, \chi_{\aff_1}) \quad 
\end{align}
and estimate
\begin{align}\label{Abstandteilgitterhilf1}
4J_{max}\ge&\frac{C_0^W}{C_{\varphi}\lambda^d} \sum_{x_{i2}\in \chi^{reg}_{\aff_2}} \mathrm{dist}^2(x_i, \chi_{\aff_1}) \varphi\left(\lambda^{-1}\left|x_{i2}\right|\right)\nonumber\\
\ge& X-\frac{C_0^W}{C_{\varphi}\lambda^d}
 \sum_{x_{i2}\in\chi_{\aff_2}/\chi^{reg}_{\aff_2}} \mathrm{dist}^2(x_i, \chi_{\aff_1}) \varphi\left(\lambda^{-1}\left|x_{i2}\right|\right)\nonumber\\
\ge& X-
\frac{C_0^W}{C_{\varphi}\lambda^d} \sum_{x_{i2}\in\chi_{\aff_2}/\chi^{reg}_{\aff_2}} \varphi\left(\lambda^{-1}\left|x_{i2}\right|\right)  \sup_{x_{i2}\in \chi_{\aff_2}\cap B_{2\lambda}(0)} \mathrm{dist}^2(x_{i2}, \chi_{\aff_1}) \nonumber\\
\ge& X-(\det A_2-\rho_\lambda(\chi^{reg}_{\aff_2},0)) y \quad .
\end{align}
Due to \eqref{lucaestimate} for sufficiently small $\epsilon_J$ it holds for all $z_i\in \Z^d$ with $A_1^{-1}(z_i-\tau_1)\le 2\lambda$
\begin{align}
\dist \left(x_{i2},A_1^{-1}B(z_i-\tau_1+t)\right)&\le \dist\left( A_2^{-1}(z_i-\tau_2), A_1^{-1}B(z_i-\tau_1+t )\right) \nonumber\\
\le &\|A_2^{-1}\|\dist\left( A_2A_1^{-1}z_i-A_2A_1^{-1}\tau_1+\tau_2-t, z_i\right)\le & O(\sqrt{\epsilon_J})\le \frac{1}{2}
\end{align}
Hence, we get
\begin{align}\label{Abstandteilgitterhilf2}
\mathrm{dist}^2(x_{i2}, \chi_{\aff_1})=& (x_{i2}-A_1^{-1}(z_i-\tau_1))^2 \nonumber\\
=&(x_{i2}-A^{-1}_1(BA_2 x_{i2}+B\tau_2+t -\tau_1))^2\nonumber\\
=&\left((1-A^{-1}BA_2)x_{i2}+A^{-1}_1(B\tau_2+t -\tau_1)\right)^2 \quad .
\end{align}
We set 
\begin{align}\label{regulardelta}
\delta_A=& 1-A^{-1}_1BA_2 \quad ,\quad \delta_\tau=A^{-1}_1(B\tau_2+t -\tau_1) \quad ,
\end{align}
and obtain
\begin{align}
Y=&\sup_{x_{i2}\in \chi_{\aff_2}\cap B_{2\lambda}(0)}(\delta_A x_{i2}+\delta_\tau)^2 < (\|\delta_A\|2\lambda+\left|\delta_\tau\right|)^28\lambda^2(\|\delta_A\|)^2+2\left|\delta_\tau\right|^2 \quad .
\end{align}
 Using the equation \eqref{Abstandteilgitterhilf2} we get
 \begin{equation}\label{regular1}
  X = \frac{C_0^W}{C_{\varphi}\lambda^d} \sum_{x_{i2}\in \chi_{\aff_2}}  (\delta_A x_{i2}+\delta_\tau)^2\varphi\left(\lambda^{-1}\left|x_{i2}\right|\right) \quad .
 \end{equation}
Next, we estimate the sum in equation \eqref{regular1} by an integral using Lemma \ref{condistapp} and get
 \begin{equation}
 X>\frac{C_0^W}{C_{\varphi}\lambda^d} \int_{\R^d} (\delta_A\tilde{y}+\delta_\tau)^2\varphi\left(\lambda^{-1}\left|\tilde{y}\right|\right)\det A_2 d\tilde{y}+O\left(\frac{\|\delta_\aff\|_\lambda^2}{\lambda^2}\right)\quad .
 \end{equation}
  We substitute $y=\frac{\tilde{y}}{\lambda}$ and obtain
 \begin{equation}
 X>\frac{C_0^W}{C_{\varphi}} \int_{\R^d}  (\delta_A\lambda y+\delta_\tau)^2\varphi\left(\left|y\right|\right)\det A_2 dy+O\left(\frac{\|\delta_\aff\|_\lambda^2}{\lambda^2}\right) \quad .
\end{equation} 
The integral of an odd function over an even area is zero. Hence, the mixed term vanishes
  \begin{equation}
  X > \frac{C_0^W}{C_{\varphi}\lambda^d}\int  \left((\delta_A y)^2+(\delta_\tau)^2\right)
 \varphi\left(\lambda^{-1}\left|y\right|\right) dz \det A_2 +O\left(\frac{\|\delta_\aff\|^2_\lambda}{\lambda^2}\right) \quad .  
 \end{equation}
 The symmetric matrix $\delta_A^+ \delta_A$ has $d$ eigenvalues $a_1..a_d$
 In the eigensystem of $(\delta_A^+ \delta_A)$ we get
  \begin{align}
 \int  (\delta_A y)^2 \varphi\left(\lambda^{-1}\left|y\right|\right)=&\int \sum_{k=1}^d a_k y_k^2 \varphi\left(\lambda^{-1}\left|y\right|\right) dz
 =\sum_{k=1}^d a_k \int y_k^2 \varphi\left(\lambda^{-1}\left|y\right|\right) dz\nonumber\\
 =& Tr(\delta_A^+\delta_A)\frac{1}{d}  \lambda^d \int \lambda^2 y^2 \varphi\left(\left|y\right|\right) dz
 =\frac{C_{\varphi2}}{d}\lambda^{d+2}\|\delta_A\|^2 \quad .
 \end{align}
We obtain
\begin{equation} 
X>\left(\frac{C_{\varphi 2}C_0^W}{d C_\varphi} \lambda^2\|\delta_A\|^2+\|\delta_\tau \|^2\right) \det A_2+O\left(\frac{\|\delta_\aff\|^2_\lambda}{\lambda^2}\right) \quad .
\end{equation}
We apply our estimates for $X$ and $Y$ to \eqref{Abstandteilgitterhilf1}, and get
\begin{align}
4J_{max}> &\left(\frac{C_0^WC_{\varphi 2}}{d C_\varphi} \lambda^2\|\delta_A\|^2+
C_0^W\|\delta_\tau\|^2\right)  \det A_R  \nonumber\\
& -C_0^W (\det A_R-\rho_\lambda(\chi^{reg}_{\aff_2},0) ) (8\lambda^2(\|\delta_A\|)^2+2\|\delta_\tau\|^2) \quad .
\end{align}
 We resubstitute $\delta_A$, and $\delta_\tau$ with equation \ref{regulardelta} and obtain 
\begin{align}
\|1-A^{-1}_1BA_2\|^2<&\left(\frac{C_0^WC_{\varphi 2}}{d C_\varphi} \det A_2- 8 C_0^W(\det A_2-\rho_\lambda(\chi^{reg}_{\aff_2},0))\right)^{-1}  \frac{4J_{max}}{\lambda^2} , \nonumber\\
\|A^{-1}_1(B\tau_2+t -\tau_1)\|^2<& \left(C_0^W(2\rho_\lambda(\chi^{reg}_{\aff_2},0))-\det A_2)\right)^{-1} 4J_{max} \quad .
\end{align}
From this follows the estimate \eqref{regularresult} for sufficiently small $\epsilon_{\rho}$ .
\end{proof}

\begin{lemma}\label{Sectpunktbewegenlambdaaendern}
For all $C_A>0$ there exists $\hat{\lambda}>0$ and $\hat{\epsilon}>0$ such that for all $\lambda>\hat{\lambda}$, 
$ \aff=(A,\tau)\in Gl_d(\R)\times \R^d$ with $\|A^{-1}\|<C_A$, and all $ x,y\in B_{2\lambda}(\Omega)$ it holds,
if $(x,\aff)$   is $(\epsilon_\rho, \epsilon_J, C_A)$-regular 
and $|x-y|<\lambda$, then $(y, \tilde{\aff})=(y,=(A,\tau+A(y-x))$ is $(\tilde{\epsilon}_\rho, \tilde{\epsilon}_J, C_A)$-regular 
using the smaller $\tilde{\lambda}=\lambda-|x-y|$ where
\begin{align}
J_{\tilde{\lambda}}(\aff,\chi,y)\le&\left(\frac{\lambda}{\tilde{\lambda}}\right)^d J_\lambda(\aff,\chi,x)\quad ,\nonumber\\
\tilde{\epsilon}_\rho=&\left(\frac{\lambda}{\tilde{\lambda}}\right)^{d/2} \frac{C+O(\lambda^{-1})}{\lambda}(1+\epsilon_\rho)\epsilon_J+\left(\frac{\lambda}{\tilde{\lambda}}\right)^{d} \epsilon_\rho \quad ,\nonumber\\
\tilde{\epsilon}_J=&\left(\frac{\lambda}{\tilde{\lambda}}\right)^d\frac{1+\tilde{\epsilon}_\rho}{1-\epsilon_\rho}\epsilon_J\quad .
\end{align}
\end{lemma}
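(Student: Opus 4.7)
The plan hinges on the observation that the shift $\tilde{\aff}=(A,\tau+A(y-x))$ is engineered so the fitted lattice does not move: $\chi_{\tilde{\aff}}+y=A^{-1}(\mathbb{Z}^d-\tau-A(y-x))+y=A^{-1}(\mathbb{Z}^d-\tau)+x=\chi_{\aff}+x$. Consequently $A(x_i-y)+\tau+A(y-x)=A(x_i-x)+\tau$ for every atom $x_i$, so the values of $W(\cdot)$ entering the sums defining $J_{\tilde{\lambda}}(\tilde{\aff},\chi,y)$ and $J_\lambda(\aff,\chi,x)$ agree term by term. The conditions $\|A^{-1}\|<C_A$ and the hard-core inequality are configurational and carry over to $(y,\tilde{\aff})$ with no work, so only the density condition and the $J$ bound need verification.

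For $J_{\tilde\lambda}$ the only ingredients that differ are the cutoff weights and the normalization. A short triangle-inequality computation shows $\tilde{\lambda}^{-1}|x_i-y|\ge\lambda^{-1}|x_i-x|$ whenever $|x_i-y|\ge\tilde{\lambda}$; in the complementary range both $\varphi$-values are equal to $1$. Monotonicity of $\varphi$ then gives $\varphi(\tilde{\lambda}^{-1}|x_i-y|)\le\varphi(\lambda^{-1}|x_i-x|)$ termwise, and since $W\ge 0$ this inequality survives the sum. Extracting $(\lambda/\tilde{\lambda})^d$ from the normalization factor yields $J_{\tilde{\lambda}}(\tilde{\aff},\chi,y)\le(\lambda/\tilde{\lambda})^d J_\lambda(\aff,\chi,x)$ immediately.

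The density estimate is the main technical obstacle. I would split the atoms of $\chi$ into $(\aff,\beta,x)$-regular and irregular ones, taking $\beta\sim\sqrt{\epsilon_J}$ in the spirit of the proof of Lemma \ref{Lemmatwodifferentfittedlattices} so that Lemma \ref{Dichteregular} delivers $\rho_{\aff,\beta}^{irr}(x)\le\gamma\rho_\lambda$ for arbitrarily small $\gamma$. Provided $\beta$ lies below half the minimum lattice spacing (guaranteed by $\|A^{-1}\|<C_A$), nearest-neighbor assignment realizes a bijection between the regular atoms and the nearby lattice points in $\chi_\aff+x$. The density $\rho_{\tilde{\lambda}}(\chi_{\aff}+x,y)$ of the pure lattice at $y$ on scale $\tilde{\lambda}$ agrees with $\det A$ up to an error $O(\tilde{\lambda}^{-1})$ by Lemma \ref{condistapp}. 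The difference $\rho_{\tilde{\lambda}}(\chi,y)-\rho_{\tilde{\lambda}}(\chi_{\aff}+x,y)$ is then bounded by three contributions: (i) the irregular atom mass, rescaled by $(\lambda/\tilde{\lambda})^d$; (ii) the lattice points that remain unpaired, again controlled by Lemma \ref{Dichteregular}; and (iii) a weight perturbation of order $\beta/\tilde{\lambda}$ induced by the displacement $|x_i-z_i|\le\beta$ inside the argument of $\varphi$. Combining the three with $\rho_\lambda\le(1+\epsilon_\rho)\det A$ gives the claimed $\tilde{\epsilon}_\rho$.

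The $\tilde{\epsilon}_J$ bound is then a one-line division: $J_{\tilde{\lambda}}\le(\lambda/\tilde{\lambda})^d\epsilon_J\rho_\lambda$, and $\rho_\lambda/\rho_{\tilde{\lambda}}$ is controlled via $\rho_\lambda\le(1+\epsilon_\rho)\det A$ against $\rho_{\tilde{\lambda}}\ge(1-\tilde{\epsilon}_\rho)\det A$. The only genuine difficulty is the bookkeeping inside the density estimate; the remaining steps reduce to short calculations, and the hypotheses $\hat{\lambda}$ large and $\hat{\epsilon}$ small are invoked exactly to keep the combined errors below the announced thresholds.
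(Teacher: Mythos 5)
Your treatment of the $J$-bound is exactly the paper's: the same termwise inequality $\varphi(\tilde{\lambda}^{-1}|x_i-y|)\le\varphi(\lambda^{-1}|x_i-x|)$ (proved there by the intersection-point/triangle-inequality argument) followed by extracting $(\lambda/\tilde{\lambda})^d$ from the normalization, and the $\tilde{\epsilon}_J$ step is also identical. For the density estimate, however, you take a genuinely different route. The paper argues variationally: it starts from the perfect lattice $\chi_\aff+x$ (for which $\rho_{\tilde{\lambda}}=\det A+O(\lambda^{-2})$) and asks how cheaply, in units of $J_\lambda(\aff,\chi,x)$ and of $\rho_\lambda(\chi,x)$, the density at $(y,\tilde{\lambda})$ can be perturbed; removing atoms is charged to $\epsilon_\rho$ via the termwise $\varphi$-inequality, while displacing an atom by $\delta x_i$ changes the weight only to first order ($|\nabla\varphi|\delta x_i/\lambda$) but costs $J$ to second order ($\delta x_i^2$), and a constrained minimization (Cauchy--Schwarz against $\sum|\nabla\tilde\varphi|^2/\varphi$) yields $\delta\rho\lesssim(\lambda/\tilde{\lambda})^{d/2}\sqrt{J_\lambda}/\lambda$. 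Your direct comparison via the $(\aff,\beta,x)$-regular/irregular split, the atom--lattice-point bijection, and Lemma \ref{condistapp} is sound and arguably more rigorous on one point: the paper's quadratic cost $\delta J_i\gtrsim\delta x_i^2$ is only valid for displacements within one period of $W$, and the atoms that violate it are precisely your irregular atoms. But this is also where your bound degrades: Lemma \ref{Dichteregular} with $\beta$ capped by half the lattice spacing only gives $\rho^{irr}_{\aff,\beta}\le C\,\epsilon_J\rho_\lambda$ with no factor $\lambda^{-1}$, whereas the stated $\tilde{\epsilon}_\rho$ has its $\epsilon_J$-term multiplied by $\lambda^{-1}$ (the point of the paper's Cauchy--Schwarz step). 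So you would prove the lemma with $\tilde{\epsilon}_\rho$ of the form $(\lambda/\tilde{\lambda})^{d}\bigl(\epsilon_\rho+C\epsilon_J(1+\epsilon_\rho)\bigr)+O(\sqrt{\epsilon_J}/\lambda)$, not the formula as stated. Since every downstream use only requires $\tilde{\epsilon}_\rho$ to be small for small $\epsilon_\rho,\epsilon_J$ and large $\lambda$, this weaker constant is harmless for the rest of the paper, but you should either state the modified $\tilde{\epsilon}_\rho$ explicitly or add the Cauchy--Schwarz optimization on the regular atoms and accept the extra $O(\epsilon_J)$ term from the irregular ones.
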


\begin{proof}
We claim that for every atom $x_i\in \chi$ it holds
\begin{equation}
\varphi\left(\tilde{\lambda}^{-1}\left|x_i-y\right|\right)\leq \varphi\left(\lambda^{-1}\left|x_i-x\right)\right) \quad .
\end{equation}
If it holds $\|x_i-x\|\le \lambda$, we have $\varphi\left(\tilde{\lambda}^{-1}\left|x_i-y\right|\right)\leq 1=\varphi\left(\lambda^{-1}\left|x_i-x\right|\right)$, since $1$ is the maximum of $\varphi$.
\begin{figure}[ht]\label{BildJumping1}
\center
\includegraphics[scale=0.5]{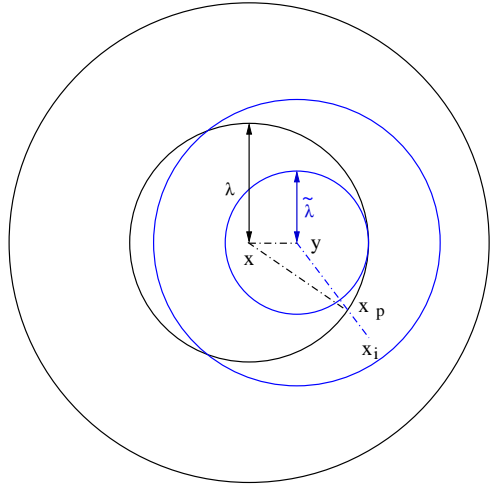}
\caption{Geometric setting}
\end{figure}
 $x_i$ is outside $B_\lambda(x)$ and $y$ is inside the ball.
line segment between $y$ and $x_i$ is intersecting with the surface of the ball in one point. We call this point $x_p$
(See picture \ref{BildJumping1}). We get
\begin{align}
\left|x-x_p\right|\le&\left|x-y\right|+\left|y-x_p\right| \quad ,\nonumber\\
\left|y-x_p\right|\ge&\left|x-x_p\right|-\left|x-y\right|=\lambda-\left|x-y\right|\ge\tilde{\lambda} \quad .
\end{align}
and
\begin{align}
\left|x_i-y\right|=&\left|x_i-x_p\right|+\left|x_p-y\right| 
         \ge\left|x_i-x_p\right|+\tilde{\lambda}\nonumber\\
         \ge&\frac{\tilde{\lambda}}{\lambda}\left|x_i-x_p\right|+\tilde{\lambda}
         \ge \frac{\left|x_i-x_p\right|+\lambda}{\lambda}\tilde{\lambda}
         \ge \frac{\left|x_i-x\right|}{\lambda}\tilde{\lambda}\quad .
\end{align}
Since $\varphi$ is monotone decreasing, we have
\begin{equation}\label{eqdifferentpointdensity}
\varphi\left(\tilde{\lambda}^{-1}\left|x_i-y\right|\right)\leq \varphi\left(\tilde{\lambda}^{-1}\left|x_i-x\right|\right) \quad .
\end{equation}
It holds
\begin{align}
J_{\tilde{\lambda}}(\tilde{\aff}, \chi, y)
=& \frac{\left\|A^{-1}\right\|^2}{C_{\varphi}\tilde{\lambda}^d}\sum_{i} W(A\left(x_i-y\right)+\tau+A(y-x))\varphi\left(\tilde{\lambda}^{-1}\left|x_i-y\right|\right)  \nonumber\\
\le&\frac{\lambda^d}{\tilde{\lambda}^d} \frac{\left\|A^{-1}\right\|^2}{C_{\varphi}\lambda^d}\sum_{i} W(A\left(x_i-x\right)+\tau)\varphi\left(\lambda^{-1}\left|x_i-x\right|\right) \nonumber\\
&\le\frac{\lambda^d}{\tilde{\lambda}^d} J_{\lambda}(\aff,\chi,x) \quad .
\end{align}
Now, we calculate a lower bound on $\rho_{\tilde{\lambda}}(\chi,y)$. 
We start at a Bravais lattice $\chi=\chi_\aff+x=A^{-1}(\Z^d-\tau)+x$ as a configuration. This configuration has $\epsilon_J=0$ 
For this lattice we have $\rho_{\tilde{\lambda}}(\chi,y)=\det A+O(\lambda^{-2})$.
There are different ways to reduce the density. On the one hand one can take atoms away. This decreases $\rho_{\tilde{\lambda}}(\chi,y)$ but because of equation \eqref{eqdifferentpointdensity} it also decreases $\rho_\lambda(\chi, x)$ at least by
\begin{equation}\label{GLlambdawechseldichte1}
\delta \rho_{\tilde{\lambda}}(\chi,y)< \frac{\lambda^d}{\tilde{\lambda}^d}\delta \rho_\lambda(\chi,x) \quad .
\end{equation}
Another possibility is to move atoms to positions of lower $\varphi\left(\lambda^{-1}\left|x_i-x\right|\right)$ this does not have to reduce $\rho_\lambda(\chi,x)$ at all but it will increase $J_\lambda(\aff,\chi,x)$.
If we shift the $i$`th atom for a distance $\delta x_i$ we maximally reduce  $\rho_{\tilde{\lambda}}(\chi,y)$ by 
\begin{equation}
 \delta \tilde{\rho}_i= \frac{1}{\lambda C_{\varphi}\tilde{\lambda}^d}\left|\nabla \varphi\right|\left(\lambda^{-1}\left|x_i-y\right|\right)\delta x_i +O(\lambda^{-1}\delta x_i) \quad ,
\end{equation} 
 we get a minimal cost per atom of 
\begin{equation}
\delta J_i <C_0^W \frac{1}{C_{\varphi}\tilde{\lambda}^d} \delta x_i^2 \varphi\left(\lambda^{-1}\left|x_i-x\right|\right)+
 O(\lambda^{-1}\delta x_i^2) \quad .
\end{equation} 
Furthermore, we have for $x_i\in B_{2\tilde{\lambda}}(y)$
\begin{align}
\left|x_i-x\right|\le & \left|x_i-y\right|+\left|y-x\right|\le 2 \tilde{\lambda}+\lambda-\tilde{\lambda}<2\lambda \quad .
\end{align}
Therefore, it holds $\varphi\left(\lambda^{-1}\left|x_i-x\right|\right)>0 $.
If we minimize $\sum_i\delta J_i$ with the constrain $\rho_\lambda = \det A+\sum_i \delta \tilde{\rho}_i$, we get
\begin{equation}
J_\lambda>\left(\frac{\tilde{\lambda}}{\lambda}\right)^d \left(\frac{1}{C_{\varphi}\tilde{\lambda}^d} \sum_{x_i\in B_{2\tilde{\lambda}}(y)}\frac{\left|\nabla \tilde{\varphi}\right|^2 \left(\lambda^{-1}\left(x_i-y\right)\right) }{\varphi\left(\lambda^{-1}\left| x_i-x\right|\right)}+O(\lambda^{-1})\right)^{-1}\lambda^2 \delta \rho^2 .
\end{equation}
We estimate the sum over $\chi_\aff$ by an integral using Lemma \ref{condistapp}. The error is $O(\lambda^{-2})$ that means  negligible  compared to the error we already made.
 \begin{align}\label{GLlambdawechseldichte2}
 J_\lambda \ge &\left(\frac{\tilde{\lambda}}{\lambda}\right)^d \left(\frac{1}{C_{\varphi}\tilde{\lambda}^d} \int_{\R^3} \frac{\left|\nabla \varphi (\tilde{\lambda}^{-1}\left|z-y\right|)\right|^2 }{\varphi\left(\lambda^{-1}\left|z-x\right|\right)}\det A dz+O(\lambda^{-1})\right)^{-1}\lambda^2 \delta \rho^2\quad ,\nonumber\\
 \delta \rho <&\left(\frac{\lambda}{\tilde{\lambda}}\right)^{d/2} \left(\frac{1}{C_{\varphi}} \int_{\R^3} \frac{\left|\nabla \varphi (\left|z-y\right|)\right|^2 }{\varphi\left(\left|z-x\right|\frac{\tilde{\lambda}}{\lambda} \right)} dz+O(\lambda^{-1})\right)^{\frac{1}{2}} \sqrt{\det A}  \frac{\sqrt{J_\lambda}(A,\tau, \chi,x)}{\lambda}\quad,\nonumber\\
  \delta \rho<&\tilde{C_\rho}\left(\frac{\lambda}{\tilde{\lambda}}\right)^{d/2} \frac{\sqrt{J_\lambda}(\aff, \chi,x)}{\lambda} \quad .
\end{align}
We can estimate the density $\rho_\lambda(\chi,x)$ from above with $\rho(\chi,x)<(1+\epsilon_\rho)\det A$
We summarize the estimates \eqref{GLlambdawechseldichte1} and \eqref{GLlambdawechseldichte2} and we obtain
\begin{equation}
\det A- \rho_{\tilde{\lambda}}(\chi,y)<\left(\left( \frac{\lambda}{\tilde{\lambda}}\right)^{d/2}\frac{C+O(\lambda^{-1})}{\lambda}(1+\epsilon_\rho)\epsilon_J+\left(\frac{\lambda}{\tilde{\lambda}}\right)^{d} \epsilon_\rho\right) \det A \quad.
\end{equation}
If we start with a Bravais lattice and increase the density $\rho_{\tilde{\lambda}}(\chi,y)$ by changing the configuration there are two different ways.
On the one hand can shift atoms to positions of higher $\varphi$.
This leads to the same increase of $J_{\lambda}$ as in the reduction case.
On the other hand one can add more atoms.
This leads to the same increase of $\rho_\lambda(\chi,x)$ additionally it will increase $J_\lambda$ because new atoms can not be placed in the minima because all minima are occupied.
We get the same estimate for the upper bound of $\rho_{\tilde{\lambda}}(\chi,y)$
\begin{equation}
\left|\det A- \rho_{\tilde{\lambda}}(\chi,y)\right|<\left(\left( \frac{\lambda}{\tilde{\lambda}}\right)^{d/2}\frac{C+O(\lambda^{-1})}{\lambda}(1+\epsilon_\rho)\epsilon_J+\left(\frac{\lambda}{\tilde{\lambda}}\right)^{d} \epsilon_\rho\right) \det A \quad .
\end{equation}
Finally, we estimate
\begin{align}
J_{\tilde{\lambda}}(A,\tau,\chi,y)\le&\left(\frac{\lambda}{\tilde{\lambda}}\right)^d J_\lambda(A,\tau,\chi,x)
\le \left(\frac{\lambda}{\tilde{\lambda}}\right)^d \epsilon_J \rho_\lambda(\chi,x)\quad,\nonumber\\
\le &\left(1+\tilde{\epsilon}_\rho\right) \left(\frac{\lambda}{\tilde{\lambda}}\right)^d\epsilon_J  \det A 
\le \left(\frac{\lambda}{\tilde{\lambda}}\right)^d\frac{1+\tilde{\epsilon}_\rho}{1-\epsilon_\rho}\epsilon_J \rho_{\tilde{\lambda}}(\chi,y)\quad.
\end{align}

\end{proof}

\begin{lemma}\label{Theoremjumping}
For all $C_A>s_o$ there exists
$ \hat{\lambda}\in \R$, $\epsilon_\rho>0$ and $\epsilon_J>0$ such that for all $\lambda>\hat{\lambda}$,
 $ \aff_j=(A_j,\tau_j)\in Gl_d(\R)\times \R^d$ and $y_j \in B_{2\lambda}(\Omega)$ with $j=1,2,3$ 
 the following holds.
If $(y_j, \aff_j)$ is $(\epsilon_\rho, \epsilon_J, C_A)$-regular  for $j=1,2,3$ and
$|y_j-y_k|\le \frac{3}{2}\lambda$ for $j,k=1,2,3$,
then there exists uniquely defined reparametrisations $B_{j,k}\in Gl_d(\Z)$, $t_{j,k}\in Z^d$ճuch that 
\begin{align}
\|id-A_j^{-1}B_{j,k}A_k\|<& \frac{c^{A}_J}{\sqrt{\det A_k}}  \left(\frac{2\lambda}{2\lambda-|y_j-y_k|}\right)^{d/2}\frac{\sqrt{J_{j,k}}}{\lambda},  \nonumber\\
\left|\delta \tau_{j,k} \right|<&   \frac{c^{\tau}_J|A_j|}{\sqrt{\det A_k}} \left(\frac{2\lambda}{2\lambda-\left|y_j-y_k\right|}\right)^{d/2} \sqrt{J_{j,k}},
\end{align}
where 
\begin{align}
   \delta \tau_{j,k} :=&B_{j,k}՜tau_k+t_{j,k} -\tau_j-\frac{B_{j,k}A_k+A_j}{2}\left(y_k-y_j\right)\quad ,\nonumber\\
J_{j,k}:=&\max\left\{ J_\lambda(\aff_j,\chi,y_j), J_\lambda(\aff_k,\chi,y_k)\right\}\quad ,\nonumber\\
c^{A}_J:=& \frac{3}{2} \left( \frac{8 d C_\varphi} {C_{\varphi 2}C_0^W}\right)^{\frac{1}{2}} \quad ,\quad c^{\tau}_J:= \left(\frac{10}{C_0^W}\right)^{\frac{1}{2}} \quad .
\end{align}
Moreover, it holds
\begin{align}\label{dreieckjumping}
B_{1,3}=&B_{1,2}B_{2,3}\quad ,\nonumber\\
t_{1,3}=&B_{1,2}t_{2,3}+t_{1,2}
\end{align}
\end{lemma}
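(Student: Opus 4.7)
}

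The strategy is to reduce every pairwise estimate to a single-point comparison by shifting both regular pairs to their midpoint, then combine with uniqueness to get the triangle identity. For indices $j,k \in \{1,2,3\}$ let $y_{jk} := (y_j + y_k)/2$ so that $|y_j - y_{jk}| = |y_k - y_{jk}| = |y_j-y_k|/2 \le 3\lambda/4 < \lambda$. By Lemma \ref{Sectpunktbewegenlambdaaendern} applied to $(y_j,\aff_j)$ and to $(y_k,\aff_k)$, the shifted pairs $(y_{jk}, \tilde\aff_j)$ with $\tilde\aff_j = (A_j, \tau_j + A_j(y_{jk}-y_j))$, and analogously $(y_{jk}, \tilde\aff_k)$, are $(\tilde\epsilon_\rho, \tilde\epsilon_J, C_A)$-regular at the reduced scale $\tilde\lambda = \lambda - |y_j-y_k|/2$, and
\begin{equation}
J_{\tilde\lambda}(\tilde\aff_j,\chi,y_{jk}) \le \left(\tfrac{\lambda}{\tilde\lambda}\right)^d J_\lambda(\aff_j,\chi,y_j) = \left(\tfrac{2\lambda}{2\lambda-|y_j-y_k|}\right)^d J_\lambda(\aff_j,\chi,y_j),
\end{equation}
and similarly at $y_k$. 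Choosing $\epsilon_\rho,\epsilon_J$ small enough that the hypotheses of Lemma \ref{Lemmatwodifferentfittedlattices} are met at scale $\tilde\lambda$ with the shifted constants, apply that lemma to $(\tilde\aff_j,\tilde\aff_k)$ at $y_{jk}$ to obtain a unique $\affb_{j,k}=(B_{j,k},t_{j,k}) \in Gl_d(\Z)\times\Z^d$ with
\begin{equation}
\|\mathrm{id}-A_j^{-1}B_{j,k}A_k\| \le \tfrac{c^A_J}{\sqrt{\det A_k}}\,\tfrac{\sqrt{J_{\tilde\lambda,\max}}}{\tilde\lambda}, \qquad |B_{j,k}\tilde\tau_k + t_{j,k} - \tilde\tau_j| \le \tfrac{c^\tau_J|A_j|}{\sqrt{\det A_k}}\sqrt{J_{\tilde\lambda,\max}}.
\end{equation}
Substituting $\tilde\tau_j = \tau_j + A_j(y_{jk}-y_j)$, $\tilde\tau_k = \tau_k + A_k(y_{jk}-y_k)$ and using $y_{jk}-y_k = -(y_k-y_j)/2$, $y_{jk}-y_j = (y_k-y_j)/2$, the left-hand side of the translation estimate becomes exactly $\delta\tau_{j,k}$. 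Combining with the bound on $J_{\tilde\lambda,\max}$ in terms of $J_{j,k}$ and noting $1/\tilde\lambda = (1/\lambda)(2\lambda/(2\lambda-|y_j-y_k|))$ gives the two stated inequalities.

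Uniqueness of $\affb_{j,k}$ for each pair follows because for $\epsilon_J,\epsilon_\rho,\hat\lambda^{-1}$ sufficiently small the right-hand sides in the estimates are strictly smaller than $1/2$; any two candidates in $Gl_d(\Z)\times\Z^d$ would then differ by an element whose linear part and translation both have norm $<1$, forcing them to coincide.

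For the triangle identity, I would verify that the composed pair $\tilde\affb := \affb_{1,2}\affb_{2,3} = (B_{1,2}B_{2,3},\,B_{1,2}t_{2,3}+t_{1,2})$ satisfies the same estimates that characterize $\affb_{1,3}$, and then conclude equality by uniqueness. The matrix part is a telescoping identity:
\begin{equation}
\mathrm{id} - A_1^{-1}(B_{1,2}B_{2,3})A_3 = (\mathrm{id}-A_1^{-1}B_{1,2}A_2) + A_1^{-1}B_{1,2}A_2(\mathrm{id}-A_2^{-1}B_{2,3}A_3),
\end{equation}
so using $\|A_1^{-1}B_{1,2}A_2\| \le 1+o(1)$ (from the first estimate applied to the $(1,2)$ pair) and the individual bounds for $(1,2)$ and $(2,3)$, the composition obeys an estimate of the same form as the $(1,3)$ estimate with a comparable (though slightly larger) constant. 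An analogous decomposition for the translation part, expanding $B_{1,2}B_{2,3}\tau_3 + B_{1,2}t_{2,3} + t_{1,2} - \tau_1$ as $B_{1,2}(B_{2,3}\tau_3 + t_{2,3} - \tau_2) + (B_{1,2}\tau_2 + t_{1,2} - \tau_1)$, reduces the bound on the composed $\delta\tau$ to the individual bounds on $\delta\tau_{1,2}$ and $\delta\tau_{2,3}$ plus controlled remainders from the shifts by $(y_2-y_1)$ and $(y_3-y_2)$ versus $(y_3-y_1)$.

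The main obstacle I anticipate is bookkeeping: choosing $\epsilon_J,\epsilon_\rho,\hat\lambda$ so that after both the scale reduction $\lambda \to \tilde\lambda \ge \lambda/4$ in Lemma \ref{Sectpunktbewegenlambdaaendern} and the subsequent application of Lemma \ref{Lemmatwodifferentfittedlattices} the hypotheses remain valid, and so that the composed estimates used to invoke uniqueness stay strictly below the discreteness threshold of $Gl_d(\Z)\times\Z^d$. Once those thresholds are chosen small enough at the beginning (depending only on $C_A$ and $d$), the rest is a routine assembly of the estimates established in the preceding lemmas.
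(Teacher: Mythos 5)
Your proposal follows essentially the same route as the paper: shift both regular pairs to the midpoint via Lemma \ref{Sectpunktbewegenlambdaaendern}, apply Lemma \ref{Lemmatwodifferentfittedlattices} at the reduced scale $\tilde\lambda$ (which is exactly where the factor $\bigl(2\lambda/(2\lambda-|y_j-y_k|)\bigr)^{d/2}$ and the identification of the shifted translation error with $\delta\tau_{j,k}$ come from), and then obtain the triangle identity from the telescoping decomposition together with discreteness of $Gl_d(\Z)\times\Z^d$. The only cosmetic difference is that the paper phrases the last step as a direct bound on $|B_{1,3}-B_{1,2}B_{2,3}|<1$ and on $|B_{1,2}t_{2,3}+t_{1,2}-t_{1,3}|<1$ and deduces uniqueness from the identity, whereas you argue uniqueness first and then match the composed estimates; both rest on the same discreteness mechanism.
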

\begin{proof}
We consider $\bar{y}=(y_j+y_k)/2$ and get
\begin{equation}
\left|y_j-\bar{y}\right|=\left|y_k-\bar{y}\right|=\left|y_j-y_k\right|/2<3/4 \lambda
\end{equation}
We apply Lemma \ref{Sectpunktbewegenlambdaaendern} twice, one time with $y_j$ as $x$ and $\bar{y}$ as $y$ and the other time with $y_k$ as $x$ and $\bar{y}$ as $y$. $(\bar{y}, \tilde{\aff}_j:=(A_j, \tau_j+A_j\left(x_k-x_j\right)/2)$ and $(\bar{y}, \tilde{\aff}_k:=(A_k, \tau_k+A_k\left(x_j-x_k\right)/2))$ are $\left(\tilde{\epsilon}_J, \tilde{\epsilon}_\rho , C_A \right)$-regular.  
 Therefore, we get
\begin{align}\label{boundJ}
J_\lambda(\tilde{\aff}_j,\chi,\bar{y})=&\left(\frac{2\lambda}{2\lambda-\left|y_j-y_k\right|}\right)^d J_{\tilde{\lambda}}(x, A, \tau+A(\bar{y}-x_j)) \quad ,\nonumber\\
\tilde{\epsilon}_\rho=&\left(\frac{2\lambda}{2\lambda-\left|y_j-y_k\right|}\right)^{d/2}\frac{C+O(\lambda^{-1})}{\lambda}(1+\epsilon_\rho)\epsilon_J+\left(\frac{2\lambda}{2\lambda-\left|y_j-y_k\right|}\right)^{d} \epsilon_\rho .
\end{align}
Since we have two regular pairs, we can apply Lemma \ref{Lemmatwodifferentfittedlattices} 
and get  $B_{j,k}\in \GL_d(\Z)$ and $t_{j,k}\in\Z^d$ such that
\begin{align}\label{boundaeinfach}
\|1-A_j^{-1}B_{j,k}A_k\|<& \frac{C^{A}_J}{\sqrt{\det A_k}}\left(\frac{2\lambda}{2\lambda-|y_j-y_k|}\right)^{d/2}  \frac{\sqrt{J_\lambda}}{\lambda} \quad ,\nonumber\\
\left|B_{j,k}\tau_k -\tau_j+\frac{B_{j,k}A_k+A_j}{2}\left(y_k-y_j\right)+t\right|<&  \frac{c^{\tau}_J\|A_j|}{\sqrt{\det A_k}}\left(\frac{2\lambda}{2\lambda-\left|y_j-y_k\right|}\right)^{d/2}  \sqrt{J_{j,k}}.
\end{align}
This proves the first part of the theorem.
Since it holds $|A_j^{-1}|\le C_{A}$, the matrix derivative of $\det A_j$ is bounded. Additionally, it holds 
$\det A_j<(1-\epsilon_\rho)^{-1}\rho^{max}_d $. 
Hence, the estimate \eqref{boundaeinfach} implies that we can estimate 
\begin{equation}
\det A_j=\det A_k+O\left(\frac{\sqrt{J_\lambda}}{\lambda}\right)\quad . 
\end{equation}
Due to the regularity condition on the density we get $\rho_\lambda (\chi, x_j)<(1+\epsilon_\rho) \det A_j$.
Therefore, we get for small enough $\epsilon_J$
\begin{align}
J_{j,k}&=\max\left\{ J_\lambda(\aff_{j},\chi,y_j), J_\lambda(\aff_k,\chi,y_k)\right\}\le \epsilon_J \max\left\{ \rho_\lambda(\chi,y_{j}), \rho_\lambda(\chi,y_k)\right\}\quad \nonumber\\
\le& \epsilon_J (1+\epsilon_\rho) \max\left\{ \det A_{}, \det A_k\right\}
\le\quad  \epsilon_J (1+\epsilon_\rho)\left(1+O\left(\frac{\sqrt{J_\lambda}}{\lambda}\right)\right)  \det A_k \quad .
\end{align}
Hence, for sufficiently large $\lambda$ we have 
\begin{align}\label{boundaeinfach2}
\|id-A_{j}^{-1}B_{j,k}A_k\|<&2 c^{A}_J 2^{d+1}\frac{\sqrt{\epsilon_J}}{\lambda}\quad .
\end{align}
We calculate for $\epsilon_J$ small enough
\begin{align}
\left|id-A^{-1}_{1}B_{1,2}B_{2,3}A_{3}\right|\le&\left|id-A^{-1}_{1}B_{1,2}A_2+A^{-1}_{1}B_{1,2}A_2\left(1-A_2^{-1}B_{2,3}A_{3}\right)\right|\nonumber\\
\le&\left|id-A_{1}^{-1}B_{1,2}A_2\right|+\left|A^{-1}_{1}B_{1,2}A_2\right|\left|id-A_1^{-1}B_{2,3}A_{3}\right|\nonumber\\
\le& 6c^{A}_J 2^d\frac{\sqrt{\epsilon_J}}{\lambda}   \quad .
\end{align}
 Due to the estimate \eqref{boundaeinfach2} we know 
\begin{align}
\|id-A_{1}^{-1}B_{1,3}A_{3}\|<2 c^{A}_J  2^d\frac{\sqrt{\epsilon_J}}{\lambda} \quad .
\end{align}
We get for $\epsilon_J<\lambda^2 (8C_A C_{|A|} C^{A}_J  2^d)^{-2}$
\begin{align}
\left|B_{1,3}- B_{1,2}B_{2,3}\right|\le&\left|A_{1}\right|\left|A_{1}^{-1}\left(B_{1,3}- B_{1,2}B_{2,3}\right)A_{3}\right|\left|A^{-1}_{3}\right|\nonumber\\
\le&C_A C_{|A|}\left|A_{1}^{-1}B_{1,3}A_3- A_{1}^{-1}B_{1,2}B_{2,3}A_{3}\right|\nonumber\\
\le&C_A C_{|A|}\left(\left|id-A_{1}^{-1}B_{1,3}A_3\right|+\left|id- A_{1}^{-1}B_{1,2}B_{2,3}A_{3}\right|\right)\nonumber\\
\le&8c^{A}_J C_A C_{|A|}2^d\frac{\sqrt{\epsilon_J}}{\lambda}<1 \quad .
\end{align}
The distance between $B_{1,3}$ and $B_{1,2}B_{2,3}$ is smaller than one and they are both elements of the discrete set $Gl_d(\Z)$.
Therefore, they have to be equal.
$B_{1,3}=B_{1,2}B_{2,3}$ also implies the uniqueness of $B_{1,3}$ and because $B_{1,2}$ and $B_{2,3}$ are invertible also the 
uniqueness of $B_{1,2}$ and $B_{2,3}$.
Next, we need  prove $t_{1,2}+B_{1,2}t_{2,3}=t_{1,3}$. 
If we apply estimate \eqref{boundJ} on \eqref{boundaeinfach} for the first chain, we get
\begin{align}\label{topologicaltau}
\delta \tau_{j,k}:=&B_{j,k}\tau_k+t_{j,k} -\tau_{j}-\frac{B_{j,k}A_{k}+A_{j}}{2}\left(y_k-y_{j}\right)\quad ,\nonumber\\
\left|\delta \tau_{j,k}\right|<&2 c^{\tau}_JC_{|A|}   2^d \sqrt{\epsilon_J} \quad.
\end{align}
Due to  the estimates \eqref{boundJ} and \eqref{boundaeinfach} it holds for $\epsilon_J<\lambda^2(c^{A}_J 2^d)^{-2}$
\begin{align}\label{topologicalA}
\left|B_{1,2}A_{2}-A_{1}\right|\le&\left|A_{1}^{-1}B_{1,2}A_{2}-id\right|\left|A_{2}\right|
\le 2 C_{|A|}c^{A}_J 2^d \frac{\sqrt{\epsilon_J}}{\lambda}\quad ,\nonumber\\
\left|B_{1,2}A_{2}-B_{1,2}B_{2,3}A_{3}\right|\le&\left|B_{1,2}A_2\right|\left|id-A_2^{-1}B_{2,3}A_{3}\right|
\le 2\left|A_{1}\right| \left|A_{1}^{-1}B_{1,2}A_2\right|c^{A}_J 2^d \frac{\sqrt{\epsilon_J}}{\lambda}\nonumber\\
\le&2C_{|A|} \left(1+2c^{A}_J 2^d \frac{\sqrt{\epsilon_J}}{\lambda}\right)    c^{A}_J 2^d \frac{\sqrt{\epsilon_J}}{\lambda}\le 4 C_{|A|}c^{A}_J 2^d \frac{\sqrt{\epsilon_J}}{\lambda} \quad .
\end{align}
We also get
\begin{equation}\label{topologicalB}
\left|B_{1,2}\right|\le\left|A_{1}\right|\left|A_{1}^{-1}B_{1,2}A_2\right|\left|A_2^{-1}\right|
\le C_A C_{|A|}\left(1+ 2c^{A}_J 2^d \frac{\sqrt{\epsilon_J}}{\lambda}\right)\le 2C_A C_{|A|} \quad.
\end{equation}
Hence, we can estimate
\begin{align}
&\left|B_{1,2}B_{2,3}\tau_{3}+B_{1,2} t_{2,3}+t_{1,2}-\tau_{1}-\frac{B_{1,2}B_{2,3}A_{3}+A_{1}}{2}\left(y_{3}-y_{1}\right)\right|\nonumber\\
=&\left| \frac{B_{1,2}A_{2}+A_{1}}{2}\left(y_2-y_{1}\right)+B_{1,2}\frac{B_{2,3}A_{3}+A_{2}}{2}\left(y_{3}-y_{2}\right)
-\frac{B_{1,2}B_{2,3}A_{3}+A_{1}}{2}\left(y_{3}-y_{1}\right)\right|\nonumber\\
&+\left|B_{1,2}\delta \tau_{2,3}+\delta \tau_{1,2}\right|\nonumber\\
\le&\left|B_{1,2}\delta \tau_{2,3}\right|+\left|\delta \tau_{1,2}\right|+\frac{1}{2}\left|\left(B_{1,2}A_{2}-A_{1}\right)\left(y_{3}-y_{2}\right)\right|+\frac{1}{2}\left|\left(B_{1,2}A_{2}-B_{1,2}B_{2,3}A_{2}\right)\left(y_{2}-y_{1}\right)\right|\nonumber\\
\le&\left|B_{1,2}\right|\left|\delta \tau_{2,3}\right|+\left|\delta \tau_{1,2}\right|+\frac{3\lambda}{4}\left|B_{1,2}A_{2}-A_{1}\right|+
\frac{3\lambda}{4}\left|B_{1,2}A_{2}-B_{1,2}B_{2,3}A_{3}\right| \quad.
\end{align}
We use the estimates \eqref{topologicaltau}, \eqref{topologicalA} and \eqref{topologicalB}.
\begin{align}\label{topologicaltau1}
&\left|B_{1,2}B_{2,3}\tau_{3}+B_{1,2}t_{2,3}+t_{1,2}-\tau_{1}-\frac{B_{1,2}B_{2,3}A_{3}+A_{1}}{2}\left(y_{3}-y_{1}\right)\right|\nonumber\\
\le&\left(\left(4C_A C_{|A|}+2\right)c^{\tau}_J+9/2 C^{A}_J\right) 2^d C_{|A|}\sqrt{\epsilon_J} \quad. 
\end{align}
On the other hand, if we apply  the estimate  to the second chain, we get
\begin{align}\label{topologicaltau2}
\left|B_{1,2}B_{2,3}\tau_{3}+t_{1,3}-\tau_{1}-\frac{B_{1,2}B_{2,3}A_{3}+A_{1}}{2}\left(y_{3}-y_{1}\right)\right|
<&2 c^{\tau}_J C_{|A|}  2^d \sqrt{\epsilon_J}\quad.
\end{align}
Hence, if we denote $X:=B_{1,2}B_{2,3}\tau_{3}-\tau_{1}-\frac{B_{1,2}B_{2,3}A_{3}+A_{1}}{2}\left(y_{3}-y_{1}\right)$,
 we finally get with the estimates \eqref{topologicaltau1} and \eqref{topologicaltau2}
\begin{align}
\left|B_{1,2}t_{2,3}+t_{1,2}-t_{1,3}\right|\le&\left|B_{1,2}t_{2,3}+t_{1,2}+X\right|+\left|X-t_{1,3}\right|\nonumber\\
\le&\left((4C_AC_{|A|}+2)c^{\tau}_J+9/2c^{A}_J\right) 2^dC_{|A|}\sqrt{\epsilon_J} \quad. 
\end{align}
For $\epsilon_J< \left(\left((4C_A c_{|A|}+2)c^{\tau}_J+9/2 c^{A}_J\right)2^dC_{|A|}\right)^{-2}$ 
the difference between $B_{1,2}t_{2,3}+t_{1,2}$ and $t_{2,3}$ is smaller than $1$ and since both belong to the discrete set $\Z^d$, they have to be equal.
As in the case of $B$ the equation $B_{1,2}t_{2,3}+t_{1,2}=t_{1,3}$ implies the uniqueness of $t_{j,k}$.
\end{proof}
 

\paragraph{Proof of Theorem \ref{addingatoms}}
\begin{proof}
For sufficiently large $\lambda$ and small $\epsilon_J$ the conditions Lemma \ref{Theoremjumping}
are fulfilled for any $j$, if we take $y_{j}$ as the first point in Lemma \ref{Theoremjumping} $y_{j+1}$ as the second and the third point. 
Hence, we get a sequence $\affb_{j,j+1}$ fulfilling equation \ref{equationjumping} for every $j$.
Furthermore, we can apply Lemma \ref{Theoremjumping} on the three points $y_{k-1}$, $y_{k+1}$ and $y_{k}$
From the first part of Lemma \ref{Theoremjumping} we get the existence of  a reparametrisation $\affb_{k-1,k+1}$ between $y_{k-1}$ and $y_{k+1}$.
Due to equation \eqref{dreieckjumping} Lemma \ref{Theoremjumping} we get:
\begin{align}
B_{k-1,k}B_{k,k+1}=&B_{k-1,k+1}\quad,\nonumber\\
t_{k-1,k}+B_{k-1,k}t_{k,k+1}=&t_{k-1,k+1}\quad.
\end{align}
Therefore, we get equation \eqref{eqtopologicalchain}.
\end{proof}

\section{Proof of Theorem  \ref{TheoremLowerboundoftheensitywiththelocalminimizersofJ} }
The next lemma shows that low energy points are regular.
\begin{lemma}\label{lowenergyregular}
If $H(\chi)<\infty$ and $\hat{h}_\lambda(\chi,x)\le \epsilon\le \frac{1}{4}  \min \left\{C_1^{El}\det(E)^2,C_2^{El}|E|^2,\vartheta \det E \right\}$, then there exists $\hat{\aff}\in Gl_d(\R^d)\times \R^d$ such that 
$\hat{h}_\lambda(\chi,x)=h_\lambda(\hat{\aff,\chi,x})$ and $(x,\hat{\aff}9$  is $(\epsilon_\rho, \epsilon_J, C_A)$-regular; 
where
\begin{equation}
C_A= \frac{3^{d-1}|E|^{d-1}}{2^{d-2}\det E}   \quad, \quad \epsilon_\rho=   2 \vartheta \frac{\epsilon}{\det E}  \quad, \quad\epsilon_J=     \frac{4 \epsilon }{\det E}     \quad .
\end{equation}
Additionally $\det \hat{A}\le \frac{3}{2}\det E $
\end{lemma}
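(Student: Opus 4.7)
The plan is to verify the four regularity conditions of Definition \ref{definitionregularpoints} for the (yet to be produced) minimizer $\hat{\aff}$ directly from the smallness of $\epsilon$: each of the three summands of $h_\lambda$ controls one of the relevant quantities, and the hard-core condition is immediate from $H(\chi)<\infty$.

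Existence of $\hat{\aff}$ comes from compactness. Since $W$ is $\Z^d$-periodic, $J_\lambda$ depends on $\tau$ only modulo $\Z^d$, so we may restrict to $\tau\in[0,1)^d$. On the sublevel set $\{\aff:h_\lambda(\aff,\chi,x)\le\epsilon\}$, the coercivity of $F$ gives
\begin{equation}
C_1^{El}(\det E-\det A)^2+C_2^{El}\,\dist^2(A,E\,SO_d)\le F(A)\le\epsilon,
\end{equation}
and together with $\epsilon\le\tfrac14\min\{C_1^{El}(\det E)^2,C_2^{El}|E|^2\}$ this forces $\det A\in[\det E/2,\,3\det E/2]$ and $|A|\le 3|E|/2$ (by adding $|ER|\le |E|$ to the distance bound for the optimal $R\in SO_d$). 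The elementary inequality $\|A^{-1}\|\le |A|^{d-1}/\det A$ (up to the conventional constant relating Frobenius and operator norms) then yields $\|A^{-1}\|\le C_A$. Hence the sublevel set is compact in $Gl_d(\R)\times[0,1)^d$, and $h_\lambda(\cdot,\chi,x)$, being continuous, attains its infimum.

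It remains to check the four regularity conditions for $\hat{\aff}$. Condition (1) and $\det\hat{A}\le\tfrac32\det E$ have just been shown. For condition (2), the bound $\nu_\lambda(\hat{A},\chi,x)=\vartheta|\det\hat{A}-\rho_\lambda|\le\epsilon$ combined with $\det\hat{A}\ge\det E/2$ yields $|\rho_\lambda-\det\hat{A}|/\det\hat{A}\le 2\epsilon/(\vartheta\det E)$, which is the claimed $\epsilon_\rho$. For condition (3) note $J_\lambda(\hat{\aff},\chi,x)\le\epsilon$; moreover the assumption $\epsilon\le\vartheta\det E/4$ forces the $\epsilon_\rho$ just computed to satisfy $\epsilon_\rho\le 1/2$, hence $\rho_\lambda\ge\det\hat{A}/2\ge\det E/4$, and therefore $J_\lambda/\rho_\lambda\le 4\epsilon/\det E=\epsilon_J$. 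Condition (4) is immediate: any collision $|x_i-x_j|<s_0$ would give $V=\infty$, contradicting $H(\chi)<\infty$.

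The proof has no real obstacle beyond bookkeeping; it is essentially the structural observation that the three summands of $h_\lambda$ individually control the three quantitative conditions, while the three smallness hypotheses on $\epsilon$ are exactly what is needed to turn the absolute smallness of $F$, $\nu_\lambda$, and $J_\lambda$ into the relative bounds of Definition \ref{definitionregularpoints}.
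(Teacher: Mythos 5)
Your proof is correct and follows essentially the same route as the paper's: restrict $\tau$ to a period cell, use the coercivity of $F$ to confine the sublevel set to a compact region (yielding $C_A$ and $\det\hat{A}\le\tfrac32\det E$), extract the minimizer by continuity, and convert the absolute bounds $\nu_\lambda,J_\lambda\le\epsilon$ into the relative bounds via $\det\hat{A}\ge\det E/2$ and $\rho_\lambda\ge\det E/4$. Note that your derived density bound $2\epsilon/(\vartheta\det E)$ is what the paper's own proof actually yields as well; the factor $\vartheta$ appearing in the numerator of $\epsilon_\rho$ in the lemma statement is evidently a typo.
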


\begin{proof}
For a configuration of finite energy the hard core condition $|x_i-x_j|>s_o$ is fulfilled for all atoms. 
Furthermore, for all $\aff\in Gl(\R)\times \R^d$ satisfying
\begin{equation}
\epsilon\ge h_{\lambda}\left(\aff, \chi x\right )=F\left(A\right)+J_{\lambda}\left(\aff, \chi, x\right) +\nu_{\lambda}\left(\chi,A,x\right)
\end{equation}
it holds due to the positivity of $F$, $J_{\lambda}$ and $\nu_{\lambda}$ 
\begin{equation}\label{lowenergyregularverteilt}
F\left(A\right)\le \epsilon\quad, \quad
J_{\lambda}\left(\aff, \chi, x\right)\le \epsilon \quad ,\quad
\nu_{\lambda}\left(\chi,A,x\right) \le\epsilon\quad ,
\end{equation}
Hence, we have 
\begin{equation}
 \epsilon \ge F(A)\ge C_1^{El}\left(\det(E)-\det(A))\right)^2+C_2^{El}\text{dist}^2\left(A, E\; SO_d \right)\quad.
\end{equation}
Therefore, for $\epsilon\le \frac{1}{4}\min \left\{C_1^{El}\det(E)^2,C_2^{El}|E|^2   \right\}$ we have
\begin{align}\label{lowenergyregulara}
\frac{1}{2}\det E  &\le \det(A)\le \frac{3}{2}\det E \quad ,\nonumber\\
\frac{1}{2}|E|   &\le |A|\le \frac{3}{2}|E|\quad . 
\end{align}
Additionally we have
\begin{align}\label{lowenergyregularainvers}
 |A^{-1}|\le&|A|^{d-1} \det A^{-1}\le \frac{3^{d-1}|E|^{d-1}}{2^{d-2}\det E} \quad ,\nonumber\\
|A^{-1}|\ge & \left(|A|\det A^{-1}\right)^{\frac{1}{d-1}}\ge   \left(\frac{|E|}{ 3 \det E}\right)^{\frac{1}{d-1}}\quad.
\end{align}
Because $h_{\lambda}\left(\cdot, \chi, x\right )$ is periodic in $\tau$, we can restrict $\tau$ to the compact set $[0,1]^d$. 
Hence, $Gl_d^\epsilon:=\left\{\aff\in Gl_d(\R)|h_\lambda(\aff,\chi,x)\le \epsilon , \tau\in [0,1]^d\right\}$ is a compact subset of $Gl_d(\R)\times\R^d$.
Since it holds $\hat{h}_\lambda (\chi,x)= \inf_{\aff}\left\{h_{\lambda}\left(\aff, \chi, x\right )  \right\}<\epsilon $, the set $Gl_d^\epsilon\times [0,1]^d$ is not empty.  Hence, the continuous function $h_{\lambda}\left(\cdot, \chi, x\right )$ attains a minimum  $\hat{\aff}=(\hat{A},\hat{\tau})$ on the compact set $Gl_d^\epsilon\times [0,1]^d$ that is per definition the global minimizer of  $h_{\lambda}\left(\cdot, \chi, x\right )$.   Due to the estimates \eqref{lowenergyregularverteilt} $\hat{\aff}$ satisfies
\begin{equation}\label{lowenergyregulardenisty}
 \vartheta \left|\det A-\rho_\lambda(\chi,x)\right|\le  \nu_{\lambda}\left(A,\chi,x\right)\le 2 \frac{\epsilon}{\det E} \det A \quad. 
\end{equation}
If we use the estimates \eqref{lowenergyregulardenisty} and \eqref{lowenergyregularverteilt}, we obtain for $\epsilon\le \frac{1}{4} \vartheta \det E $
\begin{equation}
J_\lambda(\aff,\chi,x)<\epsilon < \frac{\epsilon}{2^{-1}\det A-\vartheta^{-1}\epsilon}\rho_\lambda(\chi,x)\le    \frac{4 \epsilon }{\det E}\rho_\lambda(\chi,x)
\end{equation}
\end{proof}

$J_\lambda$ is locally convex in $\aff$ for regular pairs.

\begin{lemma}\label{hjlocalconvexity}
For all $C_A$ there exists  $\hat{\lambda}, \epsilon_J$ such that for all $\lambda>\hat{\lambda}$,  for all  $(C_A, \epsilon_\rho, \epsilon_J)$-regular  $(x,\aff)=(x,(A,\tau))\in \in B_{2\lambda}(\Omega) \times Gl_d(\R)\times \R^d$ and all test matrices $\affm=(M,\mu)\in \R^{d\times d}\times \R^d$ it holds
\begin{align}
\partial_\aff^2 J_\lambda(\aff,\chi, x)[\affm]\ge& C_{Con} \left\|A^{-1}\right\|^2 \rho_\lambda \|\affm\|_\lambda^2\quad ,
\end{align}
where $C_{con}$ is defined by
\begin{equation}
C_{con}:= c_\Theta^0 \min\left\{\frac{1}{12}  , \frac{c_\Theta^0 C_{\varphi}^2}{4\left(9+d\right)w_{d-1}^2 4^d}\frac{\rho_\lambda^2}{\det A^2} \right\}\quad .
\end{equation}
\end{lemma}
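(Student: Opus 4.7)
The plan is to write $J_\lambda(\aff,\chi,x) = \|A^{-1}\|^2\, G(\aff)$ with $G(\aff) := (C_\varphi \lambda^d)^{-1} \sum_i W(A(x_i-x)+\tau)\,\varphi(\lambda^{-1}|x_i-x|)$, apply the product rule to the second variation in direction $\affm=(M,\mu)$, and identify the piece $\|A^{-1}\|^2\,\partial_\affm^2 G(\aff)$ as the dominant positive term. The two remaining contributions, namely $\partial_\affm^2(\|A^{-1}\|^2)\,G(\aff)$ and the cross term $2\,\partial_\affm(\|A^{-1}\|^2)\,\partial_\affm G(\aff)$, will be absorbed as higher-order errors, using the regularity bound $G(\aff) = J_\lambda/\|A^{-1}\|^2 \le \epsilon_J \rho_\lambda/\|A^{-1}\|^2$ together with the Lojasiewicz-type inequality $|\nabla W(z)|^2 \le C\, W(z)$ which is valid on $B_{\Theta_W}(\Z^d)$ by the local convexity assumption.

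First I would split atoms into $\chi^{reg}_{\aff,\beta,x}$ and $\chi^{irr}_{\aff,\beta,x}$ for a parameter $\beta \in (0,\Theta_W)$ to be chosen later, and invoke Lemma~\ref{Dichteregular} to conclude $\rho^{irr}_{\aff,\beta}(x) \le C\,\epsilon_J/\beta^2$. On regular atoms the argument $A(x_i-x)+\tau$ lies within $\Theta_W$ of a lattice point, so $y^{T}\nabla^2 W(A(x_i-x)+\tau)\,y \ge c^0_\Theta\,|y|^2$, and hence
\begin{align*}
\partial_\affm^2 G(\aff) \ge \frac{c^0_\Theta}{C_\varphi \lambda^d} \sum_{x_i \in \chi^{reg}} |M(x_i-x)+\mu|^2\,\varphi_i - \frac{c^1_\Theta}{C_\varphi \lambda^d} \sum_{x_i \in \chi^{irr}} |M(x_i-x)+\mu|^2\,\varphi_i,
\end{align*}
where $\varphi_i := \varphi(\lambda^{-1}|x_i-x|)$. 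The next step is a lattice-to-integral passage in the spirit of Lemma~\ref{condistapp}: each regular atom is identified with its unique nearest lattice point of $\chi_\aff+x$ (uniqueness follows from the hard-core condition together with $\beta < s_0/2$), one replaces $x_i$ by that lattice site at a per-term cost of order $\beta/\lambda$, then approximates the resulting lattice sum by an integral against $\det A\,dy$. The mixed term $\int (My)\cdot\mu\,\varphi(\lambda^{-1}|y|)\,dy$ vanishes by the even symmetry of $\varphi$, and the diagonal computation gives $\int |My|^2\,\varphi(\lambda^{-1}|y|)\,dy = (C_{\varphi 2}/d)\,\lambda^{d+2}\,\|M\|^2$ and $\int |\mu|^2\,\varphi(\lambda^{-1}|y|)\,dy = C_\varphi\,\lambda^d\,|\mu|^2$, yielding a main lower bound of order $\det A \cdot \lambda^d\,(\lambda^2\|M\|^2 + |\mu|^2)$.

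The irregular contribution is controlled by $|M(x_i-x)+\mu|^2 \le C\,(\lambda^2\|M\|^2 + |\mu|^2)$ on the support of $\varphi_i$ together with $\sum_{irr} \varphi_i \le C_\varphi\,\lambda^d\,\rho^{irr}_{\aff,\beta}(x) \le C\,C_\varphi\,\lambda^d\,\epsilon_J/\beta^2$. Choosing $\beta$ so that the irregular loss is a small fraction of the regular gain (this is where the threshold $c^0_\Theta/c^1_\Theta$ enters), and using $\rho_\lambda \ge (1-\epsilon_\rho)\det A$ from the regularity definition, gives the clean bound $\|A^{-1}\|^2\,\partial_\affm^2 G(\aff) \ge c^0_\Theta\,C'\,\|A^{-1}\|^2\,\rho_\lambda\,\bigl(\lambda^2\|M\|^2 + |\mu|^2\bigr)$, which matches $C_{Con}\,\|A^{-1}\|^2\,\rho_\lambda\,\|\affm\|_\lambda^2$. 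The two prefactor pieces are then absorbed: $|\partial_\affm^2(\|A^{-1}\|^2)|\,G(\aff)$ is of order $\|A^{-1}\|^4\,\|M\|^2 \cdot \epsilon_J\,\rho_\lambda/\|A^{-1}\|^2 = \epsilon_J\,\|A^{-1}\|^2\,\rho_\lambda\,\|M\|^2$, while the cross term is handled by Cauchy--Schwarz applied to $\partial_\affm G = (C_\varphi \lambda^d)^{-1}\sum \nabla W_i \cdot (M(x_i-x)+\mu)\,\varphi_i$ after using $|\nabla W|^2 \le C\,W$ on the regular atoms and the same coarse bound on the irregular ones, both pieces being absorbable once $\epsilon_J$ is chosen small enough.

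The main obstacle will be the constant tracking: reproducing the precise $C_{con}$ stated in the lemma demands a careful optimization in $\beta$, balancing the $\epsilon_J/\beta^2$ loss from the irregular atoms against the surface-area-type error of order $w_{d-1}\,\beta\,\lambda^{d-1}$ incurred by replacing the restricted lattice sum on $B_{2\lambda}(x)$ with the full integral. The $\rho_\lambda^2/\det A^2$ factor inside $C_{con}$ is the signature of this balance, since the regular-atom density appears both linearly as the positive coefficient $\rho_\lambda$ and quadratically as the threshold below which the Lojasiewicz-type cross-term correction can still be absorbed without sacrificing the sign.
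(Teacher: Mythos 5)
Your overall architecture --- product rule on $J_\lambda=\|A^{-1}\|^2 G$, absorbing the two prefactor terms as $O(\lambda^{-1})$ errors via $|\nabla W|^2\le \alpha_\nabla W$ and Cauchy--Schwarz, splitting into regular and irregular atoms with Lemma \ref{Dichteregular}, and using the local convexity of $W$ on the regular atoms --- coincides with the paper's proof up to the point where one must bound $\sum_{x_i\in\chi^{reg}}|M(x_i-x)+\mu|^2\varphi(\lambda^{-1}|x_i-x|)$ from below. The gap is in how you do that. You identify each regular atom with its nearest lattice site and then replace the resulting lattice sum by $\det A\int(\cdot)\,dy$. But the regular atoms are in bijection with only a \emph{subset} of the sites of $\chi_\aff+x$: the regularity hypotheses permit a fraction of sites of order $\epsilon_\rho+C\epsilon_J/\beta^2$ to be vacant, and nothing prevents the vacancies from concentrating exactly where $|M(y-x)+\mu|^2$ is largest (say in the region where $|e_M\cdot(y-x)|$ is close to $2\lambda$). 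Subtracting the worst-case vacant contribution costs about $(8\lambda^2|M|^2+2|\mu|^2)(\epsilon_\rho+C\epsilon_J/\beta^2)\det A\,C_\varphi\lambda^d$ against a main term of only $\bigl(\tfrac{C_{\varphi2}}{dC_\varphi}\lambda^2\|M\|^2+|\mu|^2\bigr)\det A\,C_\varphi\lambda^d$; since the ratio of supremum to mean for the $M$-part is of order $dC_\varphi/C_{\varphi2}$, your argument closes only if $\epsilon_\rho$ lies below an explicit dimension-dependent threshold. The lemma leaves $\epsilon_\rho$ unquantified (and the paper elsewhere works with $\epsilon_\rho=1/8$, typically above that threshold), so the step ``approximate the resulting lattice sum by an integral'' is unjustified as written. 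Note also that the vanishing of the mixed term by symmetry holds for the full integral but is destroyed by the vacancy correction.

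The paper circumvents this with a genuinely different device: it centers the quadratic form at the weighted average $\bar x$ of the regular atoms, so the cross term vanishes identically by the definition of $\bar x$ (no lattice symmetry or integral approximation needed), and it bounds $\sum\bigl(e_M\cdot(x_i-\bar x)\bigr)^2\varphi$ from below by a packing/rearrangement argument: since at most one regular atom sits within $\beta\le s_0/3$ of each lattice site, attaining the density $\rho^{reg}_{\aff,\beta}$ forces atoms out to distance $h\gtrsim C_\varphi\lambda\rho^{reg}/(w_{d-1}2^d\det A)$ from any hyperplane, which yields the lower bound proportional to $\lambda^2(\rho^{reg})^3/\det A^2$. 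That is where the factors $w_{d-1}^2 4^d$ and $\rho_\lambda^2/\det A^2$ in $C_{con}$ actually originate --- not, as you conjectured, from balancing a surface-layer error in a sum-to-integral passage --- and the argument is robust for an arbitrary vacancy fraction, only degrading the constant. To salvage your route you would have to add an explicit smallness hypothesis on $\epsilon_\rho$ and carry the vacancy deficit through; you would then obtain a valid convexity estimate, but with a constant of a different form than the stated $C_{con}$.
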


\begin{proof}
The second derivative $\partial_\aff^2 J_\lambda$ tested by $\affm=(M,\mu)\in \R^{d\times d}\times \R^d$ is given by
\begin{align}
\partial_\aff^2 J_\lambda(\aff,\chi,x)[\affm]=&\frac{\left\|A^{-1}\right\|^2}{C_{\varphi}\lambda^d}\sum_{i} \nabla^2W(A\left(x_i-x\right)+\tau)[M(x_i-x)+\mu]\varphi \nonumber\\
&+2\frac{\partial_\aff\left\|A^{-1}\right\|^2[M]}{C_{\varphi}\lambda^d}\sum_{i} \left<\nabla W(A\left(x_i-x\right)+\tau),M(x_i-x)+\mu\right>\varphi\nonumber\\
&+\frac{\partial_\aff^2\left\|A^{-1}\right\|^2[M]}{C_{\varphi}\lambda^d}\sum_{i} W(A\left(x_i-x\right)+\tau)\varphi \quad .
\end{align}
The two last terms lower are of order $O(\lambda^{-1})\|\affm\|_\lambda$. Furthermore, we can split the first sum into one sum over the regular atoms  $\chi_{\aff,\beta, x}^{reg}$ with $\beta=\min\left\{|A|^{-1}\Theta_W, s_o/3\right\}$ and one sum over the irregular atoms, and get
\begin{align}
C_{\varphi}\lambda^d\partial_\aff^2 J_\lambda(\aff,\chi,x)[\affm]=&\left\|A^{-1}\right\|^2\sum_{x_i\in \chi_{\aff,\beta, x}^{reg}} \nabla^2W(A\left(x_i-x\right)+\tau)[M(x_i-x)+\mu]\varphi \nonumber\\
&+\left\|A^{-1}\right\|^2\sum_{x_i\in \chi_{\aff,\beta, x}^{irr}} \nabla^2W(A\left(x_i-x\right)+\tau)[M(x_i-x)+\mu]\varphi \nonumber\\&-O(\lambda^{-1})\|\affm\|_\lambda^2  .
\end{align}
 On the one hand all the regular atoms satisfy
\begin{align}
dist(x_i,\chi_\aff+x)\le&\beta \quad ,\nonumber\\
dist (A(x_i-x)+\tau, \Z^d)\le & \beta |A| \le \Theta_W \quad ,\nonumber\\
c_\Theta^0(M(x_i-x)+\mu)^2\le&\nabla^2W(A\left(x_i-x\right)+\tau)[M(x_i-x)+\mu]\quad .
\end{align}
Since $W$ is two times differentiable and periodic, there is an upper bound for its second derivative, which we can use to bound the contribution of the irregular atoms. Hence, we get
\begin{align}
\partial_\aff^2 J_\lambda(\aff,\chi,x)[\affm]\ge&c_\Theta^0\frac{\left\|A^{-1}\right\|^2}{C_{\varphi}\lambda^d}\sum_{x_i\in \chi_{\aff,\beta, x}^{reg}} (M(x_i-x)+\mu)^2\varphi\left(\lambda^{-1}\left|x_i-x\right|\right)\nonumber\\
&-8\left\|A^{-1}\right\|^2\|\nabla^2W\|_\infty \rho_{\aff,\beta}^{irr}(x)\|\affm\|_\lambda^2-O(\lambda^{-1})\|\affm\|_\lambda^2\quad .
\end{align}
We define the average particle position by
\begin{equation}
\bar{x}:=\left(\rho_{\aff,\beta}^{reg}(x)\right)^{-1}\frac{1}{C_{\varphi}\lambda^d}\sum_{x_i\in \chi_{\aff,\beta, x}^{reg}}x_i\varphi\left(\lambda^{-1}\left|x_i-x\right|\right) \quad .
\end{equation}
Using this definition we get
\begin{align}
\partial_\aff^2 J_\lambda(\aff,\chi,x)[\affm]\ge&c_\Theta^0\frac{\left\|A^{-1}\right\|^2}{C_{\varphi}\lambda^d}\sum_{x_i\in \chi_{\aff,\beta, x}^{reg}} \left( \left(M(x_i-\bar{x})\right)^2+\left(M(\bar{x}-x)+\mu\right)^2\right)\varphi\nonumber\\
&-\left\|A^{-1}\right\|^2\|\nabla^2W\|_\infty \rho_{\aff,\beta}^{irr}(x)+O(\lambda^{-1})\|\affm\|_\lambda^2 \quad .
\end{align}
Because $(M(\bar{x}-x)+\mu)^2$ is independent of $i$, this sum can be expressed with the density of regular atoms.
If we denote by $e_M$ the eigenvector  the largest eigenvalue  of $M^TM$, we get
\begin{align}\label{convexpre1}
\partial_\aff^2 J_\lambda(\aff,\chi,x)[\affm]\ge&c_\Theta^0\frac{\left\|A^{-1}\right\|^2}{C_{\varphi}\lambda^d}\sum_{x_i\in \chi_{\aff,\beta, x}^{reg}} 
\left(e_m(x_i-\bar{x})\right)^2|M|^2\varphi \nonumber\\
&+c_\Theta^0\left\|A^{-1}\right\|^2 (M(\bar{x}-x)+\mu)^2\rho_{\aff,\beta}^{reg}(x)\nonumber\\
&-8\left\|A^{-1}\right\|^2\|\nabla^2W\|_\infty \rho_{\aff,\beta}^{irr}(x)\|\affm\|_\lambda^2+O(\lambda^{-1})\|\affm\|_\lambda^2 \quad .
\end{align}
We concentrate on the calculation of
\begin{equation}
X:=\frac{1}{C_{\varphi}\lambda^d}\sum_{x_i\in \chi_{\aff,\beta, x}^{reg}} 
\left(e_m(x_i-\bar{x})\right)^2\varphi\left(\lambda^{-1}\left|x_i-x\right|\right) \quad .
\end{equation}
Due to $\beta\le s_o/3$ there can be only one regular atom in $B_\beta(A^{-1}(z_i-\tau)+x)$ for any $z_i$
Therefore, the regular atoms can not sit all on the plain $P:=\{y\in R^d |e_m (y-\bar{x})=0\}$. 
We call $h$ the minimal distance to the plain $P$ up to which we have to fill atoms to reach the density $\rho_{\aff,\beta}^{reg}(x)$.
We define the cylinder
\begin{equation}
Z_P:=\left\{y\left| \left|\left<e_M,y-x\right>\right|\le 2\lambda \right.\right\} \quad .
\end{equation}
The characteristic function $1_{Z_P}$ of this set satisfies:
\begin{equation}
1_{Z_P}(x)\ge \varphi\left(\lambda^{-1}\left|x_i-x\right|\right) \quad.
\end{equation}
Hence, it holds
\begin{align}
C_{\varphi}\lambda^d\rho_{\aff,\beta}^{reg}(x)=&
\sum_{x_i\in \chi_{\aff,\beta, x}^{reg}} \varphi\left(\lambda^{-1}\left|x_i-x\right|\right)
\le  \sum_{x_i\in \chi_{\aff,\beta, x}^{reg}} 1_{Z_P}(x_i)
\le 2 w_{d-1} (2\lambda)^{d-1}\det A h \quad .
\end{align}
and we get
\begin{equation}
h\ge \frac{C_{\varphi} \lambda\rho_{\aff,\beta}^{reg}(x)}{w_{d-1}2^d\det A} \quad.
\end{equation}
Since for any valley with distance less then $h$ from the plain $P$, that does not have a regular atom, there needs to be an regular atom with larger distance to  reach the same density. Filling the whole cylinder gives us a lower bound for $X$
\begin{align}
X\ge& \frac{1}{C_{\varphi}\lambda^d}\int_0^h 2\tilde{h}^2  w_{d-1} (2\lambda)^{d-1}\det A d\tilde{h}
\ge \frac{C_{\varphi}^2}{3w_{d-1}^2 4^d}  \lambda^2 \det A^{-2}\left(\rho_{\aff,\beta}^{reg}\right)^3 \quad .
\end{align}
We apply this on the estimate \eqref{convexpre1} and get
\begin{align}\label{convexpre2}
\partial_\aff^2 J_\lambda(\aff,\chi,x)[\affm]
\ge&c_\Theta^0\left\|A^{-1}\right\|^2|M|^2 \frac{C_{\varphi}^2}{3w_{d-1}^24^d}  \lambda^2 \det A^{-2}\left(\rho_{\aff,\beta}^{reg}\right)^3\nonumber\\
&+c_\Theta^0\left\|A^{-1}\right\|^2 (M(\bar{x}-x)+\mu)^2\rho_{\aff,\beta}^{reg}(x)\nonumber\\
&-\left\|A^{-1}\right\|^2\|\nabla^2W\|_\infty \rho_{\aff,\beta}^{irr}(x)+O(\lambda^{-1})\|\affm\|_\lambda^2 \quad .
\end{align}
 We treat two cases. In case one it holds $|\mu|<3\lambda |M|$. In case two holds $|\mu|\ge 3 \lambda |M|$.
For case one we calculate
\begin{equation}
(9+d)\lambda^2|M|^2\ge d \lambda^2|M|^2+|\mu|^2= \lambda^2\|M\|^2+|\mu|^2=\|\affm\|_\lambda^2 \quad.
\end{equation}
We apply this to the estimate \eqref{convexpre2} and get
\begin{align}\label{convexpre3}
\partial_\aff^2 J_\lambda(\aff,\chi,x)[\affm]\ge&\frac{c_\Theta^0C_{\varphi}^2}{3\left(9+d\right)w_{d-1}^24^d}\left\|A^{-1}\right\|^2    \det A^{-2}\left(\rho_{\aff,\beta}^{reg}\right)^3 \|\affm\|_\lambda^2\nonumber\\
&-8\left\|A^{-1}\right\|^2\|\nabla^2W\|_\infty \rho_{\aff,\beta}^{irr}(x)+O(\lambda^{-1})\|\affm\|_\lambda^2\quad .
\end{align}
Since every atom contributing to the average $\bar{x}$ is in $B_{2\lambda}(x)$, also $\bar{x}$ itself has to be in $B_{2\lambda}(x)$. Therefore, we obtain for case two
\begin{equation}
(M(\bar{x}-x)+\mu)^2\ge \left(|\mu|-|M| |\bar{x}-x|\right)^2\ge \left(|\mu|-|M| 2\lambda\right)^2\ge \frac{1}{9} |\mu|^2 \quad.
\end{equation}
With estimate \eqref{convexpre2} we get
\begin{align}\label{convexpre4}
\partial_\aff^2 J_\lambda(\aff,\chi,x)[\affm]\ge&c_\Theta^0\left\|A^{-1}\right\|^2\|M\|^2 \frac{dC_{\varphi}^2}{3d w_{d-1}^24^d}  \lambda^2 \det A^{-2}\left(\rho_{\aff,\beta}^{reg}\right)^3 +\frac{c_\Theta^0}{9}\left\|A^{-1}\right\|^2 |\mu|^2 \rho_{\aff,\beta}^{reg}(x)\nonumber\\
&-\left\|A^{-1}\right\|^2\|\nabla^2W\|_\infty \rho_{\aff,\beta}^{irr}(x)+O(\lambda^{-1})\|\affm\|_\lambda^2\quad.
\end{align}
We summarize the inequalities \eqref{convexpre3} and \eqref{convexpre4} to get
\begin{align}\label{convexpre5}
\partial_\aff^2 J_\lambda(\aff,\chi,x)[\affm]\le& c_\Theta^0\left\|A^{-1}\right\|^2 \rho_{\aff,\beta}^{reg}(x) \alpha  \|\affm\|_\lambda^2-\left\|A^{-1}\right\|^2\|\nabla^2W\|_\infty \rho_{\aff,\beta}^{irr}(x)+O(\lambda^{-1})\|\affm\|_\lambda^2\quad.
\end{align}
where $\alpha$ is defined by
\begin{equation}
\alpha:= \min\left\{ \frac{1}{9}  , \frac{c_\Theta^0 C_{\varphi}^2}{3\left(9+d\right)w_{d-1}^2 4^d}   \frac{\left(\rho_{\aff,\beta}^{reg}\right)^2}{(\det A)^2} \right\}\quad .
\end{equation} 
We know from Lemma \ref{Dichteregular} with $\beta:=\min\{|A|^{-1}\Theta_W, s_o/3\}$ that it holds
\begin{align}
\rho_{\aff, \beta}^{irr}(x)\le&\frac{1}{C_0^W \min\{|A|^{-1}\Theta_W, s_o/3\}^2 } J_\lambda(\aff,\chi,x)\quad ,  \\
\rho_{\aff,\beta}^{reg}(x)\ge&\rho_\lambda(\chi,x)-\frac{1}{C_0^W \min\{|A|^{-1}\Theta_W, s_o/3\}^2 } J_\lambda(\aff,\chi,x) \quad . 
\end{align}
Therefore, we can control $\rho_{\aff ,\beta}^{irr}$ and $\rho_\lambda-\rho_{ \aff, \beta}^{reg}(x)$ for sufficiently low $\epsilon_J$ and large $\lambda$ arriving at
\begin{equation}
\partial_\aff^2 J_\lambda(\aff,\chi,x)[\affm]\ge \frac{7}{8}\alpha c_\Theta^0 \left\|A^{-1}\right\|^2 \rho_\lambda (x)   \|\affm\|_\lambda^2 \quad .
\end{equation}
\end{proof}


\begin{lemma}\label{estimatesumnablaW2}
 For all configurations $\chi$ and all $\aff\in Gl_d(\R)\times \R^d$ we have
\begin{equation}
J_\lambda(\aff,\chi,x)\ge\alpha_\nabla^{-1} \left\|A^{-1}\right\|^2C_{\varphi}^{-1}\lambda^{-d} \sum_i |\nabla W|^2(A\left(x_i-x\right)+\tau) \varphi\left(\lambda^{-1}|x-x_i|\right)  \quad ,
\end{equation}
where
\begin{equation}
 \alpha_\nabla:=64\max\left\{ \frac{ \|\nabla W\|_{\infty}^2}{C^W_0 \Theta_W^2  }, \frac{|c_\Theta^1|^2}{c_\Theta^0}\right\} \quad .
 \end{equation}
 \end{lemma}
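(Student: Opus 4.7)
The plan is to reduce the lemma to the pointwise inequality
\begin{equation*}
|\nabla W(z)|^2 \le \alpha_\nabla\, W(z) \qquad \text{for all } z\in\R^d.
\end{equation*}
Once this is known, multiplying by $\|A^{-1}\|^2 C_\varphi^{-1}\lambda^{-d}\varphi(\lambda^{-1}|x-x_i|)\ge 0$, evaluating at $z=A(x_i-x)+\tau$, and summing over $i$ gives the claim directly, since the right-hand side of the resulting inequality is exactly $\alpha_\nabla\, J_\lambda(\aff,\chi,x)$.

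To establish the pointwise inequality I would split $\R^d$ into the "near" region $B_{\Theta_W}(\Z^d)$ and its complement, because the hypotheses on $W$ are of two qualitatively different kinds (local convexity versus global bounds). On the near region, fix $z$ and let $z_0\in\Z^d$ be a nearest lattice point, so $|z-z_0|\le \Theta_W$. Since $\nabla W(z_0)=0$ (lattice points are minima of the periodic $W$) and the segment from $z_0$ to $z$ lies in $B_{\Theta_W}(\Z^d)$, the local convexity bound $y^\top\nabla^2W(\xi)y\le c_\Theta^1 y^2$ and a first-order Taylor expansion of $\nabla W$ give $|\nabla W(z)|\le c_\Theta^1|z-z_0|$. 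A second-order Taylor expansion of $W$ from $z_0$ using the lower bound $c_\Theta^0 y^2\le y^\top\nabla^2W(\xi)y$ yields $W(z)\ge \tfrac12 c_\Theta^0|z-z_0|^2$. Combining these two and using $\mathrm{dist}(z,\Z^d)=|z-z_0|$ gives
\begin{equation*}
|\nabla W(z)|^2\le (c_\Theta^1)^2|z-z_0|^2\le \tfrac{2(c_\Theta^1)^2}{c_\Theta^0}\,W(z),
\end{equation*}
which is bounded by $\alpha_\nabla W(z)$.

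On the far region $z\notin B_{\Theta_W}(\Z^d)$, I would use the global bound $|\nabla W(z)|\le\|\nabla W\|_\infty$ together with the coercivity $W(z)\ge C_0^W\,\mathrm{dist}^2(z,\Z^d)\ge C_0^W\Theta_W^2$. This gives
\begin{equation*}
|\nabla W(z)|^2\le \|\nabla W\|_\infty^2\le \tfrac{\|\nabla W\|_\infty^2}{C_0^W\Theta_W^2}\,W(z),
\end{equation*}
again bounded by $\alpha_\nabla W(z)$ by definition of $\alpha_\nabla$ (the factor $64$ leaves ample slack for both cases).

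There is no serious obstacle here; the only technical point is confirming that both Taylor expansions are valid on the near region, which relies on the segment from $z_0$ to $z$ lying inside the ball where the local convexity hypothesis is in force. Well-definedness of $z_0$ (when $z$ is equidistant from several lattice points) is irrelevant because any nearest $z_0$ works. After the pointwise inequality is in hand, multiplying by nonnegative weights and summing is a one-line argument, and the constants slot directly into the statement of the lemma.
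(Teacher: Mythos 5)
Your proposal is correct and follows essentially the same route as the paper: both reduce the lemma to the pointwise bound $|\nabla W|^2\le\alpha_\nabla W$ by splitting into the region within distance $\Theta_W$ of $\Z^d$ (local convexity and vanishing of $\nabla W$ at lattice minima) and its complement (global gradient bound plus coercivity), then multiply by the nonnegative weights and sum. Your near-region computation $|\nabla W(z)|^2\le (c_\Theta^1)^2|z-z_0|^2\le \tfrac{2(c_\Theta^1)^2}{c_\Theta^0}W(z)$ is in fact the corrected form of the paper's slightly garbled intermediate line, and it is consistent with the stated constant $\alpha_\nabla$.
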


\begin{proof}
We bound  $W(A\left(x_i-x\right)+\tau)$ from below with $(\nabla W)^2(A\left(x_i-x\right)+\tau)$.  
We define for every atom
\begin{equation}
\delta z_i:=\dist \left(A(x_i-x)+\tau,\Z^d\right) \quad .
\end{equation}
Due to the bounds on the second derivative of $W$ in the convex region we get for atoms with $\delta z_i\le \Theta_W$ 
\begin{equation}
(\nabla W)^2(\delta z_i)\le c_\Theta^1 |\delta z_i|^2\le \frac{2c_\Theta^1}{c_\Theta^0} W(\delta z_i) \quad .
\end{equation}
Due to the general bound $\|W\|_\infty$ we get for atoms with $\delta z_i\ge \Theta_W$ 
\begin{equation}
(\nabla W)^2(\delta z_i)\le \|\nabla W\|_{\infty}^2\le \frac{2 \|\nabla W\|_{\infty}^2}{C^W_0 \Theta_W^2  } W(\delta z_i) \quad .
\end{equation}
Hence, for the maximum $\alpha_\nabla:=64\max\left\{ \frac{ \|\nabla W\|_{\infty}^2}{C^W_0 \Theta_W^2  }, \frac{c_\Theta^1}{c_\Theta^0}\right\}$ we get for all atoms
\begin{align}\label{nablaWquadratkleinerW}
(\nabla W)^2(\delta z_i)\le& \alpha_\nabla W(\delta z_i)\quad ,\nonumber\\
J_\lambda(\aff,\chi,x)\ge&\alpha_\nabla^{-1} \left\|A^{-1}\right\|^2C_{\varphi}^{-1}\lambda^{-d} \sum_i |\nabla W|^2(A\left(x_i-x\right)+\tau) \varphi \quad .
\end{align}
\end{proof}


\begin{lemma}\label{localaffgradient}
For all $C_A$ there exists   $\hat{\lambda}>0$, $\epsilon_\rho>0$, $\epsilon_J>0$, $\delta_\aff>0$ such that for all $\lambda>\hat{\lambda}$,
$x_0 \in \Omega$ and $\aff_0\in Gl_d(\R)\times \R^d$
the following holds: If $(x_0\,\aff_0)$ is $ (C_A, \epsilon_\rho,\epsilon_J)$-regular
, then holds 
\begin{enumerate}[1)]
\item There exists a unique local minimizer of $J_\lambda$ 
\begin{equation}\label{Jlocalminimizerreichweite}
\tilde{\aff}=\argmin\{J_\lambda(\aff,\chi,x) |\aff\in Gl_d(\R)\times \R^d \text{with } \|\aff-\aff_0\|_\lambda<\delta_\aff \} \quad ,
\end{equation}
\item The local minimizer fulfills
\begin{align}
\left\|\aff_0-\tilde{\aff}\right\|_\lambda\le& \left(  \frac{1}{2} C_{Con} \|A_0^{-1}\|^2\rho_\lambda \right)^{-1/2}\sqrt{J_\lambda}(\aff_0,\chi,x)\quad,\label{Jlocalminimizerabstand}
\\ 
J_\lambda(\aff_0,\chi,x)\ge & J_\lambda(\tilde{\aff},\chi,x)+ \frac{1}{2} C_{Con} \left(\|A_0^{-1}\|^2+O(\lambda^{-1})\right) \rho_\lambda \left\|\aff_0-\tilde{\aff}\right\|_\lambda^2 \quad .\label{JlocalminimizerabstandJ}
\end{align}
\item For every differentiable curve $(x(s),\chi (s))$ with $x(0)=x$ and $\chi(0)=\chi$ there exists a neighborhood of $s=0$ such that inside this neighborhood $U$ there is a differentiable function $\tilde{\aff}: U\rightarrow Gl_d(\R)\times \R^d$ that $\aff(s)$  is a local minimizer of $J_\lambda(\cdot, \chi(s),x(s))$ for all $s\in U$ and fulfills
\begin{align}\label{Jlocalminimizergradient}
 \left\| \frac{d\tilde{\aff}(s)}{ds}  \right\|_\lambda \le&\frac{2\sqrt{8}|\tilde{A}(s)| }{ C_{Con}  \rho_\lambda } \left(\|\nabla^2 W\|_\infty+O(\lambda^{-1})\right)  \left(\sum_i \left|\frac{dx_i}{ds}(s)-\frac{dx}{ds}(s)\right| \varphi  \right) \nonumber\\ 
&+ O(\lambda^{-1}) \left(\sum_i \left(\left|\frac{dx_i}{ds}(s)-\frac{dx}{ds}(s)\right| |\nabla\tilde{\varphi}|  \right)\right) \quad .
\end{align}
\end{enumerate}
\end{lemma}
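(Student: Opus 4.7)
}
The strategy is to exploit the uniform convexity of $J_\lambda(\cdot,\chi,x)$ on a neighborhood of a regular pair $(x_0,\aff_0)$ established by Lemma~\ref{hjlocalconvexity}, and then invoke the implicit function theorem to pass from a single point to a differentiable branch of local minimizers parametrized by the configuration and base point.

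First I would choose $\delta_\aff$ small enough that for every $\aff$ with $\|\aff-\aff_0\|_\lambda<\delta_\aff$ the pair $(x_0,\aff)$ remains regular with slightly worsened constants still satisfying the hypothesis of Lemma~\ref{hjlocalconvexity}. This gives the uniform Hessian bound $\partial_\aff^2 J_\lambda(\aff,\chi,x)[\affm]\ge C_{Con}\|A^{-1}\|^2\rho_\lambda\|\affm\|_\lambda^2$ on the whole ball $B_{\delta_\aff}(\aff_0)$. Combined with the fact that $\partial_\aff J_\lambda(\aff_0,\chi,x)$ is controlled in terms of $\sqrt{J_\lambda(\aff_0,\chi,x)}$ via Lemma~\ref{estimatesumnablaW2} (Cauchy--Schwarz applied to the sum defining $\partial_\aff J$), this uniform convexity produces a unique critical point $\tilde\aff$ of $J_\lambda$ inside the ball, which is the sought local minimizer. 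Part~(1) follows.

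For part~(2), I would apply Taylor's theorem with integral remainder along the segment from $\tilde\aff$ to $\aff_0$. Since $\partial_\aff J_\lambda(\tilde\aff,\chi,x)=0$ at the critical point, the uniform convexity bound integrates to
\begin{equation}
J_\lambda(\aff_0,\chi,x)\ge J_\lambda(\tilde\aff,\chi,x)+\tfrac{1}{2}C_{Con}\bigl(\|A_0^{-1}\|^2+O(\lambda^{-1})\bigr)\rho_\lambda\,\|\aff_0-\tilde\aff\|_\lambda^2,
\end{equation}
which is exactly \eqref{JlocalminimizerabstandJ}; the $O(\lambda^{-1})$ correction comes from the variation of $\|A^{-1}\|^2$ across the segment. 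Dropping the nonnegative $J_\lambda(\tilde\aff,\chi,x)$ term and solving for $\|\aff_0-\tilde\aff\|_\lambda$ yields \eqref{Jlocalminimizerabstand}.

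For part~(3), I would set up the implicit equation $F(s,\aff):=\partial_\aff J_\lambda(\aff,\chi(s),x(s))=0$. We have $F(0,\tilde\aff(0))=0$, and $\partial_\aff F$ at $(0,\tilde\aff(0))$ is the Hessian of $J_\lambda$, which by the convexity bound is invertible with $\|(\partial_\aff F)^{-1}\|\le 2(C_{Con}\|A^{-1}\|^2\rho_\lambda)^{-1}$. The implicit function theorem then delivers a differentiable branch $\tilde\aff(s)$ on a neighborhood of $0$ with $\tfrac{d\tilde\aff}{ds}=-(\partial_\aff F)^{-1}\partial_s F$. The right-hand side $\partial_s F$ is obtained by differentiating the sum representing $\partial_\aff J_\lambda$ with respect to the $s$-dependence of $x$ and the $x_i$: the chain rule produces a dominant term $\nabla^2 W\cdot A\cdot(\dot x_i-\dot x)$ from the argument $A(x_i-x)+\tau$ of $W$, and a subleading $\lambda^{-1}\nabla\varphi\cdot(\dot x_i-\dot x)$ contribution from the cut-off. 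Bounding these by $\|\nabla^2 W\|_\infty$ and $\|\nabla\tilde\varphi\|_\infty/\lambda$ and carrying the $|\tilde A|$ factor out of the $A$-chain rule gives \eqref{Jlocalminimizergradient}.

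The main obstacle is the bookkeeping of constants in the gradient estimate: one has to track how the $\|A^{-1}\|^2\rho_\lambda$ prefactor from the Hessian lower bound cancels against the analogous factor coming from the leading $\nabla^2 W$ piece of $\partial_s F$, leaving only the prefactor $|\tilde A|$ that appears in \eqref{Jlocalminimizergradient}, and verifying that the $O(\lambda^{-1})$ corrections arising from derivatives of $\|A^{-1}\|^2$ in the Hessian and from the cut-off $\varphi$ are genuinely subdominant. The arguments themselves are a standard coercivity-plus-implicit-function-theorem package; the care required lies in matching constants with those already introduced in Lemma~\ref{hjlocalconvexity} and Lemma~\ref{estimatesumnablaW2}.
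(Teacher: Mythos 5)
Your plan follows the paper's proof essentially step for step: shrink to a ball on which regularity (and hence the uniform Hessian bound of Lemma~\ref{hjlocalconvexity}) persists, control $\partial_\aff J_\lambda(\aff_0,\chi,x)$ by $\sqrt{J_\lambda}$ via Cauchy--Schwarz and Lemma~\ref{estimatesumnablaW2}, deduce existence and uniqueness of the critical point from strong convexity, obtain \eqref{JlocalminimizerabstandJ} by expanding around the critical point where the gradient vanishes and \eqref{Jlocalminimizerabstand} by dropping $J_\lambda(\tilde\aff,\chi,x)\ge 0$, and get part~(3) from the implicit function theorem with the Hessian coercivity cancelling the leading $\|\tilde A^{-1}\|^2\rho_\lambda$ factor of $\partial_s\partial_\aff J_\lambda$. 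The only cosmetic difference is in the existence step, where the paper confines the sublevel set $\{J_\lambda(\aff)\le J_\lambda(\aff_0)\}$ to a small ball via a midpoint Taylor expansion and positivity of $J_\lambda$, whereas you use the gradient-at-center bound directly; both are standard and equivalent here.
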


\begin{proof}
Since it holds $\|A_0^{-1}\|<C_A$ and the expressions $\|A^{-1}\|$, $|A|$ and $\det A$ are uniformly continuous functions of $A$ for regular points, we can find $\delta_{\aff}>0$ independent of $\lambda$ and $A$ such that for $\lambda\|A-A_0\|\le \delta_{\aff}$ holds
\begin{align}
\|A^{-1}\|<& C_A+O(\lambda^{-1})\quad ,\nonumber\\ 
\left|\rho_\lambda(\chi,x)-\det A\right|<&\left(\epsilon_{\rho}+O(\lambda^{-1})\right) \det A \quad .
\end{align}
Furthermore, we estimate 
\begin{align}
\left|\partial_\aff J_\lambda\left(\aff_0, \chi,x\right) [\affm]\right|\le & \left|\frac{\left\|A^{-1}_0\right\|^2}{C_{\varphi}\lambda^d}\sum_{i}\left< \nabla W(A_0\left(x_i-x\right)+\tau), M(x_i-x)+\mu\right>\varphi\right|+O(\frac{J_\lambda}{\lambda})\|\affm\|_\lambda .
\end{align}
We can use Cauchy-Schwarz inequality on the scalar product $\left<X,Y\right>_*:=\sum_{i}\left<X_i,Y_i\right>$ to get
\begin{align}
\left|\partial_\aff J_\lambda\left(\aff_0, \chi,x\right) [\affm]\right|
\le &\frac{\left\|A^{-1}_0\right\|^2}{C_{\varphi}\lambda^d}\left(\sum_{i} \left(\nabla W\right)^2 \varphi\right)^{\frac{1}{2}}
\left(\sum_{i} \left( M(x_i-x)+\mu\right)^2 \varphi\right)^{\frac{1}{2}}\nonumber\\
&+O(\lambda^{-1})\|\affm\|_\lambda J_\lambda\left(\aff_0, \chi,x\right)\quad .
\end{align}
Due to Lemma \ref{estimatesumnablaW2} we obtain the bound:
\begin{align}\label{exitslinearJ}
\left|\partial_\aff J_\lambda\left(\aff_0, \chi,x\right) [\affm]\right|
&\le O(\sqrt{J_\lambda}\left(\aff_0, \chi,x\right) \|\affm\|_\lambda)\quad .
\end{align}
Therefore, if we choose  $\tilde{\epsilon}_J$ fulfilling the conditions of Lemma \ref{hjlocalconvexity}, then for sufficiently small $\epsilon_J$  exists  $\delta_\aff>0$ such that $(x,\aff)$ is $(C_A+O(\lambda^{-1}), \epsilon_\rho+O(\lambda^{-1}), \tilde{\epsilon}_J)$-regular for $\|\aff-\aff_0\|_\lambda\le \delta_\aff$.
Hence, for sufficiently small $\epsilon_J$ all the conditions of Lemma \ref{hjlocalconvexity}  are satisfied. Furthermore, 
 $J_\lambda(\tilde{\aff}, \chi, x)$ is for $\|\aff-\aff_0\|_\lambda\le \delta_\aff$ a strictly convex function of $\aff$
\begin{equation}\label{exitsquadraticJ}
\partial^2_{\aff} J_\lambda( \aff, \chi, x)[\affm] \ge C_{Con} \|A^{-1}\|^2 \rho_\lambda \|\affm\|_\lambda^2 \quad .
\end{equation}
Hence, for any $\aff$ with $J_\lambda( \aff, \chi, x)\le J_\lambda( \aff_0, \chi, x)$ we consider $\bar{\aff}:=\frac{\aff_0+\aff}{2}$ as a starting point for a Taylor expansion.
\begin{align}
J_\lambda\left(\aff,\chi,x\right)\ge & J_\lambda\left(\frac{\aff_0+\aff}{2},\chi,x\right)+\frac{1}{2}\partial_\aff+J_\lambda\left(\frac{\aff_0+\aff}{2},\chi,x\right) \left[\aff-\aff_0\right]\nonumber\\ &+\frac{1}{9}C_{Con} \|A_0^{-1}\|^2 \rho_\lambda \|\aff_0-\aff \|_\lambda^2,\nonumber\\
J_\lambda\left(\aff_0,\chi,x\right)\ge &J_\lambda\left(\frac{\aff_0+\aff}{2},\chi,x\right)+\frac{1}{2}\partial_\aff+J_\lambda\left(\frac{\aff_0+\aff}{2},\chi,x\right) \left[\aff_0-\aff\right]\nonumber\\ &+\frac{1}{9}C_{Con} \|A_0^{-1}\|^2 \rho_\lambda \|\aff_0-\aff \|_\lambda^2.
\end{align}
If we add these estimates and apply $J_\lambda\left(\frac{\aff_0+\aff}{2},\chi,x\right)>0$ and $J_\lambda( \aff, \chi, x)\le J_\lambda( \aff_0, \chi, x)$, we get
\begin{equation}
\|\aff_0-\aff \|_\lambda^2\le R_\aff^2=: \frac{9 J_\lambda(\aff_0,\chi,x)}{ \|A_0^{-1}\|^2 \rho_\lambda}
\end{equation}
Hence, all $\aff$ with $J_\lambda( \aff, \chi, x)\le J_\lambda( \aff_0, \chi, x)$ are in the ball $B_{R_\aff}(\aff_0)$ with $R_\aff \le \delta_\aff $. 
Therefore the continous function $J_\lambda$ attains a minimum inside the ball $B_{R_\aff}(\aff_0)$ and therefore has a local minimum $\tilde{\aff}$ in $B_{R_\aff}(\aff_0)$. The local minimum fulfills $\partial_\aff J_\lambda(\tilde{\aff},\chi,x)=0$. Therefore, it holds
\begin{equation}
J_\lambda(\aff,\chi,x)\ge  J_\lambda(\tilde{\aff},\chi,x)+\frac{1}{2}C_{Con} \|A_0^{-1}\|^2 \rho_\lambda \|\aff-\tilde{\aff} \|_\lambda^2
\end{equation}
Hence, the minimizer is unique and we get the estimate \eqref{Jlocalminimizerabstand}.
Now we search as solution $\tilde{\aff}(s)$ for the equation
\begin{equation} \label{generalgrund}
0=\partial_{\aff}J_\lambda(\tilde{\aff},\chi(s),x(s)) \quad.
\end{equation}
According to implicit function theorem there is a differentiable solution $\tilde{\aff}(s)$ satisfying the equation \eqref{generalgrund},
if $\det \partial_{\aff}^2J_\lambda(\tilde{\aff},\chi(s),x(s)\ne 0$. This is implied by the strict convexity given by the estimate \eqref{exitsquadraticJ}. 
 Therefore, there exists a solution of the equation in this neighborhood and the solution is a local minimizers of $J_\lambda$.
Since $0=\partial_{\aff}J_\lambda(\tilde{\aff}(s),\chi(s),x(s)) $, it is also zero tested with any $\affm=(M,\mu) \in \R^{d\times d}\times \R$.
In particular $\partial_{\aff}J_\lambda(\tilde{\aff},\chi(s),x(s))[\affm]$  is constant and its derivative is zero.
We leave out the argument of $J_\lambda$ for simplicity and get
\begin{align}\label{existsimplizite1}
0=&\frac{d}{ds}\partial_{\aff}J_\lambda \left[\affm\right]
=\partial_{\aff}\left(\partial_{\aff}J_\lambda \left[\affm\right]\right)
\left[\frac{d\tilde{\aff}}{ds}(s)\right]+\partial_x \partial_{\aff}J_\lambda \left[\affm\right]\frac{dx(s)}{ds}+\partial_\chi \partial_{\aff}J_\lambda\left[\affm\right]\frac{d\chi(S)}{ds}.
\end{align}
If we test the estimate \eqref{existsimplizite1} with $\affm=\frac{d\tilde{\aff}(s)}{ds}$ and apply the local convexity from the inequality \eqref{exitsquadraticJ} we obtain
\begin{align}\label{existsimplizite2}
\frac{1}{2}C_{Con} \|A_0^{-1}\|^2 \rho_\lambda \| \frac{d\tilde{\aff}(s)}{ds}  \|_\lambda^2
\le &-\partial_x \partial_{\aff}J_\lambda\left(\tilde{\aff}(s),\chi(s),x(s)\right)\left[\frac{d\tilde{\aff}(s)}{ds}\right]\frac{dx(S)}{ds}\nonumber\\
&-\partial_\chi \partial_{\aff}J_\lambda\left(\tilde{\aff}(s),\chi(s),x(s)\right)\left[\frac{d\tilde{\aff}(s)}{ds}\right]\frac{d\chi(S)}{ds}\quad.
\end{align}
Furthermore, we estimate 
\begin{align} 
&C_{\varphi}\lambda^d\partial_x \partial_{\aff}J_\lambda\left(\tilde{\aff}(s),\chi(s),x(s)\right)\left[\affm\right]\frac{dx(S)}{ds}
+C_{\varphi}\lambda^d\partial_\chi \partial_{\aff}J_\lambda\left(\tilde{\aff}(s),\chi(s),x(s)\right)\left[\affm\right]\frac{d\chi(S)}{ds}\nonumber\\
=&-\left\|\tilde{A}^{-1}\right\|^2\sum_{i} \left<  \nabla^2 W \left(M(x_i-x)+\mu\right),\tilde{A} \left(\frac{dx_i}{ds}-\frac{dx}{ds}\right)    \right>  \varphi\nonumber\\
&-\left\|\tilde{A}^{-1}\right\|^2\sum_{i} \left<\nabla W, M \left(\frac{dx_i}{ds}-\frac{dx}{ds}\right)\right>     \varphi\nonumber\\
&+\partial_A\left\|\tilde{A}^{-1}\right\|^2[M]\sum_{i} \left<\nabla W, A\left(\frac{dx_i}{ds}-\frac{dx}{ds}\right)\right> \varphi\nonumber\\
&+\lambda^{-1}\left\|\tilde{A}^{-1}\right\|^2\sum_{i}  \left<\nabla W ,M(x_i-x)+\mu\right> \left<\nabla \tilde{\varphi},\left(\frac{dx_i}{ds}-\frac{dx}{ds}\right)\right>\nonumber\\
&+\lambda^{-1}\partial_A\left\|\tilde{A}^{-1}\right\|^2[M]\sum_{i}  W \left<\nabla \tilde{\varphi},\left(\frac{dx_i}{ds}-\frac{dx}{ds}\right)\right> \quad.
\end{align}
Hence, we get
\begin{align} \label{linearexistsimplicite}
&\left|C_{\varphi}\lambda^d\partial_x \partial_{\aff}J_\lambda\left(\tilde{\aff}(s),\chi(s),x(s)\right)\left[\affm\right]\frac{dx(s)}{ds}
+C_{\varphi}\lambda^d\partial_\chi \partial_{\aff}J_\lambda\left(\tilde{\aff}(s),\chi(s),x(s)\right)\left[\affm\right]\frac{d\chi(s)}{ds}\right|\nonumber\\
\le &  \sqrt{8}\left\|\tilde{A}^{-1}\right\|^2|\tilde{A}|  \left(\|\nabla^2 W\|_\infty+O(\lambda^{-1})\right)  
\|\affm\|_\lambda \left(\sum_i \left|\frac{dx_i}{ds}-\frac{dx}{ds}\right| \varphi  \right) \nonumber\\ 
&+ O(\lambda^{-1}) \|\affm\|_\lambda  \left(\sum_i \left(\left|\frac{dx_i}{ds}-\frac{dx}{ds}\right| |\nabla\tilde{\varphi}| \right)\right) \quad. 
\end{align}
If we combine the estimates \eqref{existsimplizite2} and \eqref{linearexistsimplicite}, we get
\begin{align}
\frac{1}{2}C_{Con} \|A_0^{-1}\|^2 \rho_\lambda \left\| \frac{d\tilde{\aff}(s)}{ds}  \right\|_\lambda^2\le& \sqrt{8}\left\|\tilde{A}^{-1}\right\|^2|\tilde{A}|  \|\nabla^2 W\|_\infty  
\left\|  \frac{d\tilde{\aff}(s)}{ds}   \right\|_\lambda \left(\sum_i \left|\frac{dx_i}{ds}-\frac{dx}{ds}\right| \varphi  \right) \nonumber\\ 
&+ O(\lambda^{-1}) \left\|\frac{d\tilde{\aff}(s)}{ds}   \right\|_\lambda  \left(\sum_i \left(\left|\frac{dx_i}{ds}-\frac{dx}{ds}\right| |\nabla\tilde{\varphi} |\right)\right) 
\end{align}
\end{proof}
Next, we improve the estimate for the gradients of the local minimizers. The basic idea is that $\nabla \tau$ has to be very similar to $A$.
Hence, if we do not estimate $\|\nabla \tau\|$ but $\|\nabla \tau-A\|$, we can get a much better estimate.
The result is similar to the final estimate in Theorem 4.5 from \cite{Lucapaper}.
However we improve the estimate so that we can use the gradient of the local minimizers to bound $J_\lambda$ from below. 
Furthermore, we use the same technique to get an estimate for the second gradient of the local minimizer $\aff$.

\begin{lemma}\label{Upperboundgradients}
For all $C_A$ and $\epsilon_\rho>0$ there exists  $\hat{\lambda}>0$, $\epsilon_J>0$, $\delta_\aff>0$ such that for all $\lambda>\hat{\lambda}$,
$x_0 \in \Omega$ and $\aff_0\in Gl_d(\R)\times \R^d$
the following holds: If $(x_0,\aff_0)$ is $ (C_A, \epsilon_\rho,\epsilon_J)$-regular,
then the gradients of the local minimizers $\tilde{\aff}$ (see Theorem \ref{localaffgradient})  satisfy
\begin{align}\label{upperboundgradientaff}
J_\lambda\left(\tilde{\aff}, \chi, x\right)\ge & \frac{C_{con}^2 \left\|\tilde{A}^{-1}\right\|^2}{\alpha_\nabla 2^d\|\nabla\sqrt{\tilde{\varphi}}\|_\infty^2} \frac{\rho_\lambda^2}{\rho_{2\lambda}} \lambda^2\left( \lambda^2\|\nabla \tilde{A}\|^2+\|\nabla \tilde{\tau}- \tilde{A}  \|^2\right) \quad. 
\end{align}
Furthermore, if $W$ is three times differentiable, we get:
\begin{equation}
J_\lambda\left(\tilde{\aff}, \chi, x\right)\ge C_{\nabla 2}\left(\frac{\rho_{2\lambda}}{\rho_\lambda}\right) \left\|\tilde{A}^{-1}\right\|^2\rho_\lambda \lambda^4 \left(\lambda^2\|\nabla^2 \tilde{A}\|^2+ \|\nabla^2\tilde{\tau}-\nabla \tilde{A}\|^2\right)\quad ,
\end{equation}
where 
\begin{align}
\left(C_{\nabla 2}(X)\right)^{-\frac{1}{2}}:=& \frac{\sqrt{\alpha_\nabla}}{C_{con} }\left(\|\nabla\sqrt{\tilde{\varphi}}\|_\infty^2+ \|\nabla^2\sqrt{\tilde{\varphi}}\|_\infty+ 2 \infty \|\nabla \sqrt[4]{\tilde{\varphi}}\|^2\right)d\sqrt{2^d X}\nonumber\\
& +\frac{\sqrt{\alpha_\nabla}} {C_{con}^2 }\left(2^d\|\nabla\sqrt{\tilde{\varphi}}\|_\infty^2\right)^{\frac{1}{2}}\left(16\;  2^{\frac{d}{2}} X +\sqrt{8d}\sqrt{X}\right)\quad.
\end{align}
\end{lemma}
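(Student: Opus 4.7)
The strategy combines three tools from earlier in the paper: the Euler--Lagrange equation $\partial_\aff J_\lambda(\tilde\aff(x),\chi,x)=0$ satisfied by the local minimizer (Lemma~\ref{localaffgradient}), the quadratic lower bound on $\partial_\aff^2 J_\lambda$ from Lemma~\ref{hjlocalconvexity}, and the pointwise bound converting sums of $|\nabla W|^2$ into $J_\lambda$ from Lemma~\ref{estimatesumnablaW2}. The idea is that differentiating the Euler--Lagrange equation in $x$ gives a quasi-linear relation for $\nabla\tilde\aff$; tested against a suitable direction and combined with convexity it produces a quadratic-in-$\nabla\tilde\aff$ lower bound, while the right-hand side involves only $\nabla W$ factors that Lemma~\ref{estimatesumnablaW2} controls by $\sqrt{J_\lambda}$.

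Concretely, differentiate $\partial_\aff J_\lambda(\tilde\aff(x),\chi,x)=0$ in direction $e_m$ and test with $\affm=(M,\mu)$ to get
\[
\partial_\aff^2 J_\lambda[\nabla\tilde\aff\, e_m,\affm] + \partial_{x_m}\partial_\aff J_\lambda[\affm]=0.
\]
The structural point is that the composite $z_i(x):=\tilde A(x)(x_i-x)+\tilde\tau(x)$ satisfies $\partial_{x_m}z_i = \beta_m + \gamma_m(x_i-x)$ with $\beta_m:=\partial_m\tilde\tau - \tilde A e_m$ and $\gamma_m:=\partial_m\tilde A$; this is why $\nabla\tilde\tau-\tilde A$ (rather than $\nabla\tilde\tau$) is the natural quantity to bound. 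Expanding both terms, all $\nabla^2 W(z_i)$-contributions combine into
\[
\frac{\|\tilde A^{-1}\|^2}{C_\varphi\lambda^d}\sum_i\nabla^2 W(z_i)\bigl(\beta_m+\gamma_m(x_i-x),\,M(x_i-x)+\mu\bigr)\varphi_i,
\]
while every remaining contribution carries a factor of either $\nabla W(z_i)$ (from varying the prefactor $\|A^{-1}\|^2$) or of $\nabla\tilde\varphi$ (from the cutoff). Choosing $(M,\mu)=(\gamma_m,\beta_m)$, the principal part is precisely the quadratic form analysed in Lemma~\ref{hjlocalconvexity} and is bounded below by $C_{con}\|\tilde A^{-1}\|^2\rho_\lambda\bigl(\lambda^2|\gamma_m|^2+|\beta_m|^2\bigr)$. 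The error terms are controlled by Cauchy--Schwarz: sums $\sum|\nabla W|\cdot|\text{test}|\varphi$ are bounded via Lemma~\ref{estimatesumnablaW2} by $\sqrt{\alpha_\nabla J_\lambda/\|\tilde A^{-1}\|^2}\cdot\sqrt{\rho_\lambda}\cdot\|\affm\|_\lambda$, and the $\nabla\tilde\varphi$ contributions live on the annulus $\{\lambda\le|x_i-x|\le 2\lambda\}$ whose density is bounded by $\rho_{2\lambda}$; distributing a $\sqrt{\tilde\varphi}$ weight when applying Cauchy--Schwarz isolates $\|\nabla\sqrt{\tilde\varphi}\|_\infty$ and produces the ratio $\rho_{2\lambda}/\rho_\lambda$ appearing in the statement. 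Cancelling one power of $\|\affm\|_\lambda$ and summing over $m$ yields \eqref{upperboundgradientaff}.

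For the second-gradient estimate I would differentiate the Euler--Lagrange equation once more and test with $\affm=(\partial_l\gamma_m,\partial_l\beta_m)$. The same cancellation recurs: the principal $\nabla^2 W$-part lies in the local convexity regime of Lemma~\ref{hjlocalconvexity}, the genuinely new $\nabla^3 W$ terms (finite since $W\in C^\infty$) multiply the already-controlled first derivatives $\beta_m,\gamma_m$ and are therefore absorbed by the first-gradient bound, and the cutoff-derivative terms generate the composite norms $\|\nabla\sqrt{\tilde\varphi}\|_\infty^2$, $\|\nabla^2\sqrt{\tilde\varphi}\|_\infty$, and $\|\nabla\sqrt[4]{\tilde\varphi}\|_\infty^2$ visible in $C_{\nabla 2}$; the quartic-root weighting is precisely what bundles the three cutoff-derivative contributions into a single factor of $\rho_{2\lambda}/\rho_\lambda$. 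The main obstacle is bookkeeping: identifying the exact algebraic cancellation of $\nabla^2 W$-terms into the structural form $\beta_m+\gamma_m(x_i-x)$, and then, at second order, propagating the density ratio through repeated Cauchy--Schwarz without losing powers of $\lambda$. Once these cancellations are isolated, the implicit-function / local-convexity / $|\nabla W|^2$-to-$J_\lambda$ mechanism closes the argument at each order.
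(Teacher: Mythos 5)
Your proposal follows essentially the same route as the paper's proof: differentiating the Euler--Lagrange equation $\partial_\aff J_\lambda(\tilde\aff(x),\chi,x)=0$, observing that the inner derivative of $\tilde A(x)(x_i-x)+\tilde\tau(x)$ produces exactly $(\nabla_j\tilde\tau-\tilde A e_j)+\nabla_j\tilde A\,(x_i-x)$, testing against that same direction so the principal part is the quadratic form of Lemma~\ref{hjlocalconvexity}, and controlling the residual $\nabla W$- and $\nabla\tilde\varphi$-terms via Cauchy--Schwarz, Lemma~\ref{estimatesumnablaW2}, and the $\sqrt{\tilde\varphi}$/$\sqrt[4]{\tilde\varphi}$ weight splitting that yields the $\rho_{2\lambda}/\rho_\lambda$ factor; the second-order step is likewise the paper's (the only cosmetic difference being that the paper also invokes the $\tau$-component $\sum_i\nabla W\,\varphi=0$ of the Euler--Lagrange equation to kill some terms outright rather than estimating them). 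The argument is correct and matches the paper's proof in both structure and mechanism.
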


\begin{proof}
{\bf Step 1: The first derivative:}
Since the same conditions as in Theorem \ref{localaffgradient} are fulfilled, we get the minimizer $\tilde{\aff}$.
The local minimizer fulfills $0=\partial_{\aff}J_\lambda(\tilde{\aff}, \chi,x)$. 
On the one hand this implies for the $\tau$ derivative
\begin{equation}\label{gradienttauzero}
0=\sum_{i} \nabla W(\tilde{A}\left(x_i-x\right)+\tilde{\tau})\varphi\left(\lambda^{-1}\left|x_i-x\right|\right) \quad .
\end{equation}
On the other hand, the total derivative of $\partial_{\aff}J_\lambda(\tilde{\aff}(x), \chi , x)$ in every direction  $e_j$ is zero, because we know that $\partial_{\aff}J_\lambda(\tilde{\aff}(x),\chi, x)[\affm]$ is constant.
In particular for all test matrices $\affm=(M,\mu)\in \R^{d\times d}\times \R^d$ holds
\begin{align}\label{implizitaffbais}
0=&\frac{d}{dx^j}\left(\partial_{\aff}J_\lambda(\tilde{\aff}(x),\chi , x)[\affm]\right)\nonumber\\
=&\partial^2_{\aff}J_\lambda(\tilde{\aff}(x), \chi , x)\left(\affm,\nabla_j \tilde{\aff}(x)\right) +\nabla_j \left(\partial_{\aff}J_\lambda(\tilde{\aff}(x),\chi , x)[\affm]\right)\quad .
\end{align}
We compare $\partial_{\tau_j}\left(\partial_{\aff}J_{\lambda}[\affm]\right)$ with $\nabla_j\partial_{\aff}J_{\lambda}[\affm]$. First, we calculate the $\tau$-derivative and then use equation \eqref{gradienttauzero}.
\begin{align}\label{ableitungafftau}
C_{\varphi}\lambda^d\partial_{\tau_j}\partial_{\aff} J_{\lambda}(\tilde{\aff},\chi, x)[\affm]=&\left\|\tilde{A}^{-1}\right\|^2\sum_{i}  \left<\nabla_j\nabla W(\tilde{A}\left(x_i-x\right)+\tilde{\tau}),M(x_i-x)+\mu\right> \varphi \quad.
\end{align}
Now, we calculate the partial derivative $\nabla_j\partial_{\aff} J_{\lambda}(\tilde{\aff}, \chi ,x)[\affm]$ and get 
\begin{align} \label{ableitungaffx}
&C_{\varphi}\lambda^d\nabla_j \partial_{\aff} J_{\lambda}(\tilde{\aff}(x), \chi ,x)[\affm]\nonumber\\
=&-\left\|\tilde{A}^{-1}\right\|^2\sum_{i} \left(\left<  \nabla^2 W \left(M(x_i-x)+\mu\right),\tilde{A}e_j\right>+\left<\nabla W, M e_j\right> \right)  \varphi\nonumber\\
&+\partial_A\left\|\tilde{A}^{-1}\right\|^2[M]\sum_{i} \left(\left<\nabla W, \tilde{A}e_j\right> \varphi +W \nabla_j \tilde{\varphi}   \right)+\left\|\tilde{A}^{-1}\right\|^2\sum_{i}  \left<\nabla W ,M(x_i-x)+\mu\right> \nabla_j \tilde{\varphi}.
\end{align}
The second and the third term are zero due to equation \eqref{gradienttauzero}. 
We compare the equations \eqref{ableitungafftau} and \eqref{ableitungaffx} and see that the first term of
$\nabla_j\left(\partial_{\aff} J_{\lambda}[\affm]\right)$ equals $-\left<\partial_x\left(\partial_{\aff} J_{\lambda}[\affm]\right),\tilde{A} e_j\right>$.
We summarize the last two terms into a linear map  $D:\R^{d\times d}\times \R^d \rightarrow \R^d$.
\begin{equation} \label{ableitungaffx2}
\nabla_j\partial_{\aff} J_{\lambda}(\tilde{\aff}(x), \chi, x)[\affm]
=-\left<\partial_{\tau}\partial_{\aff} J_{\lambda}(\tilde{\aff}(x),\chi, x)[\affm],\tilde{A}e_j\right>-D_j[\affm] \quad , 
\end{equation}
where $D[\affm]$ is defined by
\begin{align}
D[\affm]:=&-\frac{\left\|\tilde{A}^{-1}\right\|^2}{\lambda C_{\varphi}\lambda^d}\sum_{i} \left<\nabla W(\tilde{A}\left(x_i-x\right)+\tilde{\tau}),M(x_i-x)+\mu\right> \nabla\tilde{\varphi}\nonumber\\
&-\frac{\partial_A\left\|\tilde{A}^{-1}\right\|^2[M]}{\lambda C_{\varphi}\lambda^d}\sum_{i}  W(\tilde{A}\left(x_i-x\right)+\tilde{\tau}) \nabla \tilde{\varphi} \quad.
\end{align}
Using equation \eqref{ableitungaffx2} we can reformulate equation \eqref{implizitaffbais} as follows
\begin{equation}\label{implizitaffbais2}
D_j[\affm]=\partial_{\aff}^2J_\lambda(\tilde{\aff}(x), \chi, x)\left(\affm,\left(\nabla_j \tilde{A}, \nabla_j \tilde{\tau}-\tilde{A}e_j\right)\right)\quad .
\end{equation}
We test this equation with $\affm=(\nabla_j \tilde{A}, \nabla_j \tilde{\tau}-\tilde{A}e_j)$ and sum over $j$ to obtain
\begin{equation}\label{DistkleineralsquadratischeForm}
\sum_j D_j \left[\left(\nabla \tilde{A}, \nabla_j \tilde{\tau}-\tilde{A}e_j\right)\right]=\sum_j \partial_{\aff}^2J_\lambda(\tilde{\aff}(x), \chi, x)\left[\left(\nabla_j \tilde{A}, \nabla_j \tilde{\tau}-\tilde{A}e_j\right)\right]\quad.
\end{equation}
Using the local convexity proofed in lemma \ref{hjlocalconvexity}, we get:
\begin{align}\label{DistkleineralsquadratischeForm2}
\sum_j D_j\left(\nabla_j \tilde{A},  \nabla_j\tilde{\tau}-\tilde{A}e_j\right)\ge&  C_{con} \rho_\lambda \|\tilde{A}^{-1}\|^2 \left( \lambda^2\|\nabla \tilde{A}(x)\|^2+\|\nabla \tilde{\tau}(x)- \tilde{A}  \|^2\right) \quad .
\end{align}
We rewrite the left side of the last inequality 
\begin{align}\label{EstimateD1}
 C_{\varphi}\lambda^d\sum_j D _j\left[\left(\nabla_j \tilde{A}, \nabla_j \tilde{\tau}-\tilde{A} e_j\right)\right]=&-\lambda^{-1}
\left\|\tilde{A}^{-1}\right\|^2\sum_{i,j} \left<\nabla W ,\nabla_j \tilde{A}(x_i-x)+\nabla_j \tilde{\tau} \right> \nabla_j\tilde{\varphi}\nonumber\\&-\lambda^{-1}\sum_{i,j}\partial_A\left\|\tilde{A}^{-1}\right\|^2[\nabla_j \tilde{A}]  W \nabla_j\tilde{\varphi}  .
\end{align}
Moreover, we have
\begin{align}
\nabla \tilde{\varphi}(x)=&2 \sqrt{\tilde{\varphi}}(x)\nabla\sqrt{\tilde{\varphi}}(x) \quad ,\nonumber\\
\left(\nabla_j\tilde{A}(x_i-x)+\nabla_j\tilde{\tau}-\tilde{A}e_j\right)^2\le& 2|\nabla_j\tilde{A}|^2|x_i-x|^2+2|\nabla_j\tilde{\tau}-\tilde{A}e_j|^2\nonumber\\
\le& 8\lambda^2\|\nabla_j\tilde{A}\|^2+2|\nabla_j\tilde{\tau}-\tilde{A}e_j|^2 \quad.
\end{align}
Therefore, we use Cauchy-Schwarz inequality to estimate
\begin{align}\label{Cauchyschwarz}
& C_{\varphi}\lambda^{d+1}\sum_j D _j\left(\nabla_j \tilde{A}, \nabla_j \tilde{\tau}-\tilde{A}e_j\right)\nonumber\\
=&-2\sum_{i,j}\left(\left\|\tilde{A}^{-1}\right\|^2 \left<\nabla W, \nabla_j \tilde{A}(x_i-x)+\nabla_j \tilde{\tau}- \tilde{A}_{\cdot j} \right>+
\partial_A\left\|\tilde{A}^{-1}\right\|^2[\nabla_j \tilde{A}] W  \right) \sqrt{\tilde{\varphi}} \nabla_j\sqrt{\tilde{\varphi}}\nonumber\\
=&2 \left\|\tilde{A}^{-1}\right\|^2\sum_j\left( \sum_i (\nabla W)^2  \left(8\lambda^2\|\nabla \tilde{A}_j\|^2+ 2|\nabla_j \tilde{\tau}-\tilde{A}e_j|^2 \right)\tilde{\varphi}^2\right)^{\frac{1}{2}} \left(\sum_{i}|\nabla \sqrt{\tilde{\varphi}}|^2\right)^{\frac{1}{2}}\nonumber\\
&+2 \left|\partial_A\left\|\tilde{A}^{-1}\right\|^2\right|\|\nabla \tilde{A}\|  \left(\sum_i  W^2  \tilde{\varphi}\right)^{\frac{1}{2}}\left(\sum_{i}|\nabla\sqrt{\tilde{\varphi}}|^2\right)^{\frac{1}{2}} \quad.
\end{align}
Since we have $\varphi(z)=1$ for $z\le 1$, we estimate 
\begin{align}
\sum_{i}|\nabla\sqrt{\tilde{\varphi}}|^2\left(\lambda^{-1}\left|x_i-x\right|\right)=&  \sum_i |\nabla\sqrt{\tilde{\varphi}}|^2\left(\lambda^{-1}\left|x_i-x\right|\right)\varphi\left((2\lambda)^{-1}\left|x_i-x\right|\right)\nonumber\\
\le&C_{\varphi}(2\lambda)^d\|\nabla \sqrt{\tilde{\varphi}}\|_\infty^2 \rho_{2\lambda}(\chi, x) \quad .
\end{align}
We use $W^2(\tilde{A}\left(x_i-x\right)+\tilde{\tau})\le \|W\|_\infty W(\tilde{A}\left(x_i-x\right)+\tilde{\tau})$
and Lemma \ref{estimatesumnablaW2} on the inequality \eqref{Cauchyschwarz} and get
\begin{align}
\sum_j D _j\left(\nabla_j \tilde{A}, \nabla_j \tilde{\tau}-\tilde{A}\right)\le& \left(\left(\alpha_\nabla\right)^{\frac{1}{2}}\left\|\tilde{A}^{-1}\right\| 
 +O\left(\lambda^{-1}\right)\right) \left(2^d\|\nabla \sqrt{\tilde{\varphi}}\|_\infty^2 \rho_{2\lambda}\right)^{\frac{1}{2}}\nonumber\\
 &\times \lambda^{-1}\sqrt{J}_\lambda\left(\tilde{\aff}, \chi, x\right) \left(\lambda^2|\nabla \tilde{A}\|^2+\|\nabla \tilde{\tau}-\tilde{A}\|^2\right)^{\frac{1}{2}} \quad .
 \end{align}
If we apply this on the estimate \eqref{DistkleineralsquadratischeForm}, we get
 \begin{align}
&C_{con}\rho \|\tilde{A}^{-1}\|^2 \left( \lambda^2\|\nabla \tilde{A}(x)\|^2+\|\nabla \tau(x)- \tilde{A}  \|^2\right)^{\frac{1}{2}}\nonumber\\
\le&\left( \left(\alpha_\nabla\right)^{\frac{1}{2}}\left\|\tilde{A}^{-1}\right\| 
 +O(\lambda^{-1})\right)\left(2^d\|\nabla \sqrt{\tilde{\varphi}}\|_\infty^2 \rho_{2\lambda}(\chi,x)\right)^{\frac{1}{2}}\lambda^{-1}\sqrt{J}_\lambda\left(\tilde{\aff}, \chi, x\right).
\end{align}
We solve this for $J_\lambda$ and obtain for large enough $\lambda$
\begin{align}
J_\lambda\left(\tilde{\aff}, \chi, x\right)\ge & \frac{C_{con}^2 \left\|\tilde{A}^{-1}\right\|^2}{\alpha_\nabla 2^d\|\nabla \sqrt{\tilde{\varphi}}\|_\infty^2} \frac{\rho_\lambda^2}{\rho_{2\lambda}} \lambda^2\left( \lambda^2\|\nabla \tilde{A}\|^2+\|\nabla \tilde{\tau}- \tilde{A}  \|^2\right)\quad . 
\end{align}

{\bf Step two: The second derivatives}
We start with equation \eqref{implizitaffbais2}:
\begin{equation}
\partial_{\aff}^2J(\tilde{\aff}(x), \chi, x)\left(\affm,\left(\nabla_j \tilde{A}, \nabla_j \tilde{\tau}-\tilde{A}e_j\right)\right)=D_j[\affm]\quad.\nonumber 
\end{equation}
We apply the total derivative $\frac{d}{dx^k}$ on both sides , use the product rule and separate the second derivatives in $\aff$ direction from the first derivatives. We obtain
\begin{align}
&\partial_{\aff}^2J(\tilde{\aff}(x), \chi, x)\left(\affm,\left(\nabla_k\nabla_j \tilde{A}, \nabla_k\nabla_j \tilde{\tau}-\nabla_k\tilde{A}e_j\right)\right)\nonumber\\
=&\frac{d}{dx^k}D_j[\affm]
-\left(\frac{d}{dx^k}\partial_{\aff}^2J_\lambda(\tilde{\aff}(x), \chi, x)\right)\left([\affm],\left(\nabla_j \tilde{A}, \nabla_j \tilde{\tau}-\tilde{A}e_j\right)\right)\quad.
\end{align}
 We test the equation with $\affm=\left(\nabla_k\nabla_j \tilde{A}, \nabla_k\nabla_j \tilde{\tau}-\nabla_k\tilde{A}e_j\right)$, use the local convexity to estimate the left side and sum over all $j$ and $k$ to obtain
\begin{align}\label{generalsecondderivativeJ}
&C_{con}\rho \|\tilde{A}^{-1}\|^2 \left(\lambda^2\|\nabla^2 \tilde{A}\|^2+ \|\nabla^2\tilde{\tau}-\nabla \tilde{A}\|^2\right)\nonumber\\
\le&-\sum_{j,k} \left(\frac{d}{dx^k}\partial_{\aff}^2J\right)\left(\left(\nabla_k\nabla_j \tilde{A}, \nabla_k\nabla_j \tilde{\tau}-\nabla_k\tilde{A}e_j\right)     ,\left(\nabla_j \tilde{A}, \nabla_j \tilde{\tau}-\tilde{A}e_j\right)\right)\nonumber\\
&+\sum_{j,k} \frac{d}{dx^k}D_j\left(\nabla_k\nabla_j \tilde{A}, \nabla_k\nabla_j \tilde{\tau}-\nabla_k\tilde{A}e_j\right)\quad.
\end{align}
First, we calculate $\left(\frac{d}{dx^k}\partial_{\aff}^2J(\tilde{\aff}(x), \chi, x)\right)(\affm_1,\affm_2)$. We start with 
\begin{align}\label{secondderivativaffsecondderivativJ}
&C_{\varphi}\lambda^d\partial_\aff^2 J_\lambda(\aff,\chi,x)(\affm_1,\affm_2)\nonumber\\
=&\left\|\tilde{A}^{-1}\right\|^2\sum_{i} \left<M_2(x_i-x)+\mu_2,\nabla^2W (M_1(x_i-x)+\mu_1)\right>\varphi +\partial_\aff^2\left\|\tilde{A}^{-1}\right\|^2(M_1,M_2)\sum_{i}  W  \varphi  \nonumber\\
&+\partial_\aff\left\|\tilde{A}^{-1}\right\|^2[M_1]\sum_{i} \left<\nabla W ,M_2(x_i-x)+\mu_2\right>\varphi\nonumber\\
&+\partial_\aff\left\|\tilde{A}^{-1}\right\|^2[M_2]\sum_{i} \left<\nabla W,  (M_1(x_i-x)+\mu_1\right>\varphi .
\end{align}
We remember that a minimizer $\tilde{\aff}$ of $J_\lambda$ satisfies $\partial_{\aff}J(\tilde{\aff}(x), \chi,x )=0$
\begin{align} 
\frac{1}{C_{\varphi}\lambda^d}\sum_{i} \left<\nabla W , M(x_i-x)+\mu\right>\varphi
=&-\partial_A\left\|\tilde{A}^{-1}\right\|^2[M]\left\|\tilde{A}^{-1}\right\|^{-2}\frac{1}{C_{\varphi}\lambda^d}\sum_{i} W\varphi \nonumber\\
=& -\partial_A\left\|\tilde{A}^{-1}\right\|^2[M]\left\|\tilde{A}^{-1}\right\|^{-4} J_\lambda(\tilde{\aff},\chi,x) \quad .
\end{align}
Therefore, equation \eqref{secondderivativaffsecondderivativJ} turns into
\begin{align}
\partial_\aff^2 J_\lambda(\aff,\chi,x)(\affm_1,\affm_2)=&\frac{\left\|\tilde{A}^{-1}\right\|^2}{C_{\varphi}\lambda^d}\sum_{i} \left<M_2(x_i-x)+\mu_2,\nabla^2W(M_1(x_i-x)+\mu_1)\right>\varphi\nonumber\\
&+\left\|\tilde{A}^{-1}\right\|^{-2}\partial_A^2\left\|\tilde{A}^{-1}\right\|^2(M_1,M_2)J_\lambda(\tilde{\aff},\chi,x)\nonumber\\
&-2\left\|\tilde{A}^{-1}\right\|^{-4} \partial_A\left\|\tilde{A}^{-1}\right\|^2[M_1]\partial_A\left\|\tilde{A}^{-1}\right\|^2[M_2] J_\lambda(\tilde{\aff},\chi,x) \quad .
\end{align}
Next, we calculate $\frac{d}{dx}\partial_\aff^2 J_\lambda(\tilde{\aff},\chi,x)(\affm_1,\affm_2)$.
We realize that a derivative on one of the $\|\tilde{A}^{-1}\|$ terms will produce an inner derivative  $\nabla A=O(\lambda^{-2}\sqrt{J_\lambda})$.
Furthermore, $\partial_\aff\left\|\tilde{A}^{-1}\right\|^2[M]$ is $O(\lambda^{-1})\|\affm\|_\lambda$.
Hence, we get for the derivative of the second line
\begin{align}
&\left|2\frac{d}{dx}\left(\left\|\tilde{A}^{-1}\right\|^{-4} \partial_A\left\|\tilde{A}^{-1}\right\|^2(M_1)\partial_A\left\|\tilde{A}^{-1}\right\|^2(M_2) J_\lambda(\tilde{\aff},\chi,x)\right)\right|\nonumber\\
\le &O(\lambda^{-4}J_\lambda)\|\affm_1\|_\lambda\|\affm_2\|_\lambda +O(\lambda^{-2})\|\affm_1\|_\lambda\|\affm_2\|_\lambda \left|\frac{d}{dx} J_\lambda(\aff,\chi,x)\right| \quad.
\end{align}
Since it holds $\partial_\aff J_\lambda=0$, we get
\begin{align}
C_{\varphi}\lambda^d\frac{d}{dx} J_\lambda(\aff,\chi,x)=&-\left\|\tilde{A}^{-1}\right\|^2\sum_{i}\left( \nabla W\tilde{A}\varphi+2\lambda^{-1} W \sqrt{\tilde{\varphi}} \nabla \sqrt{\tilde{\varphi}}\right) \quad .
\end{align}
According to equation \eqref{gradienttauzero} the first term is zero. We apply Cauchy-Schwarz inequality on the second term, as we did in the estimate \eqref{Cauchyschwarz}. We obtain
$\left| \frac{d}{dx}J_\lambda(\tilde{\aff},\chi,x)\right|\le O(\lambda^{-1}J_\lambda)$. Therefore, it holds
\begin{align}
&\left(\frac{d}{dx^k}\partial_\aff^2 J_\lambda\right)(\affm_1,\affm_2)\nonumber\\
=&\frac{d}{dx^k}\left(\frac{\left\|\tilde{A}^{-1}\right\|^2}{C_{\varphi}\lambda^d}\sum_{i} \left<M_2(x_i-x)+\mu_2, \nabla^2W(M_1(x_i-x)+\mu_1)\right>\varphi\right)\nonumber\\
&+O(\lambda^{-3}J_\lambda)\|\affm_1\|_\lambda\|\affm_2\|_\lambda \quad .
\end{align}
The $x$ derivative can be applied on $\|\tilde{A}^{-1}\|$ producing an inner derivative 
$\nabla A=O(\lambda^{-2}\sqrt{J})$. A total  $x$-derivative of the $W$ will have an inner derivative 
\begin{equation}
\frac{d}{dx^k}(\tilde{A}(x_i-x)+\tilde{\tau})=  \left(\nabla_k \tilde{A}(x_i-x)+\nabla_k \tilde{\tau}-\tilde{A}e_k\right) = O(\lambda^{-1}\sqrt{J})\quad.
\end{equation}
Hence, we get
\begin{align}
&C_{\varphi}\lambda^d\left(\frac{d}{dx^k}\partial_\aff^2 J_\lambda(\tilde{\aff},\chi,x)\right)(\affm_1,\affm_2)\nonumber\\
=&2\lambda^{-1}\left\|\tilde{A}^{-1}\right\|^2\sum_{i} \left<M_2(x_i-x)+\mu_2,\nabla^2W(M_1(x_i-x)+\mu_1)\right>\sqrt{\tilde{\varphi}}\nabla_k\sqrt{\tilde{\varphi}} \nonumber\\
&-\left\|\tilde{A}^{-1}\right\|^2\sum_{i} \left<M_2e_k,\nabla^2W(M_1(x_i-x)+\mu_1)\right>\tilde{\varphi}\nonumber\\
&-\left\|\tilde{A}^{-1}\right\|^2 \sum_{i}\left<M_1e_k,\nabla^2W(M_2(x_i-x)+\mu_2)\right>  \tilde{\varphi}+O(\lambda^{d-3}\sqrt{J_\lambda})\|\affm_1\|_\lambda\|\affm_2\|_\lambda \quad .
\end{align}
We test with some $\affm_1^{j,k}$ and $\affm_1^{j}$, sum over $j$ and $k$ and use
Cauchy Schwarz inequality to obtain
\begin{align}
&C_{\varphi}\lambda^d\left\|\tilde{A}^{-1}\right\|^{-2}\left|\sum_{j,k}\left(\frac{d}{dx^k}\partial_\aff^2 J_\lambda(\tilde{\aff},\chi,x)\right)(\affm_1^{j,k},\affm_2^{j})\right|\nonumber\\
\le&2\lambda^{-1}\left(\sum_{i,j,k}\left(\nabla^2 W\left(M_2^j(x_i-x)+\mu_2^j\right)\right)^2 \left(\nabla_k \sqrt{\tilde{\varphi}}\right)^2\right)^{\frac{1}{2}}  \left(\sum_{i,j,k} \left(M_1^{j,k}(x_i-x)+\mu_1^{j,k}\right)^2 \tilde{\varphi}\right)^{\frac{1}{2}} 
\nonumber\\
&+ \left(\sum_{i,j,k}\left(\nabla^2W(M_2^je_k)\right)^2  \tilde{\varphi}\right)^{\frac{1}{2}} \left(\sum_{i,j,k} \left(M_1^{j,k}(x_i-x)+\mu_1^{j,k}\right)^2 \tilde{\varphi}\right)^{\frac{1}{2}} \nonumber\\
&+ \left(\sum_{i,j,k} \left( \left(M_2^j(x_i-x)+\mu_2^j\right)\nabla^2W\right)^2  \tilde{\varphi}\right)^{\frac{1}{2}} \left(\sum_{i,j,k} (M_1^{j,k}e_k)^2\tilde{\varphi}\right)^{\frac{1}{2}} \nonumber\\
&+O(\lambda^{d-3}\sqrt{J_\lambda})\|\affm_1\|_\lambda\|\affm_2\|_\lambda \quad .
\end{align}
Finally, it holds
\begin{align}\label{generaldaffdadddxfinal}
&\left|\sum_{j,k} \left(\frac{d}{dx^k}\partial_{\aff}^2J(\tilde{\aff}, \chi, x)\right)\left(\left(\nabla_k\nabla_j \tilde{A}, \nabla_k\nabla_j \tilde{\tau}-\nabla_k\tilde{A}e_j\right) ,\left(\nabla_j \tilde{A}, \nabla_j \tilde{\tau}-\tilde{A}e_j\right)\right)\right|\nonumber\\
\le&\left(16\sqrt{2^d\rho_{2\lambda}\rho_\lambda}+2\sqrt{8d}\rho_\lambda  \right)   \lambda^{-1}\left\|\tilde{A}^{-1}\right\|^2 |\nabla^2 W| \left(\lambda^2  \|\nabla^2\tilde{A}  \|^2+\|\nabla^2\tilde{\tau}-\nabla \tilde{A}\|^2 \right)^{\frac{1}{2}} \nonumber\\
&\times \left(\lambda^2  \|\nabla\tilde{A}  \|^2+\|\nabla \tilde{\tau}- \tilde{A}\|^2 \right)^{\frac{1}{2}}\quad .
\end{align}
Next, we consider 
\begin{align}
C_{\varphi}\lambda^d\left(\frac{d}{dx^k}D_j\right)[\affm]=&
\lambda^{-1}\frac{d}{dx^k}\left(\left\|\tilde{A}^{-1}\right\|^2\sum_{i} \left<\nabla W(\tilde{A}\left(x_i-x\right)+\tilde{\tau}),M(x_i-x)+\mu\right> \nabla_j\tilde{\varphi}\right)\nonumber\\
&-\lambda^{-1}\frac{d}{dx^k}\left(\partial_A\left\|\tilde{A}^{-1}\right\|^2[M]\sum_{i}  W(\tilde{A}\left(x_i-x\right)+\tilde{\tau}) \nabla_j \tilde{\varphi}\right)\quad .
\end{align}
We know from our previous calculation that $W(\tilde{A}\left(x_i-x\right)+\tilde{\tau})$ gives a $O(J_\lambda)$-contribution and $\nabla W(\tilde{A}\left(x_i-x\right)\tilde{A}+\tilde{\tau})$ gives an $O(\sqrt{J}_\lambda)$-contribution.
The inner derivative of the argument of $W$ is $O(\lambda^{-1}\sqrt{J}_\lambda)$. 
Furthermore, a derivative on  $\|\tilde{A}^{-1}\|$ will produce an inner derivative \\ \noindent $\nabla \tilde{A}=O(\lambda^{-2}\sqrt{J_\lambda})$.
Finally, $\partial_\aff\left\|\tilde{A}^{-1}\right\|^2[M]$ is $O(\lambda^{-1})\|\affm\|_\lambda$. We obtain
\begin{align}\label{generalDxderivative}
&C_{\varphi}\lambda^d\left\|\tilde{A}^{-1}\right\|^{-2}\left(\frac{d}{dx^k}D_j\right)[\affm]\nonumber\\
=&2\lambda^{-1}\sum_{i}\left<\nabla_k \tilde{A}\left(x_i-x\right)+\nabla_k \tilde{\tau}-\nabla_k \tilde{A}, \nabla^2 W \left(M(x_i-x)+\mu\right)\right> \sqrt{\tilde{\varphi}}\nabla_j\sqrt{\tilde{\varphi}}\nonumber\\
&+\lambda^{-1}\sum_{i} \left<\nabla W, M(x_i-x)+\mu\right> (\sqrt{\tilde{\varphi}} \nabla_k\nabla_j\sqrt{\tilde{\varphi}} +2\nabla_k \sqrt{\tilde{\varphi}} \nabla_j \sqrt{\tilde{\varphi}})   \nonumber\\
&-\lambda^{-1}\sum_{i} \left<\nabla W, Me_k\right> \sqrt{\tilde{\varphi}}\nabla_j \sqrt{\tilde{\varphi}}+O(\lambda^{d-3}J_\lambda)\|\affm\|_\lambda  \quad .
\end{align}
We use
$\nabla_k\sqrt{\tilde{\varphi}}(X) \nabla_j\sqrt{\tilde{\varphi}}(X)=4\sqrt{\tilde{\varphi}}(X)\nabla_j\sqrt[4]{\tilde{\varphi}}(X)\nabla_k\sqrt[4]{\tilde{\varphi}}(X)$
and denote
\begin{equation}
U_{j,k}:=\nabla_k\nabla_j \tilde{A}\left(x_i-x\right)+\nabla_k\nabla_j \tilde{\tau}-\nabla_k A e_j \quad .
\end{equation}
We test the estimate \eqref{generalDxderivative} with $\affm=\left(\nabla_k\nabla_j \tilde{A}, \nabla_k\nabla_j \tilde{\tau}-\nabla_k\tilde{A}e_j\right)$ and sum over $j$ and $k$. Applying the Cauchy Schwarz inequality we get
\begin{align}
&C_{\varphi}\lambda^d \left|\sum_{j,k} \left(\frac{d}{dx^k}D_j\right)\left(\nabla_k\nabla_j \tilde{A}, \nabla_k\nabla_j \tilde{\tau}-\nabla_k\tilde{A}e_j\right)\right|\nonumber\\
\le&  \lambda^{-1}\left\|\tilde{A}^{-1}\right\|^2 |\nabla^2W|_\infty
\left(\sum_{i,j,k}  \left(\nabla_k \tilde{A}\left(x_i-x\right)+\nabla_k \tilde{\tau}- A e_k\right)^2  (\nabla_j\sqrt{\tilde{\varphi}}   )^2    \right)^{\frac{1}{2}}  \left(\sum_{i,j,k} U_{j,k}^2 \tilde{\varphi}\right)^{\frac{1}{2}}\nonumber\\
& +\lambda^{-1}\left\|\tilde{A}^{-1}\right\|^2|\left(\sum_{i}   (\nabla W)^2  \tilde{\varphi}    \right)^{\frac{1}{2}} 
\sum_{j,k} \left(\sum_{i}  \left(\nabla_k\nabla_j \tilde{A}e_k\right)^2 (\nabla_j\sqrt{\tilde{\varphi}})^2\right)^{\frac{1}{2}}\nonumber\\
&+ \lambda^{-2}\left\|\tilde{A}^{-1}\right\|^2
\left(\sum_{i,j,k}  (\nabla W)^2  \tilde{\varphi}    \right)^{\frac{1}{2}} 
 \left(\sum_{i,j,k}  U_{j,k}^2  (\nabla_k\nabla_j\sqrt{\tilde{\varphi}})^2\right)^{\frac{1}{2}}\nonumber\\
& +8 \lambda^{-2}\left\|\tilde{A}^{-1}\right\|^2
\left(\sum_{i,j,k}    (\nabla W)^2  \tilde{\varphi}    \right)^{\frac{1}{2}}\left(\sum_{i,j,k}  U_{j,k}^2 (\nabla_k\sqrt[4]{\tilde{\varphi}} \nabla_j\sqrt[4]{\tilde{\varphi}})^2\right)^{\frac{1}{2}} \quad .
\end{align}
We can use equation \eqref{nablaWquadratkleinerW} to obtain
\begin{align}\label{Dsecondfinal}
&\left|\sum_j\sum_k \left(\frac{d}{dx^k}D_j\right)\left(\nabla_k \nabla_j \tilde{A}, \nabla_k \nabla_j \tilde{\tau}-\nabla_k\tilde{A}e_j\right)\right|\nonumber\\
\le&16  \lambda^{-1}\left\|\tilde{A}^{-1}\right\|^2|\nabla^2W|_\infty  \|\nabla\sqrt{\tilde{\varphi}}\|_\infty \sqrt{2^d \rho_\lambda \rho_{2\lambda}}\nonumber\\ 
&\times \left(\lambda^2\|\nabla \tilde{A}\|^2+\|\nabla\tilde{\tau}- \tilde{A}\|^2 \right)^{\frac{1}{2}}    
\left(\lambda^2\|\nabla^2 \tilde{A}\|^2+ \|\nabla^2\tilde{\tau}-\nabla \tilde{A}\|^2\right)^{\frac{1}{2}}\nonumber\\
&+2d\lambda^{-1}\left\|\tilde{A}^{-1}\right\|\alpha_\nabla \left(\|\nabla\sqrt{\tilde{\varphi}}\|_\infty+ \|\nabla^2 \sqrt{\tilde{\varphi}}  \|_\infty+ 4  \|\nabla\sqrt[4]{\tilde{\varphi}}\otimes\nabla\sqrt[4]{\tilde{\varphi}}\|_\infty\right)\sqrt{2^d\rho_{2\lambda}}\nonumber\\
&\times \lambda^{-1}\sqrt{J}_\lambda(\tilde{\aff},\chi,x)\left(\lambda^4\|\nabla^2 \tilde{A}\|+\|\nabla^2\tilde{\tau}-\nabla \tilde{A}\|^2\right)^{\frac{1}{2}}\quad .
\end{align}
Finally, we combine the estimates \eqref{Dsecondfinal}, \eqref{generaldaffdadddxfinal} and \eqref{generalsecondderivativeJ} to get
\begin{align}\label{generalsecondderivativeJ2}
& C_{con}\rho_\lambda \|\tilde{A}^{-1}\|^2 \left(\lambda^2\|\nabla^2 \tilde{A}\|^2+ \|\nabla^2\tilde{\tau}-\nabla \tilde{A}\|^2\right)^{\frac{1}{2}}\nonumber\\
\le&2d  \|\tilde{A}^{-1}\| \sqrt{\alpha_\nabla} \left(\|\nabla\sqrt{\tilde{\varphi}}\|_\infty^2+ \|\nabla^2\sqrt{\tilde{\varphi}}\|_\infty+ 4  \|\nabla\sqrt[4]{\tilde{\varphi}}\|^2_\infty\right)\sqrt{2^d\rho_{2\lambda}}
 \lambda^{-2}\sqrt{J}_\lambda
\nonumber\\
&+\left(16\sqrt{2^d \rho_\lambda \rho_{2\lambda}} +\sqrt{8d}\rho_\lambda \right) |\nabla^2W|_\infty  \|\nabla\sqrt{\tilde{\varphi}}\|_\infty  \lambda^{-1}\left\|\tilde{A}^{-1}\right\|^2\left(\lambda^2\|\nabla \tilde{A}\|^2+\|\nabla\tilde{\tau}- A\|^2 \right)^{\frac{1}{2}}    \quad.
\end{align}
We use the the upper bound \eqref{upperboundgradientaff} on $\lambda^2  \|\nabla\tilde{A}  \|^2+\|\nabla \tau- A\|^2$ to finally arrive at
\begin{align}
&C_{con}\rho_\lambda \|\tilde{A}^{-1}\|^2\left(\lambda^2\|\nabla^2 \tilde{A}\|^2+ \|\nabla^2\tilde{\tau}-\nabla \tilde{A}\|^2\right)^{\frac{1}{2}}\nonumber\\
\le& \lambda^{-2} \|\tilde{A}^{-1}\|
\alpha_\nabla^{\frac{1}{2}} \sqrt{J}_\lambda(\tilde{\aff},\chi,x)00 \left(\left(\|\nabla \sqrt{\tilde{\varphi}}\|_\infty^2+ \|\nabla^2\sqrt{\tilde{\varphi}}\|_\infty+ 2 \infty \|\nabla\sqrt[4]{\tilde{\varphi}}\|^2\right)d\sqrt{2^d\rho_{2\lambda}}\right.\nonumber\\
&\quad +\left. C_{con}^{-1}\left(2^d\|\nabla \sqrt{\tilde{\varphi}}\|_\infty^2\right)^{\frac{1}{2}}\left(16 2^{\frac{d}{2}} \rho_\lambda^{-\frac{1}{2}} \rho_{2\lambda} +\sqrt{8d}\sqrt{\rho_{2\lambda}}\right)\right)\quad .
\end{align}
\end{proof}

\subsection{Proof of Theorem \ref{TheoremLowerboundoftheensitywiththelocalminimizersofJ}}
\begin{proof}
{\bf Step 1: Following one minimizer}
According to lemma \ref{lowenergyregular} for a $x$ point of  energy density $\hat{h}_\lambda(\chi,x)\le \epsilon\le \frac{1}{4}  \min \left\{C_1^{El}\det(E)^2,C_2^{El}|E|^2,\vartheta \det E \right\}$
there exists $\hat{\aff}=(\hat{A},\hat{\tau})$ satisfying $h_\lambda(\hat{\aff},\chi,x)=\hat{h}_\lambda(\chi,x) $.
Moreover $(x,\hat{\aff})$ is $(\epsilon_\rho, \epsilon_J, C_A)$-regular,  where
\begin{equation}
C_A= \frac{3^{d-1}|E|^{d-1}}{2^{d-2}\det E}    \quad, \quad \epsilon_\rho=   2 \frac{\epsilon}{\det E}  							\quad, \quad 
\epsilon_J=     \frac{4 \epsilon }{\det E}     					\quad .
\end{equation}
According to Lemma \ref{BasStadev} for every reparametrisation $\affb(x)=(B(x),t(x))\in Gl_d(\Z^d)\times \Z^d$ with $|(B\hat{A})^{-1}|< 2C_A$ it holds
\begin{equation}
 J_{\lambda}\left(\affb\hat{\aff}(x), \chi, x\right)\le C_1^W\left(C^W_0\right)^{-1}\left\|B\hat{A}(x)\right\|^2\left\|(B\hat{A}(x))^{-1}\right\|^2 J_{\lambda}\left(\hat{\aff}(x), \chi ,x\right).
\end{equation}
We obtain with the help of $|\hat{A}|\le|\hat{A}^{-1}|^{d-1} \det \hat{A}$ and $\det \hat{A} \le \frac{3}{2}\det E$ from Lemma \ref{lowenergyregular} 
\begin{align}\label{Lambdaestimate2}
 J_{\lambda}\left(\affb\hat{\aff}(x), \chi, x\right)\le& \frac{C_1^W}{C^W_0} \left(\det B\hat{A}(x)\right)^2 \left\|(B\hat{A}(x))^{-1}\right\|^{2d} J_\lambda\left(\hat{\aff}(x),\chi, x\right)
 \le C_{rep} J_\lambda\left(\hat{\aff},\chi, x\right) ,
\end{align}
where $C_{rep}:=9 \left(C^W_0\right)^{-1} 4^{d-1} C_A^{2d} \det E^2$.
Since the density $\rho_\lambda$ does not depend on $\aff$ and it holds $\det A=\det BA$, the position $(x,\affb\hat{\aff}(x))$ is $(3C_A, \epsilon_\rho,  C_{rep} \epsilon_J )$-regular. For large enough $\lambda$ and sufficiently small $\epsilon$ the conditions of Lemma \ref{localaffgradient} are fulfilled, and there exists a unique local minimizer $\tilde{\aff}_B$ in a neighborhood of $\affb\hat{\aff}$.
Furthermore, we get the estimate \eqref{Jlocalminimizerabstand} for the distance between $\affb\hat{\aff}$ and $\tilde{\aff}_B$.
Due to the estimate \eqref{Lambdaestimate2} we have for sufficiently small $\hat{\epsilon}$
\begin{align}\label{energyabstandA}
\left|\affb\hat{\aff}(x)-\tilde{\aff}_B(x)\right\|_\lambda\le& \left(  \frac{1}{2} C_{Con} \|A_0^{-1}\|^2\rho_\lambda \right)^{-1/2} \sqrt{J_\lambda}(\affb\hat{\aff}(x),\chi ,x)\nonumber\\
\le&\left(  \frac{1}{2} C_{Con} \|A_0^{-1}\|^2\rho_\lambda \right)^{-1/2} \hat{\epsilon} \le \frac{1}{8} \delta_\aff \quad .
\end{align}
Additionally we have the estimate \eqref{upperboundgradientaff} from Lemma \ref{Upperboundgradients} for the gradients in this branch. Hence, we get
\begin{align}
J_\lambda\left(\hat{\aff}, \chi, x_0\right)\ge&C_{rep}^{-1}J_\lambda\left(\tilde{\aff}_B, \chi, x_0\right)\nonumber\\
\ge&C_{rep}^{-1}\frac{C_{con}^2 \left\|\tilde{A}_B^{-1}\right\|^2}{\alpha_\nabla 2^d\|\nabla \sqrt{\tilde{\varphi}} \|_\infty^2} \frac{\rho_\lambda^2}{\rho_{2\lambda}} 
\lambda^2\left( \lambda^2\|\nabla \tilde{A}_B(x)\|^2+\|\nabla \tilde{\tau}(x)_B- \tilde{A}_B  \|^2\right)\quad .
\end{align}
We consider a second point $y=y(s) \in B_{1,5\lambda}(x)$ with $\hat{\lambda}(\chi,y)\le\hat{\epsilon}$ and $\int \frac{dy}{d\tilde{s}}d\tilde{s}\le \delta x$ sufficiently small and obtain
\begin{align}\label{extrapolationcontinuierlich}
\left|\tilde{A}_B(y)-\tilde{A}_B(x)\right|\le&O(\lambda^{-2}\delta x)\sqrt{J}_\lambda\left(\hat\aff, \chi, x_0\right)
\le O(\lambda^{-2}\delta x\sqrt{\hat{\epsilon}}) \le \frac{1}{8} \delta_\aff \quad, \nonumber\\
\left|\tilde{\tau}_B(y)-\tilde{\tau}_B(x)-\tilde{A}_B(x)(y-x)\right| \le &O(\lambda^{-1}|x-y|)\sqrt{J}_\lambda\left(\hat\aff, \chi, x_0\right)\le  O(\lambda^{-1}\delta x\sqrt{\hat{\epsilon}})\le \frac{1}{8}\delta_\aff \quad .
\end{align}
For $\hat{\epsilon}(x)$ and $\hat{\epsilon}(y)$ small enough the points $x$ and $y$ are regular according due to Lemma \ref{lowenergyregular}, and we can use Lemma \ref{Theoremjumping} to obtain  $\affb(x,y)=(B(x,y),\tau(x,y))\in Gl_d(\Z)\times \Z^d$ such that
\begin{align}
\|Id- \hat{A}(x)^{-1}B(x,y)\hat{A}(y)\|<&  \frac{c^{A}_J}{\sqrt{\det \hat{A}(y)}}  2^d\lambda^{-1} \min\left\{\sqrt{J}_\lambda(\hat{\aff}(x),\chi, x),\sqrt{J}_\lambda(\hat{\aff}(y),\chi, y)\right\}\nonumber\\
\le& O\left(\lambda^{-1}\sqrt{\hat{\epsilon}}\right)\quad ,
\end{align}
and
\begin{align}
&\left|B(x,y)\hat{\tau}(y)+t(x,y) -\hat{\tau}(x)-\frac{B(x,y)\hat{A}(y)+\hat{A}(x)}{2}\left(y-x\right)\right|\nonumber\\
< & \frac{c^{\tau}_J}{\sqrt{\det \hat{A}(y)}}  2^d\min\left\{\sqrt{J}_\lambda(\hat{\aff}(x),\chi, x),\sqrt{J}_\lambda(\hat{\aff}(y),\chi, y)\right\}\le 
O\left(\sqrt{\hat{\epsilon}}\right)
\end{align}
For small enough $\hat{\epsilon}$ and large enough $\lambda$ we can control the change of $\hat{\tau}$ and $\hat{A}$, because we restricted $B$ to a compact set.
\begin{align}\label{extrapolationdiscrete}
\lambda\left|B(x)\left(\hat{A}(x)-B(x,y)\hat{A}(y)\right)\right|\le&\frac{1}{8}  \delta_A \quad ,\nonumber\\
\left|B(x)\left(B(x,y)\hat{\tau}(y)+t(x,y) -\hat{\tau}(x)-\frac{B(x,y)\hat{A}(y)+\hat{A}(y)}{2}\left(y-x\right)\right)\right|\le&\frac{1}{8} \delta_\aff \quad .
\end{align}
We introduce the notation $\affb(y):=B(x)B(x,y)$. By comparing the estimates \eqref{energyabstandA}, \eqref{extrapolationcontinuierlich} and \eqref{extrapolationdiscrete} we obtain
\begin{align}
&\left|B(y)\hat{A}(y)-\tilde{A}_B(y)\right|\nonumber\\&\le\left|B(y)\hat{A}(y)-B(x)\hat{A}(x)\right|+\left|B(x)\hat{A}(x)-\tilde{A}_B(x)\right|+\left|\tilde{A}_B(x)-\tilde{A}_B(y)\right|\le \frac{3}{8\lambda}\delta_\aff.
\end{align}
For $\tau$ we estimate
\begin{align}
&|B(x)\left(B(x,y)\hat{\tau}(y)+t(x,y)-\tilde{\tau}_B(y)\right)|\nonumber\\
\le&\left|B(x)\left(B(x,y)\hat{\tau}(y)+t(x,y) -\hat{\tau}(x)-\frac{B(x,y)\hat{A}(y)+\hat{A}(x)}{2}\left(y-x\right)\right)\right|\nonumber\\
&+\left|B\hat{\tau}(x)+t-\tilde{\tau}_B(x)\right|+\left|\tilde{\tau}_B(x)+\tilde{A}_B(x)(y-x)-\tilde{\tau}_B(y)\right|\nonumber\\
&+\left|\tilde{A}_B(x)-B(x)\frac{B(x,y)\hat{A}(y)+\hat{A}(x)}{2}\left(y-x\right)\right|\nonumber\\
\le&\frac{3}{8}\delta_\aff+\frac{3}{4}\lambda|\tilde{A}_B(x)-B(y)\hat{A}(y)|+\frac{3}{4}\lambda |\tilde{A}_B(x)-B(x)\hat{A}(x)|
\le\frac{21}{32} \delta_\aff  .
\end{align}
We summarize 
\begin{equation}\label{distanceminimizer}
\left\|\affb(y)\hat{\aff}(y)-\tilde{\aff}_B(y)\right\|_\lambda \le \delta_\aff \quad .
\end{equation}
Since  $\affb(y)\hat{\aff}(y)$ fulfills the same conditions for $y$ as $\affb(x)\aff(x)$ for $x$ we can apply Theorem \ref{localaffgradient}. Hence, there is one unique local minimizer satisfying
\begin{equation}
\left\|\affb(y)\hat{\aff}(y)-\tilde{\aff}\right\|_\lambda \le \delta_\aff \quad.
\end{equation}
Therefore, $\tilde{\aff}_B(y)$ has to be this minimizer because of the estimate \eqref{distanceminimizer}. 

{\bf Step 2: The lower bound for the energy density:} 
Due to estimate \eqref{Jlocalminimizerabstand} from Lemma \ref{localaffgradient} we get for $\tilde{\aff}_B(y)$
\begin{align}
J_\lambda\left(\hat{\aff}(y), \chi, y\right)\ge&C_{rep}^{-1} J_\lambda\left(\affb(y)\hat{\aff}(y), \chi, y\right)\nonumber\\
\ge &C_{rep}^{-1}J_\lambda(\tilde{\aff}_B,\chi,x)+\frac{1}{2} C_{Con}C_{rep}^{-1} \left\|\left(B\hat{A}\right)^{-1}\right\|^2 \rho_\lambda \left\|\affb\hat{\aff}-\tilde{\aff}_B\right\|_\lambda^2 \quad .
\end{align}
Applying Lemma \ref{Upperboundgradients}, we get
\begin{align}\label{estimategraident2}
J_\lambda\left(\hat{\aff}(y), \chi, y\right)\ge&
\frac{1}{2} C_{Con}C_{rep}^{-1} \left|\left(B\hat{A}\right)^{-1}\right\|^2\rho_\lambda \left\|\affb(y)\hat{\aff}-\tilde{\aff}_B\right\|_\lambda^2\nonumber\\
&+\tilde{C}_{\nabla }\left(\frac{\rho_{2\lambda}}{\rho_\lambda}\right) \left\|\tilde{A}_B^{-1}\right\|^2\rho_\lambda
\left( \lambda^2\|\nabla \tilde{\tau}_B- \tilde{A}_B  \|^2 + \lambda^6\|\nabla^2 \tilde{A}_B\|^2\right)\nonumber\\
&+\tilde{C}_{\nabla }\left(\frac{\rho_{2\lambda}}{\rho_\lambda}\right) \left\|\tilde{A}_B^{-1}\right\|^2\rho_\lambda \lambda^4\left( \|\nabla \tilde{A}_B\|^2+\|\nabla^2\tilde{\tau}_B-\nabla \tilde{A}_B\|^2\right) \quad ,
\end{align}
We apply the estimates \eqref{estimategraident2} to get a lower bound for the density
\begin{align}
\hat{h}_\lambda(\chi,y)=&J_\lambda\left(\hat{\aff}(y),\chi,y\right)+\nu(\hat{A}(y),\chi, y)+F(\hat{A}(y))\nonumber\\
\ge& F(\hat{A})+\frac{1}{2} C_{Con}C_{rep}^{-1} \left\|\left(B\hat{A}\right)^{-1}\right\|^2 \rho_\lambda \left\|\affb\hat{\aff}-\tilde{\aff}_B\right\|_\lambda^2\nonumber\\
&+\tilde{C}_{\nabla }\left(\frac{\rho_{2\lambda}}{\rho_\lambda}\right) \left\|\tilde{A}_B^{-1}\right\|^2\rho_\lambda\left( \lambda^2\|\nabla \tilde{\tau}_B- \tilde{A}_B  \|^2 + \lambda^6\|\nabla^2 \tilde{A}_B\|^2\right)\nonumber\\
&+\tilde{C}_{\nabla }\left(\frac{\rho_{2\lambda}}{\rho_\lambda}\right) \left\|\tilde{A}_B^{-1}\right\|^2\rho_\lambda
\left(\lambda^4 \|\nabla \tilde{A}_B\|^2+\lambda^4\|\nabla^2\tilde{\tau}_B-\nabla \tilde{A}_B\|^2\right).
\end{align}
Since we calculate a lower bound, we can skip the $\nabla^2 \tilde{A}_B$ term. We also estimate
\begin{equation}
\left\|\affb\hat{\aff}-\tilde{\aff}_B\right\|_\lambda^2\ge\lambda^2\left\|B\hat{A}-\tilde{A}_B\right\|^2 \quad .
\end{equation}
Due to $2(a^2+b^2)\ge (a+b)^2$ we summarize
\begin{equation}
\lambda^4 \|\nabla \tilde{A}_B\|^2+\lambda^4\|\nabla^2\tilde{\tau}_B-\nabla \tilde{A}_B\|^2\ge\frac{1}{2}\lambda^4\|\nabla^2\tilde{\tau}_B\|^2 \quad .
\end{equation}
Due to the estimate \ref{estimategraident2} the difference between $\left\|\tilde{A}_B^{-1}\right\|^2$ and $\left\|\nabla\tilde{\tau}_B^{-2}\right\|^2$ is $O\left(\lambda^{-1}\sqrt{\hat{\epsilon}}\right)$. We estimate for small $\hat{\epsilon}$
\begin{align}
\rho_\lambda=&\det \hat{A}+O(\sqrt{\hat{\epsilon}})=\det \tilde{A}_B + O(\sqrt{\hat{\epsilon}})=\det \nabla \tilde{\tau}_B+ O(\sqrt{\hat{\epsilon}})\quad .
\end{align}
Hence, we get for small enough $\hat{\epsilon}$ and large enough $\lambda$.
\begin{align}
\hat{h}_\lambda(\chi,y) \ge& F(\hat{A})+\frac{1}{2}\tilde{C}_{\nabla }\left(\frac{\rho_{2\lambda}}{\rho_\lambda}\right)  \left\|\tilde{A}_B^{-1}\right\|^2  \det(\nabla \tilde{\tau}_B) \lambda^2\|\nabla \tilde{\tau}_B- \tilde{A}_B  \|^2\nonumber\\
&+\frac{1}{3} C_{Con}C_{rep}^{-1} \left\|\left(B\hat{A}\right)^{-1}\right\|^2    \left\|\affb\hat{\aff}-\tilde{\aff}_B\right\|_\lambda^2\det(\nabla \tilde{\tau}_B)\nonumber\\
&+\frac{1}{2}\tilde{C}_{\nabla }\left(\frac{\rho_{2\lambda}}{\rho_\lambda}\right)\left\|\nabla\tilde{\tau}_B^{-1}(y)\right\|^2\lambda^4\|\nabla^2\tilde{\tau}_B\|^2\det(\nabla \tilde{\tau}_B) \quad .
\end{align}
We summarize all but the $\|\nabla^2 \tilde{\tau}_B\|^2$ term to $U(\tilde{\tau}_B, \tilde{A}_B, B(y),\hat{A})$
\begin{align}
\hat{h}_\lambda(\chi,y)\ge& \frac{1}{2}\tilde{C}_{\nabla }\left(\frac{\rho_{2\lambda}}{\rho_\lambda}\right)\left\|\nabla\tilde{\tau}_B^{-1}\right\|^2\lambda^4\|\nabla^2\tilde{\tau}_B\|^2\det(\nabla \tilde{\tau}_B)\nonumber\\
&+U(\tilde{\tau}_B, \tilde{A}_B, B, \hat{A}) \quad .
\end{align}
Finally, we use
\begin{equation}
U(\tilde{\tau}_B, \tilde{A}_B, B(y), \hat{A})\ge \inf\left\{U(\tilde{\tau}_B, A_1, B, A_2)| A_1,A_2\in Gl_d(\R), B\in Gl_d(\Z)   \right\}\quad .
\end{equation}
\end{proof}

\appendix

\section{Basic calculations}

\begin{lemma}\label{condistapp}
For all $C_A$ exists $\hat{\lambda}$ such that for all $\lambda>\hat{\lambda}$ all $A_R\in Gl_d(\R)$, $\tau_R\in \R^d$ 
and $x\in B_{2\lambda}(\Omega)$ and $\psi\in C^{infty}(\R^d)$ with $\psi(y)=0$ for $|y|>1$ it holds
\begin{equation}
\lambda^{d}\int_{\R^d}\psi\left(y\right)dy \det A_R= \sum_{x_i\in \chi_\aff} \psi\left(\frac{x_i-x}{\lambda}\right)+ O\left(\lambda^{-2}|\nabla^2 \psi|_\infty\right)
\end{equation}
In particular it holds
\begin{equation}
\rho_\lambda\left(\chi_{\aff_R},0\right)=\det A_R+O(\lambda^{-2})
\end{equation}
\end{lemma}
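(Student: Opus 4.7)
}

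My plan is to recognize this as a quadrature error estimate: the sum on the right is a Bravais-lattice Riemann sum for the integral on the left. The key ingredient will be the central symmetry of the Voronoi cells of a Bravais lattice, which forces the first-order Taylor term to vanish and gives a genuinely second-order error.

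First I would perform the obvious change of variables $u = (y-x)/\lambda$, writing $\chi_{\aff_R}=A_R^{-1}(\Z^d-\tau_R)$ and introducing the rescaled lattice points $u_z := (A_R^{-1}(z-\tau_R)-x)/\lambda$ for $z\in\Z^d$. These form a Bravais lattice in $u$-space with fundamental cell volume $|V| = (\lambda^d \det A_R)^{-1}$, and under the standing bound $\|A_R^{-1}\|\le C_A$ the diameter of each Voronoi cell $V_z$ around $u_z$ is $h = O(\lambda^{-1})$.

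Next I would partition $\R^d$ into the Voronoi cells $V_z$ and apply Taylor's theorem on each cell:
\begin{equation*}
\psi(u) = \psi(u_z) + \nabla\psi(u_z)\cdot(u-u_z) + R_z(u), \qquad |R_z(u)|\le \tfrac{1}{2}\|\nabla^2\psi\|_\infty |u-u_z|^2.
\end{equation*}
The crucial observation is that the reflection $u \mapsto 2u_z - u$ preserves the Bravais lattice and hence preserves $V_z$, so $V_z$ is centrally symmetric about $u_z$ and $\int_{V_z}(u-u_z)\,du = 0$. Integrating and summing gives
\begin{equation*}
\int_{\R^d}\psi(u)\,du \;=\; \sum_{z\in\Z^d}\psi(u_z)\,|V|\; +\; O\!\left(\|\nabla^2\psi\|_\infty\, h^2 \sum_{z:\,V_z\cap\mathrm{supp}\,\psi\neq\emptyset}|V|\right).
\end{equation*}
Because $\psi$ is supported in $|u|\le 1$, the total volume of cells that can contribute is $O(1)$ (the unit ball, enlarged by a boundary collar of width $h$, which is absorbed for $\lambda$ large enough since $\psi$ is smooth and the collar contribution is itself controlled by $\|\nabla^2\psi\|_\infty$). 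Thus the error is $O(\lambda^{-2}\|\nabla^2\psi\|_\infty)$. Multiplying through by $|V|^{-1} = \lambda^d \det A_R$ yields the claimed identity.

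The corollary for $\rho_\lambda$ follows by specializing $\psi = \tilde\varphi$ and dividing by $C_\varphi\lambda^d$. The only real subtlety is the central symmetry of $V_z$, but this is a standard property of Voronoi cells of Bravais lattices, so I do not expect any genuine obstacle here; the rest is a routine Taylor estimate. The error term $|\nabla^2\psi|_\infty$ in the statement should be read with the appropriate $\lambda^d$ scaling inherited from the volume of the support (so that, e.g., the application to $\rho_\lambda$ gives exactly $\det A_R + O(\lambda^{-2})$).
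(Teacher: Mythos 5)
Your proposal is correct and follows essentially the same route as the paper: partition $\R^d$ into centrally symmetric fundamental cells centered at the lattice points (the paper uses translates of the fundamental parallelepiped, you use Voronoi cells — an immaterial difference), Taylor-expand $\psi$ to second order on each cell, let the central symmetry annihilate the linear term, and bound the quadratic remainder by $O(\lambda^{-2}\|\nabla^2\psi\|_\infty)$ per unit volume of the support. Your closing remark about the $\lambda^d$ normalization of the error term is also consistent with how the lemma is actually used for $\rho_\lambda$.
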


\begin{proof}
OBDA we can restrict ourselves to $x=0$. We denote $Q_i:=x_i+[-1/2,1/2)^d$. We calculate
\begin{align}
\lambda^{d}\int_{\R^d}\psi\left(y\right)dy=&\int_{\R^d}\psi\left(\lambda^{-1}y\right)dy
=\sum{x_i \in \chi_{\aff_R}}\int_{Q_i}\psi\left(\lambda^{-1}y\right)dy\nonumber\\
=&\sum_{x_i \in \chi_{\aff_R}}\int_{Q_i}\psi\left(\lambda^{-1}x_i\right)+\nabla \psi\left(\lambda^{-1}x_i\right) \left[\frac{y-x_i}{\lambda}\right]+\frac{1}{2}\nabla^2 \psi\left(\lambda^{-1}x_i\right)\left[\frac{y-x_i}{\lambda}\right]dy\nonumber\\
=&\det A_R^{-1}\lambda^d \sum_{x_i\in \chi_\aff} +O\left(\lambda^{-2}|\nabla^2 \psi|_\infty\right)
\end{align}
\end{proof}

\begin{lemma}\label{BasStadev}
For all $\aff=(A,\tau)\in Gl_d(\R)\times R^d$, all positions $x$ and configurations $\chi$ it holds
\begin{align}\label{GLStandartdeviation}
 J_{\lambda}\left(\aff, \chi,x\right) \ge&\frac{C_0^W}{C_{\varphi}\lambda^d} \sum_{i} \mathrm{dist}^2(x_i, \chi_{\aff}+x ) \varphi\left(\lambda^{-1}\left|x_i-x\right|\right)\quad ,\nonumber\\
 J_{\lambda}\left(\aff, \chi,x\right) \le& \frac{C_1^W\left\|A\right\|^2\left\|A^{-1}\right\|^2}{C_{\varphi}\lambda^d} \sum_{i} \mathrm{dist}^2(x_i, \chi_{\aff}+x ) \varphi\left(\lambda^{-1}\left|x_i-x\right|\right) \quad.
\end{align}
In particular, for $\affb=(B.t)\in Gl_d(\Z) \times \Z^d$ it holds
\begin{equation}
 J_{\lambda}\left(\aff, \chi, x\right) \le  \frac{C_1^W\left\|A\right\|^2\left\|A^{-1}\right\|^2}{C^W_0} J_{\lambda}\left(\affb\aff, \chi ,x\right)\quad .
\end{equation}
\end{lemma}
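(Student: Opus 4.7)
The plan is to prove the two-sided bound \eqref{GLStandartdeviation} by working atom-by-atom, transforming the argument of $W$ into a distance-from-$\Z^d$ statement, and then comparing that distance with $\mathrm{dist}(x_i,\chi_\aff+x)$ via norm inequalities between Euclidean and Frobenius norms. The deduction of the final estimate on $J_\lambda(\aff,\chi,x)$ vs $J_\lambda(\affb\aff,\chi,x)$ then follows by chaining the upper and lower bounds, using that $\chi_{\affb\aff}=\chi_\aff$.

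Concretely, for each atom $x_i$ set $y_i:=A(x_i-x)+\tau$. Since $\chi_\aff+x=A^{-1}(\Z^d-\tau)+x$, one has
\[
\mathrm{dist}(x_i,\chi_\aff+x)=\inf_{z\in\Z^d}|A^{-1}(y_i-z)|.
\]
Using the elementary operator-norm bounds $|A^{-1}v|\le|A^{-1}|\,|v|\le\|A^{-1}\|\,|v|$ and $|v|\le|A|\,|A^{-1}v|\le\|A\|\,|A^{-1}v|$ (since operator norm is dominated by Frobenius norm), and taking the infimum over $z\in\Z^d$, I get
\[
\|A\|^{-1}\,\mathrm{dist}(y_i,\Z^d)\le\mathrm{dist}(x_i,\chi_\aff+x)\le\|A^{-1}\|\,\mathrm{dist}(y_i,\Z^d).
\]
Squaring and combining with the coercivity of $W$,
\[
C_0^W\,\mathrm{dist}^2(y_i,\Z^d)\le W(y_i)\le C_1^W\,\mathrm{dist}^2(y_i,\Z^d),
\]
multiplying by $\|A^{-1}\|^2$, and inserting into the definition of $J_\lambda$ gives termwise the required two-sided bound. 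Summing against the non-negative weights $\varphi(\lambda^{-1}|x_i-x|)/(C_\varphi\lambda^d)$ yields \eqref{GLStandartdeviation}.

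For the final assertion I first note that $\chi_{\affb\aff}=\chi_\aff$ whenever $\affb=(B,t)\in Gl_d(\Z)\times\Z^d$: since $\det B=1$, $B^{-1}$ has integer entries, so $B^{-1}(\Z^d-t)=\Z^d$, and hence
\[
(BA)^{-1}(\Z^d-B\tau-t)=A^{-1}B^{-1}(\Z^d-B\tau-t)=A^{-1}(\Z^d-\tau).
\]
Applying the upper bound in \eqref{GLStandartdeviation} to $\aff$ and the lower bound to $\affb\aff$ and dividing, the weighted distance sum cancels and one obtains
\[
J_{\lambda}(\aff,\chi,x)\le\frac{C_1^W\|A\|^2\|A^{-1}\|^2}{C_0^W}\,J_{\lambda}(\affb\aff,\chi,x).
\]

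There is no real obstacle here; the only thing to watch is the consistent use of the Frobenius norm (rather than the operator norm $|\cdot|$) in the coefficients, since that is what appears in the definition of $J_\lambda$. The invariance $\chi_{\affb\aff}=\chi_\aff$ is exactly the observation already recorded after Definition \ref{Reparametrisation}, and is what makes the last inequality possible despite the prefactor $\|A^{-1}\|^2$ in $J_\lambda$ changing between $\aff$ and $\affb\aff$.
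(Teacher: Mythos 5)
Your proof is correct and follows essentially the same route as the paper's: termwise comparison of $W(A(x_i-x)+\tau)$ with $\mathrm{dist}^2(x_i,\chi_\aff+x)$ via the coercivity of $W$ and the operator-norm bounds $|A^{-1}v|\le\|A^{-1}\||v|$, $|v|\le\|A\||A^{-1}v|$, followed by the lattice invariance $\chi_{\affb\aff}=\chi_\aff$ for the last inequality. No issues.
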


\begin{proof}
On the one hand we have
\begin{align}
C_{\varphi}\lambda^d J_{\lambda}\left(\aff, \chi,x\right)
=&\left\|A^{-1}\right\|^2\sum_{i} W(A\left(x_i-x\right)+\tau)\varphi\left(\lambda^{-1}\left|x_i-x\right|\right)\nonumber\\
\leq &C_1^W\left\|A^{-1}\right\|^2\sum_{i} \mathrm{dist}^2 (A\left(x_i-x\right)+\tau, \Z^d)\varphi\left(\lambda^{-1}\left|x_i-x\right|\right)     \nonumber\\
\leq &C_1^W\left\|A^{-1}\right\|^2\left\|A\right\|^2  \sum_{i}  \mathrm{dist}^2 (x_i, A^{-1}(\Z^d-\tau)+x)\varphi\left(\lambda^{-1}\left|x_i-x\right|\right).     
\end{align}
On the other hand we have
\begin{align}
C_{\varphi}\lambda^d J_{\lambda}\left(\aff , \chi,x\right)
=&\left\|A^{-1}\right\|^2\sum_{i} W(A\left(x_i-x\right)+\tau)\varphi\left(\lambda^{-1}\left|x_i-x\right|\right)\nonumber\\
\ge &C_0^W\left\|A^{-1}\right\|^2\sum_{i} \mathrm{dist}^2 (A\left(x_i-x\right)+\tau, \Z^d)\varphi\left(\lambda^{-1}\left|x_i-x\right|\right)     \nonumber\\
\ge &C_0^W \sum_{i}  \mathrm{dist}^2 (x_i, A^{-1}(\Z^d-\tau)+x)\varphi\left(\lambda^{-1}\left|x_i-x\right|\right). 
\end{align}
\end{proof}

\begin{lemma}\label{Dichteregular}
If  $x\in B_{2\lambda}(\Omega)$ and $\aff\in Gl_d(\R)\times \R^d$, we have
\begin{equation}\label{eqDichteregular}
\rho_{\aff,\beta}^{irr}(x)\le \frac{1}{C_0^W \beta^2 } J_\lambda(\aff,\chi,x) \quad, \quad \rho_{\aff, \beta}^{reg}(x)\ge\rho_\lambda(\chi,x)-\frac{1}{C_0^W \beta^2 } J_\lambda(\aff,\chi,x) \quad. 
\end{equation}
\end{lemma}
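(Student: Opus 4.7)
The plan is to combine the coercivity of $W$ with the splitting of atoms into regular and irregular parts. The key is that Lemma \ref{BasStadev} already gives the lower bound
\begin{equation*}
J_{\lambda}(\aff,\chi,x) \;\ge\; \frac{C_0^W}{C_{\varphi}\lambda^d}\sum_{i}\dist^2(x_i,\chi_{\aff}+x)\,\varphi\!\left(\lambda^{-1}|x_i-x|\right),
\end{equation*}
so no new analysis of $W$ is required here; the lemma is essentially a bookkeeping consequence of that bound together with Definition~\ref{definitionregularatoms}.

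For the first inequality I would restrict the above sum to the irregular atoms $\chi_{\aff,\beta,x}^{irr}$. By definition, every such atom satisfies $\dist(x_i,\chi_{\aff}+x)>\beta$, so the distance-squared factor can be replaced by $\beta^2$ as a lower bound. What remains is exactly $\beta^2$ times the density $\rho_{\aff,\beta}^{irr}(x)$, and dividing by $C_0^W\beta^2$ gives the stated estimate.

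The second inequality is then immediate from the trivial decomposition
\begin{equation*}
\rho_{\lambda}(\chi,x) \;=\; \rho_{\aff,\beta}^{reg}(x)+\rho_{\aff,\beta}^{irr}(x),
\end{equation*}
which follows because $\chi=\chi_{\aff,\beta,x}^{reg}\cup\chi_{\aff,\beta,x}^{irr}$ is a disjoint partition and the cutoff $\varphi$ is applied to both pieces identically. Substituting the bound just obtained for $\rho_{\aff,\beta}^{irr}(x)$ yields the claim.

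There is no real obstacle here: the only small point to be careful about is that the partition of $\chi$ into regular and irregular atoms at level $\beta$ is exhaustive (no atom is counted twice and none is omitted), which is guaranteed by the strict/non-strict inequality convention in Definition~\ref{definitionregularatoms}. Because both density sums use the same weight $\varphi(\lambda^{-1}|x_i-x|)$, the additivity holds pointwise and no integration or continuity argument is needed.
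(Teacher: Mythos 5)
Your proposal is correct and follows exactly the paper's own argument: restrict the lower bound of Lemma \ref{BasStadev} to the irregular atoms, bound $\dist^2$ below by $\beta^2$, and use the additive decomposition $\rho_\lambda=\rho_{\aff,\beta}^{reg}+\rho_{\aff,\beta}^{irr}$. No differences worth noting.
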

\begin{proof}
We use equation \eqref{GLStandartdeviation} to get
\begin{align}
 J_{\lambda}\left(\aff, \chi_{\aff_R},x\right) \ge& \frac{C_0^W}{C_{\varphi}\lambda^d} \sum_{i\in\chi} \mathrm{dist}^2( x_i, \chi_{\aff}+x ) \varphi\left(\lambda^{-1}\left|x_i-x\right|\right)\nonumber\\
\ge&\frac{C_0^W}{C_{\varphi}\lambda^d} \sum_{x_i\in\chi_{\aff, \beta, x}^{irr}} \mathrm{dist}^2( x_i, \chi_{\aff}+x ) \varphi\left(\lambda^{-1}\left|x_i-x\right|\right)\nonumber\\
\ge&\frac{C_0^W}{C_{\varphi}\lambda^d} \sum_{x_i\in \chi^{irr}_{\aff, \beta, x}} \beta^2 \varphi\left(\lambda^{-1}\left|x_i-x\right|\right)\nonumber\\
\ge&C_0^W \beta^2 \rho_{\aff , \beta}^{irr}\quad .
\end{align}
Because it holds $\rho_\lambda=\rho_{\aff, \beta}^{irr}+\rho_{\aff, \beta}^{reg}$, we obtain equation \eqref{eqDichteregular}
\end{proof}

\section{Estimate on the change of $\aff$ in an sequence of regular points.}

\begin{lemma}\label{Ajumpingchain}
For all $C_A>0$ there exists $\hat{\lambda}$ $\epsilon_\rho$ $\epsilon_J$ such that for all $\lambda>\hat{\lambda}$ the following holds:
If there is a exists $(\aff_j,y_j)\in Gl_d(\R)\times \R^d\times \Omega $, $x_j$  is $(\epsilon_\rho, \epsilon_J, C_A)$-regular with $\aff_j$ 
and  $\affb_{j-1,j}=(B_{j-1,j}, t_{j-1,j})$ denotes the associated reparametrisation sequence given by Theorem \ref{addingatoms} for $j=0...N$ ,
then it holds 
\begin{align}
\left|1-A_0^{-1}B_{0,N}A_N  \right|
\leq& \frac{c^{A}_J}{\lambda}\sum_{j=1}^N\hat{b}_{j-1,j} \exp \left(\frac{c^{A}_J}{\lambda}\sum_{j=1}^N\hat{b}_{j-1,j}\right)\quad,
\end{align}
and
\begin{align}
&\left|\sum_{k=1}^N B_{0,k-1}t_k+B_{0,N}\tau_N-\tau_0+\frac{B_{0,N}A_N+A_0}{2}\left(y_N-y_0\right)\right|\nonumber\\
\leq&\left(C_AC_{|A|}c^{\tau}_J +\frac{c_J^A}{\lambda}\sum_{j=1}^N \left|y_{j+1}-y_j\right|\right)
\left|A_0\right|\sum_{j=1}^{N}\hat{b}_{j-1,j} \exp \left(\frac{C_A}{\lambda} \sum_{k=1}^N\hat{b}_{k-1,k}\right)\quad ,
\end{align}
where
\begin{align}
J_j=&J_\lambda( \aff_j,\chi, y_j)\quad ,\nonumber\\
\hat{b}_{j-1,j}=&\left(\frac{2\lambda}{2\lambda-|y_j-y_{j-1}|}\right)^{d/2}\left(\det A_j\right)^{-1/2} \max\left\{\sqrt{J_j},\sqrt{J_{j-1}}\right\} \quad ,\nonumber\\
B_{k_1,k_2}=&\prod_{j=k_1+1}^{k_2}B_{j-1,j}\quad.
\end{align}
\end{lemma}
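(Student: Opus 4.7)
The plan is to derive both bounds by iterating the single-step estimate of Theorem \ref{addingatoms} along the chain, via a discrete Gronwall-type argument.

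For the first bound, introduce $X_k := A_0^{-1} B_{0,k} A_k - id$ and $Y_k := A_{k-1}^{-1} B_{k-1,k} A_k - id$. Inserting the identity $A_{k-1} A_{k-1}^{-1}$ between $B_{0,k-1}$ and $B_{k-1,k}$ gives the recursion $X_k = X_{k-1} + Y_k + X_{k-1} Y_k$, whence $|X_k| \le (1+|Y_k|)|X_{k-1}| + |Y_k|$. Iterating this inequality yields the discrete Gronwall estimate
\begin{equation}
|X_N| \le \sum_{k=1}^{N}|Y_k|\prod_{\ell=k+1}^{N}(1+|Y_\ell|) \le \left(\sum_{k=1}^{N}|Y_k|\right)\exp\left(\sum_{k=1}^{N}|Y_k|\right).
\end{equation}
The single-step bound from Theorem \ref{addingatoms} gives $|Y_k| \le \frac{c^A_J}{\lambda}\hat b_{k-1,k}$, which converts the above into the first claim.

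For the translation bound, introduce $\hat\tau_j := B_{0,j}\tau_j + \sum_{k=1}^{j} B_{0,k-1} t_k$. Then $\hat\tau_0 = \tau_0$, and the definition of $\delta\tau_{j-1,j}$ yields the telescoping identity
\begin{equation}
\hat\tau_j - \hat\tau_{j-1} = B_{0,j-1}\,\delta\tau_{j-1,j} + \frac{B_{0,j} A_j + B_{0,j-1} A_{j-1}}{2}(y_j - y_{j-1}).
\end{equation}
Summing from $j=1$ to $N$ and rewriting $\frac{B_{0,N}A_N + A_0}{2}(y_N-y_0) = \sum_j \frac{B_{0,N}A_N + A_0}{2}(y_j - y_{j-1})$ splits the quantity to be estimated into (a) a $\delta\tau$-contribution $\sum_j B_{0,j-1}\delta\tau_{j-1,j}$ and (b) a midpoint-correction $\sum_j \frac{B_{0,j}A_j + B_{0,j-1}A_{j-1} - B_{0,N}A_N - A_0}{2}(y_j - y_{j-1})$.

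The main obstacle is to control the factors $B_{0,j-1}$ and $B_{0,k}A_k - A_0$ appearing in (a) and (b) uniformly in $j,k$ via the output of part 1. For (a), I factor $B_{0,j-1} = A_0(id+X_{j-1})A_{j-1}^{-1}$, use the single-step $\tau$-bound together with $|A_{j-1}^{-1}|\le C_A$ and $|A_{j-1}|\le C_{|A|}$ to estimate $|A_{j-1}^{-1}\delta\tau_{j-1,j}| \le c^\tau_J C_A C_{|A|}\hat b_{j-1,j}$, and absorb the factor $1+|X_{j-1}|$ into the exponential via the inequality $1 + |X_{j-1}| \le \prod_{\ell=1}^{j-1}(1+|Y_\ell|) \le \exp\bigl(\sum_\ell|Y_\ell|\bigr)$, which follows by induction from the same recursion $|X_k|\le(1+|Y_k|)|X_{k-1}|+|Y_k|$. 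For (b), since $B_{0,k}A_k - A_0 = A_0 X_k$, part 1 gives $|B_{0,k}A_k - A_0| \le |A_0|\bigl(\tfrac{c^A_J}{\lambda}\sum\hat b_{\ell-1,\ell}\bigr)\exp\bigl(\tfrac{c^A_J}{\lambda}\sum\hat b_{\ell-1,\ell}\bigr)$ uniformly in $k$; summing these differences against $|y_j - y_{j-1}|$ produces the $\frac{c^A_J}{\lambda}\sum|y_{j+1}-y_j|$ factor in the claim. Collecting the common factor $|A_0|\sum\hat b_{j-1,j}\exp(\cdot)$ in (a) and (b) then yields the second estimate.
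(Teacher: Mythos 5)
Your argument is correct and follows essentially the same route as the paper's proof: the first bound is the paper's telescoping of $1-\prod a_{j-1,j}$ with $\prod(1+|b|)\le\exp(\sum|b|)$, written as a right-to-left Gronwall recursion instead of a left-to-right expansion, and the second bound uses the identical decomposition into the $\sum_j B_{0,j-1}\delta\tau_{j-1,j}$ term and the midpoint correction $\sum_j\frac{1}{2}(B_{0,j}A_j+B_{0,j-1}A_{j-1}-B_{0,N}A_N-A_0)(y_j-y_{j-1})$, controlled by the same bounds $|B_{0,j-1}|\le C_A|A_0|\exp(\cdot)$ and $|B_{0,k}A_k-A_0|=|A_0X_k|$ from part 1. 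The only (cosmetic) difference is that you invoke the part-1 bound uniformly in $k$ for the midpoint correction, where the paper re-telescopes the disjoint index ranges $1,\dots,n-1$ and $n+1,\dots,N$ to absorb both differences into a single sum; your version needs the same observation to recover the stated constant, but this is bookkeeping, not a gap.
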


\begin{proof}
{\bf the $B$-Product}
We use the notation
\begin{align}
a_{j-1,j}:=&A_{j-1}^{-1}B_{j-1,j}A_j \quad ,\nonumber\\
b_{j-1,j}:=&1-A_{j-1}^{-1}B_{j-1,j}A_j=1-a_{j-1,j} \quad .
\end{align}
Due to Theorem \ref{Theoremjumping} and we have for every $j=1...N$
\begin{align}
|b_{j-1,j}|\le&\|b_{j-1,j}\|<\hat{b}_{j-1,j}\frac{c^{A}_J}{\lambda}\quad .
\end{align}
Using this upper bound for $|b_{j-1,j}|$ we derive an upper bound for general products of $a_{j-1,j}$
\begin{align}\label{aproduct}
\left|\prod_{j=k_1+1}^{k_2}a_{j-1,j} \right|=&\left|\prod_{j=k_1+1}^{k_2}(1-b_{j-1,j})\right|\le \prod_{j=k_1+1}^{k_2}\left(1+\left|b_{j-1,j}\right|\right)\nonumber\\
\le&\prod_{j=k_1+1}^{k_2}\exp\left(\left|b_{j-1,j}\right|\right)\le \exp\left(\sum_{j=k_1+1}^{k_2}\left|b_{j-1,j}\right|\right)\quad .
\end{align}
Furthermore, we get
\begin{equation}
\prod_{j=k_1+1}^{k_2}a_{j-1,j}=\prod_{k_1+1}^{k_2} \left(A_{j-1}^{-1}B_{j-1,j}A_j\right)=A_{k_1}^{-1} B_{k_1,k_2}A_{k_2}\quad.
\end{equation}
We derive a bound on $1-\prod a_{j-1,j}$
\begin{align}\label{Glaschranke}
\left|1-\prod_{j=1}^Na_{j-1,j}\right|\le&\left|1-a_{0,1}+\sum_{k=2}^N\prod_{j=1}^{k-1}a_{j-1,j}(1-a_{k-1,k})\right|\nonumber\\
\le&\left|1-a_{0,1}\right|+\sum_{k=2}^N\left|1-a_{k-1,k}\right| \left|\prod_{j=1}^{k-1}a_{j-1,j}\right|    \nonumber\\
\le&\left|b_1\right|+\sum_{k=2}^N\left|(b_{k-1,k})\right|\exp\left(\sum_{j=1}^{k-1}\left|b_{j-1,j}\right|\right)\nonumber\\
\le&\sum_{k=1}^N\left|(b_{k-1,k})\right| \exp\left(\sum_{j=1}^{N}\left|b_{j-1,j}\right|\right)\nonumber\\
\leq&\sum_{j=1}^N\frac{C^{A}_J}{\lambda}\hat{b}_{j-1,j} \exp\left(\frac{C^{A}_J}{\lambda}\sum_{k=1}^{N}\hat{b}_{k-1,k}\right)\quad .
\end{align}

{\bf The $\tau$ product }
We denote
\begin{align}
\delta \tau:=& \sum_{j=1}^N   B_{0,j-1} t_{j-1,j} +   B_{0,N}\tau_N -\tau_0+\frac{B_{0,N}A_N+A_0}{2}\left(y_N-y_0\right)\quad ,\nonumber\\
\delta \tau_j:=& t_{j-1,j}+ B_{j}\tau_{j} -\tau_{j-1}+\frac{B_{j-1,j}A_{j}+A_{j-1}}{2}\left(y_{j+1}-y_j\right)\quad ,\nonumber\\
\end{align}
Since $x_j$ is $(2^{-3-2d}, \epsilon_J, C_A)$-regular with $\aff_j$ 
we can use Theorem \ref{Theoremjumping} and get for  every $j=1...N$
\begin{align}\label{aiestimate}
\|1-a_{j-1,j}\|<& \frac{c^{A}_J}{\lambda}\hat{b}_{j-1,j} \quad ,  \nonumber\\
\left|\delta \tau_j\right|<& c^{\tau}_J  \|A_{j-1}\| \hat{b}_{j-1,j} \quad.
\end{align}
Hence, we have bounds for $\delta \tau_j$ and want a bound for $\delta \tau$
\begin{align}\label{GLdeltatauchain1}
\left|\delta \tau\right|=&\left|\sum_{j=1}^{N}B_{0,j-1} t_{j-1,j} + B_{0,j}\tau_{j} -B_{0,j-1}\tau_{j-1}+
\frac{B_{0,N}A_N+A_0}{2}\left(y_{j}-y_{j-1}\right)\right|\nonumber\\
=&\left|\sum_{j=1}^{N} B_{0,j-1}\delta \tau_j+\frac{1}{2}\left(B_{0,N}A_N-B_{0,j}A_{j}+A_0-B_{0,j-1} A_{j-1}     \right)\left(y_{j}-y_{j-1}\right)\right|\nonumber\\
\le&c^{\tau}_J \sum_{j=1}^{N} \left|B_{0,j-1}\right| \|A_{j-1}\| \hat{b}_{j-1,j}\nonumber\\
&+\frac{1}{2}\sum_{j=1}^{N}\left(\left|B_{0,N}A_N-B_{0,j}A_j\right|+\left|A_0-B_{0,j-1} A_{j-1}\right|\right)    \left|y_{j}-y_{j-1}\right|
\quad.
\end{align}
Due to the inequality \eqref{aproduct} we can estimate
\begin{align}\label{Biestimate}
\left|B_{0,j-1}\right|\le& \left|A_0\right|\left|A_0^{-1}B_{0,j-1}A_{j-1}\right|\left|A_{j-1}^{-1}\right|\le C_A\left|A_0\right|\exp \left(\frac{c^{A}_J}{\lambda}\sum_{k=1}^{N}\hat{b}_{k-1,k}\right)\quad.
\end{align}
We  calculate
\begin{align}\label{difference1}
\left|A_0-B_{0,n-1} A_{n-1}\right| \le &\sum_{j=1}^{n-1}\left|B_{0,j}A_j-B_{0,j-1}A_{j-1} \right| \nonumber\\
 \le &\sum_{j=1}^{n-1}\left|B_{0,j-1}A_{j-1}\right| \left|A_{j-1}^{-1}B_{j-1,j}A_j-id\right| \nonumber\\
\le & \sum_{j=1}^{n-1}\left|A_0\right|\left|A_0^{-1}B_{0,j-1}A_{j-1}\right| \left|A_{j-1}^{-1}B_{j-1,j}A_j-id\right|\nonumber\\
\le & \sum_{j=1}^{n-1}C_{|A|}\left|\prod_{k=1}^{j-1}a_{k-1,k}\right| \left|a_{j-1,j}-id\right|\nonumber\\
\le &C_{|A|}\sum_{j=1}^{n-1}\frac{c^{A}_J}{\lambda}\hat{b}_{j,j-1} \exp \left( \frac{c^{A}_J}{\lambda}\sum_{k=1}^N\hat{b}_{k-1,k}\right)\quad .
\end{align}
 We estimate  $\left|B_{0,N}A_N-B_{0,n}A_n\right|$ in the same way and obtain
\begin{align}\label{difference2}
\left|B_{0,N}A_N-B_{0,n}A_n\right| \le &C_{|A|}\sum_{j=n+1}^{N}\frac{c^{A}_J}{\lambda}\hat{b}_{j-1,j} \exp \left(\frac{c^{A}_J}{\lambda}\sum_{k=1}^N\hat{b}_{k-1,k}\right) \quad. 
\end{align}
A combination of the estimates \eqref{difference1} and \eqref{difference2} leads to
\begin{equation}\label{deltaAestimate}
\left|A_0-B_{0,n-1} A_{n-1}\right|+\left|B_{0,N}A_N-B_{0,n}A_n\right|\le C_{|A|}\sum_{j=1}^{N}\frac{c^{A}_J}{\lambda}\hat{b}_{j-1,j} \exp \left(\frac{c^{A}_J}{\lambda}\sum_{k=1}^N\hat{b}_{k-1,k}\right) \quad.
\end{equation}
Using the estimates \eqref{aiestimate}, \eqref{Biestimate} and \eqref{deltaAestimate} results in
\begin{align}
\left|\delta \tau\right|\le&\sum_{j=1}^N C_{|A|}c^{\tau}_JC_A\left|A_0\right|\hat{b}_{j-1,j}\exp \left(\frac{c^{A}_J}{\lambda} \sum_{k=1}^{N}\hat{b}_{k-1,k}\right) \nonumber\\
&+ C_{|A|}\sum_{j=1}^{N}\frac{C_J^A}{\lambda}\hat{b}_{j-1,j} \exp \left(\frac{c^{A}_J}{\lambda} \sum_{k=1}^N\hat{b}_{k-1,k}\right)\sum_{j=1}^N \left|y_{j+1}-y_j\right|\nonumber\\
\le&\left(C_AC_{|A|}c^{\tau}_J +\frac{C_J^A}{\lambda}\sum_{j=1}^N \left|y_{j+1}-y_j\right|\right)
C_{|A|}\sum_{j=1}^{N}\hat{b}_{j,j-1} \exp \left(\frac{C_A}{\lambda} \sum_{k=1}^N\hat{b}_{k-1,k}\right)\quad.\nonumber
\end{align}
\end{proof}

\paragraph{Acknowledgements}
We thank the DFG (Deutsche Forschungsgemeinschaft) and the HIM  (Hausdorff Research Institute for Mathematics) for supporting this project.



\bibliographystyle{spmpsci}      
\bibliography{LITERATUR}   

%
%

\end{document}